\theoremstyle{plain}% Theorem-like structures provided by amsthm.sty
\newtheorem{theorem}{Theorem}[section]
\newtheorem{lemma}[theorem]{Lemma}
\theoremstyle{definition}
\newtheorem{definition}[theorem]{Definition}
\theoremstyle{remark}
\newtheorem{remark}{Remark}
\begin{document}

\articletype{}% Specify the article type or omit as appropriate

\title{Simultaneous Estimation of Piecewise Constant Coefficients in Elliptic PDEs via Bayesian Level-Set Methods }

\author{
\name{Anuj Abhishek\textsuperscript{a}\thanks{CONTACT Taufiquar Khan. Email: taufiquar.khan@charlotte.edu} and Thilo Strauss\textsuperscript{b} and Taufiquar Khan\textsuperscript{c}}
\affil{\textsuperscript{a}Department of Mathematics, Applied Mathematics and Statistics, Case Western Reserve University, OH, U.S.A.; \textsuperscript{b}Xi'an Jiaotong-Liverpool University, School of AI and advanced computing,  Jiangshu Province, China;\textsuperscript{c}Department of Mathematics and Statistics and Center for TAIMing AI, University of North Carolina at Charlotte, NC, U.S.A.}
}

\maketitle

\begin{abstract}
In this article, we propose a non-parametric Bayesian level-set method for simultaneous reconstruction of two different piecewise constant coefficients in an elliptic partial differential equation. We show that the Bayesian formulation of the corresponding inverse problem is well-posed and that the posterior measure as a solution to the inverse problem satisfies a Lipschitz estimate with respect to the measured data in terms of Hellinger distance. We reduce the problem to a shape-reconstruction problem and use level-set priors for the parameters of interest. We demonstrate the efficacy of the proposed method using numerical simulations by performing reconstructions of the original phantom using two reconstruction methods. Posing the inverse problem in a Bayesian paradigm allows us to do statistical inference for the parameters of interest, whereby we are able to quantify the uncertainty in the reconstructions for both methods. This illustrates a key advantage of Bayesian methods over traditional algorithms.
\end{abstract}

\begin{keywords}
 Statistical Inverse Problem, Coefficient Inverse Problem, Bayesian level-set Reconstruction
\end{keywords}
\begin{amscode}
    62G05,  35J15, 62F15
\end{amscode}

\section{Introduction}
{In this article, we formulate and study a two-parameter coefficient identification problem for elliptic PDEs in a Bayesian framework. The motivation for studying this problem comes from an inverse problem for Diffuse Optical Tomography (DOT).} In the inverse problem for DOT, one seeks to recover the spatially varying diffusion and absorption parameters of the medium from measured photon density on the boundary. %The real power of DOT as a medical imaging modality comes from the fact that the ability to reconstruct two optical parameters of the medium provides us with both, an anatomical (location of a possible tumor) as well as a functional information (whether such a tumor is malicious or benign) of the body being imaged. 
For details on the use of DOT in the field of medical imaging, we refer the reader to \cite{arr_99,arr_scho,boas_01} and the references therein. The problem we describe in the paragraphs below is related to the inverse problem for DOT.
\par For the benefit of the readers, we first begin with a detailed self-contained description of the inverse problem at hand. We will also review the main ideas from \cite{stuart_10} which will illustrate the non-parametric Bayesian approach to inverse problems as used in this article.
\subsection{A joint coefficient-identification problem}

\par We consider the following joint coefficient-identification problem. We want to determine simultaneously the unknown coefficients $a(x)$ and $b(x)$ in the following elliptic PDE:
\begin{align}\label{eq:1.1}
  -\nabla \cdot \left( a(x) \nabla u (x)\right) + b(x) u(x) = 0 \quad \mbox{ in } \Omega . 
\end{align}
 from prescribed Neumann data i.e., $-a\frac{\partial u}{\partial \nu}|_{\partial \Omega}$, and Robin data i.e., $(u(x)+2 a\frac{\partial u}{\partial \nu})|_{\partial \Omega}$, pairs on the boundary $\partial \Omega$. In a seminal paper \cite{Harr09}, Harrach showed that if $a(x)$ is piecewise constant and $b(x)$ is piecewise analytic, then unique recovery of both the coefficients is possible from Dirichlet and Neumann boundary data. We would further like to draw the attention of the reader to the fact that in \cite{Harr09}, the problem of determination of $a(x)$ and $b(x)$ from such Dirichlet and Neumann boundary data pairs is studied in the context of the inverse problem arising in DOT, where the coefficients $a(x)$ and $b(x)$ are interpreted as diffusion and absorption parameters respectively and we will continue to use this terminology for the parameters of interest throughout this article. In light of this result, in this work, \textit{we propose a Bayesian level-set method for the simultaneous recovery of such piecewise-constant parameters of the PDE from noisy boundary data. Furthermore, we show a well-posedness result for this inverse problem in a Bayesian setting.} To the best of our knowledge, a Bayesian approach to this inverse problem which is inspired by DOT, has not been done before.

\subsection{Non-Parametric Bayesian Approach}

\par Now we will give a review of the non-parametric Bayesian approach to inverse problems that we have used here; and which was formulated by Stuart in \cite{stuart_10} and applied for a number of inverse problems such as in  \cite{dashti10,dashti_13,dashti_11,stuart_16,stuart16b,stuart16a,Nickl20}. To illustrate the main ideas, consider the following abstract set-up. Let $X$ and $Y$ be two separable Banach spaces with their associated Borel $\sigma-$ algebras and consider a measurable map $G:X\to Y$. Suppose we want to reconstruct the parameter $v\in X$ given the measured (nosiy) data $y\in Y$ where:
\begin{align}\label{eq:1.3}
  y=G(v)+\eta  
\end{align} 
and $\eta\in Y$ represents noise. In the Bayesian approach, we model $(v,y)\in X \times Y$ as a random variable. The solution to the inverse problem is then the conditional distribution of the random variable, $v|y$, which is also referred to as the posterior distribution. If the space $X$ is finite-dimensional (say $\mathbb{R}^n$), we can define the solution equivalently as a posterior (Lebesgue) density. To get this posterior density, one first defines a prior density $\rho_0(v)$ on the space of parameters. The prior captures our subjective belief about the parameter before observing any data, and can also be thought of as a regularization tool, see e.g. \cite{dashti17}. Subsequently, we use some form of the Bayes' theorem to update our (posterior) belief about the parameter after we observe the data. Assuming that the noise has a density $\rho$, we can write the following version of Bayes' theorem in terms of densities,
\begin{align*}
    \underbrace{\rho(v|y)}_{posterior}= \frac{1}{Z_y} \underbrace{\rho_0(v)}_{prior}\times \underbrace{\rho(y-G(v))}_{likelihood} \text{ where } Z_y=\int \rho_0(v)\rho(y-G(v))dv>0.
\end{align*}
Here $\rho(v|y)$ is the posterior density and the term $Z_y$ is a normalization term. These densities are Radon-Nikodym derivatives of the respective measures (prior or posterior) with respect to the Lebesgue measure on $\mathbb{R}^n$. Now, consider the case, when $X$ is an infinite-dimensional space, e.g. a Hilbert space of functions. There is no analogue of a translation invariant Lebesgue measure on such infinite-dimensional function spaces. This means that the Bayes' theorem needs to be formulated in a (Lebesgue) density-free manner. Next we will briefly describe the ideas that lead to a measure-theoretic formulation of the Bayes' theorem; these ideas have been presented in great detail in \cite{ dashti17, stuart_10}. We begin by placing a prior probability measure on the space of parameters, i.e. we assume $v\sim \mu_0$ where $\mu_0$ is some probability measure on the space $X$. Let the noise be independent of $v$ and be distributed according to some measure, $\eta \sim \mathbb{Q}_{0}$. Assuming that the data is given according to the additive noise model \eqref{eq:1.3}, we can say that the random variable $y|v$ is distributed according to the measure $\mathbb{Q}_v$, which is a translate of $\mathbb{Q}_0$. For instance, if $\mathbb{Q}_0$ is a Gaussian measure with mean $0$, then the measure $\mathbb{Q}_v$ is a translate of $\mathbb{Q}_0$ with mean $G(v)$. Furthermore, if $\mathbb{Q}_v$ is absolutely continuous with respect to the measure $\mathbb{Q}_0$, i.e. $\mathbb{Q}_{v}\ll\mathbb{Q}_{0}$, then there exists a positive Radon-Nikdoym density given by, $\frac{d \mathbb{Q}_v}{d\mathbb{Q}_0} (y)=\exp{(-\Phi(v,y))} $, where,  $\Phi:X\times Y\to \mathbb{R}$ is the `log-likelihood' function, which is also sometimes referred to as a `potential.' Now we consider the following two product measures, $\nu_0=\mu_0\times \mathbb{Q}_0$ and $\nu=\mu_0\times \mathbb{Q}_v$ on the product space $X\times Y$. Then, we have the following analogue of the Bayes' theorem on infinite-dimensional spaces:
\begin{theorem}{\cite[Theorem 14.]{dashti17}} \label{th:1.1}
 Let $\Phi:X\times Y\to \mathbb{R}$ be $\nu_{0}$ measurable and let $Z_y$ defined as $\int_{X}\exp(-\Phi(v,y)) d\mu_{0}:=Z_y>0$ for $\mathbb{Q}_0$ a.s. $y$, then the conditional distribution of $v|y$ denoted by $\mu^y$ exists under $\nu$. Furthermore, $\mu^y\ll \mu_0$ and \begin{align}
\frac{d\mu^y}{d\mu_0}(u)=\frac{1}{Z_y}\exp(-\Phi(v,y)).
\end{align}
\end{theorem}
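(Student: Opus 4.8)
The plan is to realize the posterior $\mu^y$ as the regular conditional distribution of $v$ given $y$ under the joint measure $\nu$, and to compute it by comparing $\nu$ against the reference product measure $\nu_0 = \mu_0 \times \mathbb{Q}_0$, under which $v$ and $y$ are independent. The first step is to record the key relationship $\nu \ll \nu_0$ with
\[
\frac{d\nu}{d\nu_0}(v,y) = \exp(-\Phi(v,y)).
\]
This follows immediately from the construction $\nu(dv,dy) = \mathbb{Q}_v(dy)\,\mu_0(dv)$ together with the assumed density $\frac{d\mathbb{Q}_v}{d\mathbb{Q}_0}(y) = \exp(-\Phi(v,y))$, since the $X$-marginals of $\nu$ and $\nu_0$ both equal $\mu_0$ while the conditional $Y$-laws differ by exactly this factor. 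The $\nu_0$-measurability of $\Phi$ guarantees that $\exp(-\Phi)$ is a genuine, jointly measurable density.

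Second, I would verify the integrability needed to make the statement meaningful. By hypothesis $Z_y > 0$ for $\mathbb{Q}_0$-a.s. $y$; finiteness comes from Fubini--Tonelli, since $\int_Y Z_y\, \mathbb{Q}_0(dy) = \int_{X\times Y}\exp(-\Phi)\,d\nu_0 = \nu(X\times Y) = 1$, so that $Z_y < \infty$ for $\mathbb{Q}_0$-a.s. $y$ as well. This shows that $y \mapsto Z_y$ is measurable and is a $\mathbb{Q}_0$-a.e. finite, strictly positive normalization constant, so the right-hand side of the claimed formula is well-defined.

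The third and central step is the disintegration. Because $X$ and $Y$ are separable Banach spaces equipped with their Borel $\sigma$-algebras, they are Borel-isomorphic to Polish spaces, so regular conditional distributions exist; this is precisely where separability enters. Under the product measure $\nu_0$ the conditioning is trivial: independence of $v$ and $y$ forces the regular conditional law of $v$ given $y$ to be $\mu_0$ itself, independent of $y$. I would then invoke the general lemma on conditioning under an absolutely continuous change of measure: if $\nu \ll \nu_0$ with density $\varphi(v,y) = \exp(-\Phi(v,y))$ and the $\nu_0$-conditional of $v$ given $y$ is $\mu_0$, then the $\nu$-conditional $\mu^y$ exists, satisfies $\mu^y \ll \mu_0$, and
\[
\frac{d\mu^y}{d\mu_0}(v) = \frac{\varphi(v,y)}{\int_X \varphi(v,y)\,\mu_0(dv)} = \frac{1}{Z_y}\exp(-\Phi(v,y))
\]
for $\mathbb{Q}_0$-a.s. $y$. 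Substituting the density and the normalization $Z_y$ from the first two steps yields the claimed formula.

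The main obstacle is the change-of-measure lemma in the third step: one must justify the fiber-wise manipulation of Radon--Nikodym derivatives, i.e. that the candidate family $\{\mu^y\}$ defined above genuinely is a regular conditional distribution for $\nu$. This reduces to checking, for all bounded measurable $f$ on $X$ and $g$ on $Y$, that $\int_{X\times Y} f(v)g(y)\,\nu(dv,dy) = \int_Y g(y)\big(\int_X f(v)\,\mu^y(dv)\big)\mathbb{Q}_0(dy)$, which follows from $\frac{d\nu}{d\nu_0} = \varphi$, the independence structure of $\nu_0$, and Fubini's theorem. The measurability of $y \mapsto \mu^y$ and the only-a.e.-defined nature of the normalization require the care noted above, but present no essential difficulty once the Polish-space framework is in place.
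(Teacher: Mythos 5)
The paper offers no proof of this statement: it is quoted verbatim from the cited reference \cite[Theorem 14]{dashti17}, so there is no in-paper argument to compare against. Your proposal reconstructs the standard proof from that source essentially correctly: establish $\nu\ll\nu_0$ with density $\exp(-\Phi(v,y))$, use Fubini--Tonelli to get $\int_Y Z_y\,\mathbb{Q}_0(dy)=\nu(X\times Y)=1$ (hence $Z_y<\infty$ for $\mathbb{Q}_0$-a.s.\ $y$) alongside the hypothesis $Z_y>0$, and then disintegrate, using that under the product measure $\nu_0$ the conditional law of $v$ given $y$ is $\mu_0$ and that separability of $X$ and $Y$ supplies regular conditional distributions. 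The one imprecision is in your closing verification: the defining property of the conditional under $\nu$ integrates against the $Y$-marginal of $\nu$, which is $Z_y\,\mathbb{Q}_0(dy)$, not $\mathbb{Q}_0(dy)$; as written your identity is off by the factor $Z_y$. With that correction the check
\begin{align*}
\int_{X\times Y} f(v)g(y)\,\nu(dv,dy)
=\int_Y g(y)\left(\frac{1}{Z_y}\int_X f(v)\exp(-\Phi(v,y))\,\mu_0(dv)\right) Z_y\,\mathbb{Q}_0(dy)
\end{align*}
follows directly from $\frac{d\nu}{d\nu_0}=\exp(-\Phi)$ and Fubini, and confirms that your candidate family $\{\mu^y\}$ is indeed the regular conditional distribution. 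This is exactly the change-of-measure-plus-disintegration route of the cited reference, so no genuinely different approach is involved.
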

Following the Bayesian philosophy, in which the solution to an inverse problem is to be thought of as a posterior measure, Theorem \ref{th:1.1} should then be interpreted as a statement about the existence of a solution to an inverse problem. In addition, provided that the potential, $\Phi(v,y)$, meets certain regularity conditions, we can show that the Bayesian inverse problem is, in fact, well-posed, see \cite[Section 4]{dashti17} and Theorem \ref{thm:5.2} below. Given that inverse problems are ill-posed and sometimes, severely so, the well-posedness in the Bayesian formulation of inverse problems may seem surprising. However, if we note that the prior introduces a form of regularization in the problem, this well-posedness in the Bayesian setting can be more easily understood. {We would like to note here, that while classical regularization methods are ultimately faster than Bayesian methods when applied to inverse problems in which the parameter of interest lies in a very high dimensional space, \textit{one major advantage of the Bayesian paradigm is that it gives us a way to quantify uncertainty in the predicted value of the parameter, see our numerical results below.}} In this article, we follow the approach outlined above to show that in a Bayesian framework; the simultaneous reconstruction of both the absorption and diffusion parameters can be done in a way that is robust to perturbations in data, see Theorem \ref{thm:5.2}. Subsequently, we also provide numerical reconstructions in support of our theoretical results.

\subsection{Related Research}

\par We will now give a partial survey of some important results for coefficient inverse problems (CIP) in general, and statistical CIP in particular. The DOT inspired CIP as described in this article, is modeled by an elliptic differential equation which has two coefficients.  A related CIP is that of coefficient (conductivity of the medium) determination in Calder\'{o}n problem from measured boundary data. Mathematically, this corresponds to the case when $b(x)=0$ in \eqref{eq:1.1}. The Calder\'{o}n problem is used to model the imaging modality of electrical impedance tomography (EIT) and has been studied by many authors since it was first described in \cite{cald_80}. Unique recovery of conductivity in the Calder\'{o}n problem from boundary Dirichlet and Neumann data was established in \cite{nach,SU} and  stability estimates for the reconstruction were given in \cite{ales,NS10}. For a survey of results on the Calder\'{o}n problem, we point the reader to \cite{Salo_lec,U_sur} and many references therein. In a foundational 1981 work, Bukhgeim and Klibanov introduced the use of Carleman estimates to prove global uniqueness theorems for multidimensional coefficient inverse problems (MCIPs) with non-overdetermined data, \cite{BK_81}. It has also inspired effective globally convergent numerical methods (GCMs), including the convexification methods, that were developed to ensure stable recovery of coefficients without prior knowledge of the solution neighborhood and have been used in important physical examples, e.g. \cite{KT_03,K_19}. \cite{arr_99,arr_scho} are excellent surveys for the DOT problem. Authors in \cite{arr_99,arr_scho,khan_05,natt_wub} show that the DOT inverse problem is exponentially ill-posed and, in general, reconstruction procedures in DOT are very unstable. However, in \cite{Harr09}, the author showed that simultaneous recovery of both absorption and diffusion is, in principle, possible if we consider the case when $a(x)$ is piecewise constant and $b(x)$ is piecewise analytic. A simplified one-parameter DOT model was studied in a deterministic setting  in \cite{Gaburro16} and in a stochastic setting in \cite{Abhi_22b}. {In this article, we extend our study to simultaneous determination of both the absorption and diffusion-like coefficients, which to our knowledge, has not been done in a Bayesian setting before.} The field of statistical inverse problems, though relatively new, has become increasingly important in the past couple of decades due to computational advances that have made statistical algorithms for reconstruction feasible. In the EIT setting, statistical inversion procedures for recovery of conductivity was given in \cite{hank_11,jin_12,kaipio_00, silt_13}. In \cite{khan_strauss}, a statistical inversion procedure for EIT using mixed total variation and non-convex regularization prior was given. The aforementioned statistical approaches to inverse problems fall under the umbrella of what are called parametric Bayesian methods. The non-parametric Bayesian point of view for inverse problems taken in the present article was first proposed by Stuart in \cite{stuart_10}. This Bayesian perspective for coefficient and source determination inverse problems arising from PDEs was then taken forward in a series of articles \cite{NiAb19,dashti10,dashti_13,dashti_11,Nickl_matt,newton_20,nickl_von1,Nickl_sohl}. These methods are distinguished by their use of Gaussian process priors and in \cite{Sai_22}, the authors provide efficient algorithms for Bayesian inverse problems with such priors. Furthermore, applicability of Bayesian level-set procedures for determination of piecewise smooth or piecewise constant parameters from boundary, or, internal measurements was given in \cite{stuart_16,stuart16a,stuart16b}. We would like to note here the important contribution of \cite{osher05,dorn06,dorn00,sant96} wherein level-set methods were deployed for certain inverse problems in a deterministic setting. As a survey of the above results indicate, most of the efforts so far in statistical CIP have been focused on the problem of EIT in which recovery of just one unknown parameter is sought. For the CIP that arise in inverse problems such as DOT, we are interested in simultaneous recovery of two unknown parameters. Bayesian formulation of inverse problems can be thought of as being in the same spirit as the formidable and highly efficient globally convergent methods such as the ones using convexification approach, \cite{KT_03, K_19}. This is because the regularization imposed by the prior in Bayesian formulation and the subsequent use of MCMC to explore the posterior has the effect of preventing the algorithms from being stuck in local minima. In order to show the applicability of the Bayesian methods to this particular inverse problem, we first theoretically prove a well-posedness result for the Bayesian inverse problem using some newly derived regularity result for the likelihood. After establishing the well-posedness result, we outline the reconstruction method and provide extensive numerical simulations to show the efficacy of the proposed method in not just giving us a reconstruction but also a way to quantify uncertainty in the predictions.

%\subsection{Significance of this Work} {\color{red} We need to strengthen this.} A level-set Bayesian inversion algorithm for reconstruction of parameters in various inverse problem settings such as hydraulic tomography, photoacoustic tomography and X-ray tomography was given in \cite{Sai_21}.  We note that, most of the results in the statistical framework are known for the setting of EIT inverse problem. While DOT is closely related to EIT, inversion algorithms for DOT are inherently  more unstable due to the fact that it requires reconstruction of two different parameters instead of just one, as is the case in EIT. Theoretically, too,  this requires us to adapt some of the known results for Calder\'{o}n problem (in both noiseless and noisy settings) to the set-up for the DOT problem, as considered in this work.

\subsection{Organization}
\par The organization of this article is as follows. In section \ref{sec2} we present the mathematical formulation of the inverse problem and derive some preliminary results in section \ref{sec3} that are used later in the article. In section \ref{prior}, we illustrate how the level-set prior is modeled and prove a well-posedness result for the Bayesian inverse problem in section \ref{sec5}. We present the numerical simulation results to demonstrate the efficacy of our proposed algorithms in Section 6. In Section 7, we summarize the work done in this article.
\section{Mathematical formulation of the problem} \label{sec2}
We consider the following semi-discrete mathematical formulation of the problem, similar to the semi-discrete complete electrode model introduced for EIT \cite{SoChIs_92}. Let $\Omega\in \mathbb{R}^2$ be the region to be imaged with its boundary denoted by $\partial \Omega$. Assume that the measurements can only be made on $L$ discrete boundary elements, $\{O_l \}_{l=1}^L$, placed along $\partial \Omega$. We assume that the parameters $a(x)$, $b(x)$ are real valued, positive, bounded functions in $L^{\infty}(\bar{\Omega})$. In analogy with the complete electrode model in EIT \cite [section 3]{SoChIs_92}, we consider the following semi-discrete formulation of the proposed model (see also, \cite [eqns. (7.27-28)]{natt_wub}):
\begin{align} \label{eq:1}
-\nabla\cdot (a(x)(\nabla u(x)))+b(x) u(x) &=0, \quad x\in \Omega\\
u(x)+2 a\frac{\partial u}{\partial \nu} &= U_l, \quad x\in O_l, \quad l\in \{1,\dots, L\}\label{eq:2.2}\\
a\frac{\partial u}{\partial \nu}&=0 \quad \text{on } \partial \Omega\setminus \bigcup_{l=1}^{L}O_l\label{eq:2.3}.
\end{align}
We also assume that the average value of the prescribed Neumann data on each boundary element denoted by $\bar{F}_{l}=\frac{1}{\lvert O_l \rvert}\int_{O_l}- a\frac{\partial u}{\partial \nu} dS$ is known. Here $dS$ is the surface measure on $\partial \Omega$ and $\lvert O_l\rvert$ is the surface measure of the $l^{th}$ boundary element $O_l$. Since $\lvert O_l\rvert$ is known, hence we assume that the flux, $F_l:=\bar{F}_l\times \lvert O_l \rvert$ satisfies
\begin{align} \label{eq:2}
\int _{O_l} -a\dfrac{\partial u}{\partial \nu}dS = F_l
\end{align}
 In what follows, we will suppress the explicit dependence of the functions $a(x)$ and $b(x)$ on $x$ whenever it can be easily understood form the context.
Formally, for each $j\in \{1,\dots,J\}$, let us represent the true discrete Robin-data, $U^{(j)}$, corresponding to the applied flux $F^{(j)}$ on the boundary elements by:
\begin{align}
U^{(j)}=\mathcal{G}_j (a,b)
\end{align}
Consider the case, when the measured Robin data is corrupted by a Gaussian noise. Let $y_j$ denote these noisy measurements:
\begin{equation}\label{eq:2.6}
y_j=\mathcal{G}_{j}(a,b)+\eta _j; \quad j\in \{1,\dots,J\}\quad \text{and }\eta_j\sim N(0,\Gamma_0) \ \text{i.i.d.}
\end{equation}
Here $N(0,\Gamma_0)$ is used to denote a Gaussian random variable with mean ${0}$ and variance $\Gamma_0$. Concatenating all the vectors $y_j\in \mathbb{R}^{L}$ we can write:
\begin{align}\label{eq:2.7}
y=\mathcal{G}(a,b)+\eta
\end{align}
where $y\in \mathbb{R}^{LJ}$ and $\eta \sim N(\textbf{0},\Gamma)$ where $\rm{\Gamma}=diag(\Gamma_0,\dots,\Gamma_0)$. The statistical inverse problem can now be formulated as recovery of the parameters, $a$ and $b$, from observed (noisy) data $y$. We would like to remind the reader about the severe ill-posedness of the problem as stated. On the one hand, we have finite-dimensional (noisy) data from the measurements, and yet, we seek to reconstruct two functions that, in principle, live in some infinite-dimensional function space. The sensitivity of the reconstructions to the priors, can thus, be well-understood and we need to choose a prior measure that is well-suited for a given experimental set-up. In the present case, when we consider the functions $a(x)$ and $b(x)$ to be piecewise constant functions; geometric priors, such as the level-set priors, seem to be a natural choice. The modeling of such priors will be discussed in section \ref{prior}; for now we derive some preliminary results in section \ref{sec3}. These preliminary results are needed in order to extend the theory proposed for Bayesian level-set method for EIT \cite{stuart_16} to the simultaneous estimation of two-parameters as in the inverse problem for DOT.
\section{Preliminary Results}\label{sec3}
We consider first a weak formulation of the model given by equations \ref{eq:1}- \ref{eq:2}. For this we introduce the space of solutions $\mathbb{H}=H^1(\Omega)\bigoplus \mathbb{R}^L$. A pair $(v,V)\in \mathbb{H}$ is such that $v\in H^1(\Omega)$ and $V\in \mathbb{R}^L$. Furthermore, $\lVert(v,V)\rVert_{\mathbb{H}}=\lVert v\rVert_{H^1}+\lVert V\rVert_{l^2}$. We have the following proposition which is an analogue of \cite [Proposition 3.1]{SoChIs_92}:
\begin{prop}\label{prop:1}
Let $\Omega, a(x), b(x)$ be as above. Assume that $(u,U)=(u(x),(U_l)_{l=1}^L)\in \Hb$ and $u\in H^1(\Omega)$ is a weak solution to:
\begin{align*}
    -\nabla \cdot (a\nabla u)+b u=0
\end{align*}
subject to the boundary conditions:
\begin{align*}
   u(x)+2 a(x)\frac{\partial u}{\partial \nu} &= U_l, \quad x\in O_l, \quad l\in \{1,\dots, L\}\\
a\frac{\partial u}{\partial \nu}&=0 \quad \text{on } \partial \Omega\setminus \bigcup_{l=1}^{L}O_l. 
\end{align*} We also assume that:
\begin{align*}
    \int _{O_l} -a\dfrac{\partial u}{\partial \nu}dS = F_l
\end{align*}
Consider the bilinear form:
\begin{align}\label{eq:3.1}
    B: \ & \Hb \times \Hb \to \Rb\nonumber \\
    B((u,U),(w,W))=\int_{\Omega}a \nabla u \cdot \nabla w dx+ &\int_{\Omega}b u w dx +\sum_{l=1}^L\frac{1}{2}\int_{O_l}(u-U_l)(w-W_l) dS
\end{align}
Then for any $(v,V)\in \Hb$, we have:
\begin{align}\label{eq:3.2}
B((u,U),(v,V))=-\sum_{l=1}^L F_l V_l.
\end{align}
Conversely, if $(u,U)\in \Hb$ satisfies equation \ref{eq:3.2} for any $(v,V)\in \Hb$, then it also satisfies equations \ref{eq:1}-\ref{eq:2}. 
\end{prop}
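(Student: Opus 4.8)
The plan is to establish the two directions by the standard weak--strong equivalence argument for the complete electrode model, with Green's first identity as the central tool; the only genuine subtlety is interpreting the conormal flux $a\frac{\partial u}{\partial\nu}$ in a trace sense, since here $a\in L^\infty(\bar\Omega)$ and $u\in H^1(\Omega)$ only.

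For the forward implication, I would multiply the PDE by the first component $v$ of an arbitrary test pair $(v,V)\in\mathbb{H}$, integrate over $\Omega$, and apply Green's identity to move one derivative onto $v$, obtaining
\[
\int_\Omega a\nabla u\cdot\nabla v\,dx + \int_\Omega b u v\,dx = \int_{\partial\Omega} a\frac{\partial u}{\partial\nu}\,v\,dS .
\]
The boundary integral splits over $\bigcup_l O_l$ and its complement: the homogeneous Neumann condition \ref{eq:2.3} annihilates the complement, while on each $O_l$ the Robin condition \ref{eq:2.2} lets me substitute $a\frac{\partial u}{\partial\nu}=\tfrac12(U_l-u)$. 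This converts the boundary term into $-\sum_l\tfrac12\int_{O_l}(u-U_l)v\,dS$, so that the volume terms together with this electrode term vanish. Comparing with \eqref{eq:3.1}, I add and subtract the missing $V_l$ contribution: the $v$-piece drops out, leaving $B((u,U),(v,V))=-\sum_l\tfrac12 V_l\int_{O_l}(u-U_l)\,dS$. A final use of the Robin identity $\tfrac12(u-U_l)=-a\frac{\partial u}{\partial\nu}$ together with the flux normalization \ref{eq:2} gives $\tfrac12\int_{O_l}(u-U_l)\,dS=F_l$, which is exactly \eqref{eq:3.2}.

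For the converse, I would recover the three conditions by successively specialized test functions. Taking $V=0$ and $v\in C_c^\infty(\Omega)$ kills every boundary term and the right-hand side, so \eqref{eq:3.2} reduces to $\int_\Omega a\nabla u\cdot\nabla v\,dx+\int_\Omega buv\,dx=0$, i.e.\ the weak form of \ref{eq:1}. Using this PDE to run Green's identity in reverse rewrites $B$ for a general pair as $\int_{\partial\Omega}a\frac{\partial u}{\partial\nu}v\,dS+\sum_l\tfrac12\int_{O_l}(u-U_l)(v-V_l)\,dS$. Choosing next $(v,0)$ with $v\in H^1(\Omega)$ arbitrary and invoking the fundamental lemma of the calculus of variations on the boundary (using density of $H^1$-traces) forces $a\frac{\partial u}{\partial\nu}=0$ on $\partial\Omega\setminus\bigcup_l O_l$ and $a\frac{\partial u}{\partial\nu}+\tfrac12(u-U_l)=0$ on each $O_l$, which are precisely \ref{eq:2.3} and \ref{eq:2.2}. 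Finally, choosing $v\equiv0$ with $V$ free isolates $-\sum_l\tfrac12 V_l\int_{O_l}(u-U_l)\,dS=-\sum_l F_l V_l$; since the $V_l$ are arbitrary, each integral equals $F_l$, and the Robin condition converts this into the flux condition \ref{eq:2}.

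The step I expect to be the main obstacle is the rigorous justification of the boundary integrals and of Green's identity. With only $a\in L^\infty(\bar\Omega)$ and $u\in H^1(\Omega)$, the conormal derivative $a\frac{\partial u}{\partial\nu}$ is not a classical trace but must be understood as an element of $H^{-1/2}(\partial\Omega)$ defined through the variational identity itself, so the pointwise ``fundamental lemma'' conclusions on $\partial\Omega$ should be read as equalities in the appropriate duality pairing. Once these boundary terms are interpreted consistently---exactly as in the complete electrode model of \cite{SoChIs_92}---the remaining manipulations are routine applications of Green's identity and the bilinearity of $B$.
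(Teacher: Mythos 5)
Your proof is correct and follows essentially the same route as the paper's: Green's identity combined with substitution of the Robin/Neumann and flux conditions for the forward direction, and successive specialization of test pairs ($v\in C_c^{\infty}(\Omega)$ with $V=0$, then general $v$ with $V=0$, then general $V$) for the converse. The only cosmetic difference is that you add and subtract the $V_l$ contribution and invoke the flux condition at the end, whereas the paper adds a zero built from the flux identity up front; your caveat about reading $a\frac{\partial u}{\partial \nu}$ in the trace/duality sense mirrors the paper's own closing remark.
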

\begin{remark}\label{rm:3.1}
    At times, in order to emphasize the parameters $a(x)$ and $b(x)$, we will use the following notation for the bilinear form: $B((u,U),(v,V)):=B((u,U),(v,V); a,b)$
\end{remark}
\begin{proof}
The proof here is similar to that of \cite [Proposition 3.1]{SoChIs_92}, except for the appearance of an additional term, namely, $\int _{\Omega}b u w dx$ in the bilinear form. We reproduce the arguments here for the sake of completeness. First of all, note that we can express $a\frac{\partial u}{\partial \nu}$ in a more succinct form as:
\begin{align}\label{eq:3.3}
a\frac{\partial u}{\partial \nu}=\sum_{l=1}^L \frac{1}{2}(U_l-u)\chi_l 
\end{align}
where $\chi_l$ is the characteristic function of the boundary element given by $O_l$. For any $v\in H^1(\Omega)$, we have from equation \eqref{eq:1},
\begin{align} \label{eq:3.4}
\int _{\Omega} a \nabla u \cdot \nabla v dx+ \int_{\Omega} b u v dx -\int_{\partial \Omega} a \frac{\partial u}{\partial \nu} v dS =0.
\end{align}
Substituting \eqref{eq:3.3} in \eqref{eq:3.4}, we get:
\begin{align}\label{eq:3.5}
\int _{\Omega} a \nabla u \cdot \nabla v dx+ \int_{\Omega} b u v dx + \sum_{l=1}^L \frac{1}{2}\int_{O_l}(u-U_l) v dS=0
\end{align} 
On the other hand, \begin{align}\label{eq:3.6}
   \int_{O_l}(u-U_l) dS=\int_{O_l}-2 a \frac{\partial u}{\partial \nu} dS &=  2 F_l \nonumber \\
   \implies \sum _{l=1}^L \frac{1}{2} \big( 2 F_l V_l-\int_{O_l}(u-U_l) V_l dS\big)&=0
\end{align}
where $V=(V_l)_{l=1}^L\in \Rb^L$.
Adding \eqref{eq:3.5} and \eqref{eq:3.6}, we get: 
\begin{align}
\int _{\Omega} a \nabla u \cdot \nabla v dx+ \int_{\Omega} b u v dx + \sum_{l=1}^L \frac{1}{2}\int_{O_l}(u-U_l) (v-V_l) dS =-\sum_{l=1}^L F_l V_l.
\end{align}
This establishes the first part of the proposition. Now to show the converse, let us assume that $(u,U)\in \Hb$ satisfies equation \eqref{eq:3.2} for all $(v,V)\in \Hb$. In particular, choose any $v\in C^{\infty}(\Omega)$ supported in $\Omega$ and $V=(V_l)_{l=1}^{L}=\mathbf{0}\in \Rb^L$. Thus, from equation \eqref{eq:3.2} we have:
\begin{align}
\int_{\Omega} a \nabla u \cdot \nabla v dx+\int_{\Omega} b u v dx =0
\end{align}
Since $u\in H^1(\Omega)$, from this weak form we actually conclude that eqn. \eqref{eq:1} holds. Now consider an arbitrary $v\in H^{1}(\Omega)$ and $V=\mathbf{0}$ as before. Since we have already established eqn. \eqref{eq:1} to hold for $x\in \Omega$, thus we have that eqn. \eqref{eq:3.4} holds. On the other hand, by assumption \eqref{eq:3.2} is satisfied. Thus, for any $v\in H^1(\Omega)$ and for $V=\mathbf{0}$ we have:
\begin{align}\label{eq:3.9}
\int _{\Omega} a \nabla u \cdot \nabla v dx+ \int_{\Omega} b u v dx + \sum_{l=1}^L \frac{1}{2}\int_{\partial \Omega}(u-U_l) \chi_l(v) dS =0.
\end{align}
Thus on comparing \eqref{eq:3.4} and \eqref{eq:3.9}, we get for any $v\in H^1(\Omega)$:
\begin{align}\label{eq:3.10}
 \sum_{l=1}^L \frac{1}{2}\int_{\partial \Omega}(u-U_l) \chi_l v dS = -\int_{\partial \Omega}   a\frac{\partial u}{\partial \nu} v dS.
\end{align}
It is now easy to see that this implies that \eqref{eq:2.2} and \eqref{eq:2.3} hold as well. Finally, we take an arbitrary $v\in H^{1}(\Omega)$ and $V\in \Rb^L$, then from \eqref{eq:3.9}, \eqref{eq:3.10} and \eqref{eq:3.2} we get:
\begin{align}\label{eq:3.11}
    \sum_{l=1}^L \frac{1}{2}\int_{\partial \Omega}(u-U_l) \chi_l V_l dS&=-\sum_{l=1}^L F_lV_l\nonumber\\
    \implies \int_{\partial \Omega} a\frac{\partial u}{\partial \nu} dS&= F_l, \quad l\in \{1,\dots,L\}.
\end{align}
Thus \eqref{eq:2} holds as well. We also draw the attention of the reader to the fact that the integral on the left hand side of the equation \eqref{eq:3.11} is defined in a classical sense, see also remark after the statement of \cite [Proposition 3.1]{SoChIs_92}.
\end{proof}
Now that we have a weak formulation of the discrete formulation of the CIP considered here, we want to know if this weak formulation admits a unique solution. To that end, we will check that the conditions of the Lax-Miligram lemma hold. Consider the bilinear form introduced in equation \eqref{eq:3.1}. Indeed, the bilinearity is easy to check and we skip it here. We will now show that the bilinear form is bounded and coercive. More particularly, we have the following lemma:
\begin{lemma}
Consider the bilinear form $B((u,U),(v,V))$ introduced in Proposition \ref{prop:1} above. For $(u,U)\in \Hb$ and $(v,V)\in \Hb$, for some $\gamma$ and $\delta$ independent of $(u,U)$ and $(v,V)$, we have:
\begin{align}
\lvert B((u,U),(v,V))\rvert &\leq \gamma \lVert (u,U)\rVert_{\Hb} \lVert (v,V)\rVert_{\Hb}\\
\lvert B((u,U),(u,U))\rvert &\geq \delta \lVert (u,U)\rVert^2_{\Hb}. 
\end{align}
\end{lemma}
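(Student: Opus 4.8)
The plan is to verify the two hypotheses of the Lax--Milgram lemma, namely boundedness (continuity) and coercivity of the bilinear form $B((u,U),(v,V))$ defined in \eqref{eq:3.1}, working under the standing assumptions that $a$ and $b$ are real-valued, positive, and bounded in $L^\infty(\bar\Omega)$. For boundedness I would bound each of the three terms of $B$ separately. The first term $\int_\Omega a\,\nabla u\cdot\nabla v\,dx$ is controlled by $\|a\|_{L^\infty}\|\nabla u\|_{L^2}\|\nabla v\|_{L^2}$ via Cauchy--Schwarz, the second term $\int_\Omega b\,u\,v\,dx$ by $\|b\|_{L^\infty}\|u\|_{L^2}\|v\|_{L^2}$, and the boundary term $\tfrac12\sum_l\int_{O_l}(u-U_l)(v-V_l)\,dS$ by splitting $(u-U_l)(v-V_l)$ and applying the trace theorem (to pass from $\|u\|_{L^2(O_l)}$ to $\|u\|_{H^1(\Omega)}$) together with Cauchy--Schwarz on the finite sum over $l$ to bring in $\|U\|_{l^2}$ and $\|V\|_{l^2}$. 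Collecting these and using $\|u\|_{H^1}\le\|(u,U)\|_{\mathbb H}$, $\|U\|_{l^2}\le\|(u,U)\|_{\mathbb H}$, one reads off a constant $\gamma$ depending only on $\|a\|_{L^\infty}$, $\|b\|_{L^\infty}$, and the trace constant.

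For coercivity I would start from
\begin{align*}
B((u,U),(u,U))=\int_\Omega a\,|\nabla u|^2\,dx+\int_\Omega b\,u^2\,dx+\sum_{l=1}^L\tfrac12\int_{O_l}(u-U_l)^2\,dS,
\end{align*}
which is manifestly nonnegative since all three terms are. Using the positivity lower bounds $a\ge a_{\min}>0$ and $b\ge b_{\min}>0$ on the first two terms gives control of $\|\nabla u\|_{L^2}^2$ and $\|u\|_{L^2}^2$, hence of the full $H^1$-norm $\|u\|_{H^1}^2$. The remaining task is to dominate $\|U\|_{l^2}^2$, and this is where the boundary term is essential: from $\tfrac12\int_{O_l}(u-U_l)^2\,dS$ one extracts $|U_l|^2$ by the elementary inequality $U_l^2\le 2(u-U_l)^2+2u^2$ integrated over $O_l$, so that $|O_l|\,U_l^2\lesssim \int_{O_l}(u-U_l)^2\,dS+\int_{O_l}u^2\,dS$, the last integral again being controlled by $\|u\|_{H^1}^2$ via the trace theorem. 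Summing over the finitely many $l$ and combining with the $H^1$-control already obtained yields $B((u,U),(u,U))\ge\delta\|(u,U)\|_{\mathbb H}^2$.

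The main obstacle is the coercivity estimate for the discrete component $V\mapsto U$, since the bilinear form contains no term that directly penalizes $\|U\|_{l^2}$ on its own; the control of $\|U\|_{l^2}$ has to be manufactured indirectly by trading the boundary coupling term against the trace of $u$, and one must be careful that the constants from the trace inequality and from $\min_l|O_l|>0$ combine correctly without degrading as $L$ grows. A secondary technical point is that $\|\cdot\|_{\mathbb H}$ is defined as $\|v\|_{H^1}+\|V\|_{l^2}$ (a sum rather than the Euclidean combination of squares), so when passing between $\|(u,U)\|_{\mathbb H}$ and $\|(u,U)\|_{\mathbb H}^2$ I would use the equivalence $\tfrac12(\|u\|_{H^1}^2+\|U\|_{l^2}^2)\le\|(u,U)\|_{\mathbb H}^2\le 2(\|u\|_{H^1}^2+\|U\|_{l^2}^2)$ to reconcile the squared lower bounds with the norm definition. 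Once both estimates are in hand, Lax--Milgram delivers existence and uniqueness of the weak solution, which is the purpose of the lemma.
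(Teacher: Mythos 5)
Your proof is correct, but your coercivity argument takes a genuinely different route from the paper's. The paper introduces the auxiliary norm $\lVert (u,U)\rVert_{*}^2=\lVert u\rVert_{H^1}^2+\sum_{l}\int_{O_l}\lvert u-U_l\rvert^2\,dS$, in which both boundedness and coercivity of $B$ are immediate (the bilinear form is term-by-term comparable to $\lVert\cdot\rVert_*^2$ using only $0<m\le a,b\le M$), and then transfers the estimates to $\lVert\cdot\rVert_{\mathbb{H}}$ by proving the two norms equivalent; the nontrivial direction $\lVert (u,U)\rVert_{\mathbb{H}}\le C\lVert (u,U)\rVert_*$ is obtained by a contradiction/compactness argument on a sequence whose $\mathbb{H}$-norm tends to $1$ while its $*$-norm tends to $0$. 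You instead prove coercivity directly: the lower bound $b\ge b_{\min}>0$ controls $\lVert u\rVert_{H^1}^2$ from the first two terms, and the elementary inequality $\lvert O_l\rvert\,U_l^2\le 2\int_{O_l}(u-U_l)^2\,dS+2\int_{O_l}u^2\,dS$ combined with the trace theorem recovers $\lVert U\rVert_{l^2}^2$ from the boundary term. This is exactly the quantitative content hidden inside the paper's norm-equivalence step, obtained constructively and with explicit constants (your $\delta$ depends on $a_{\min}$, $b_{\min}$, the trace constant, and $\min_l\lvert O_l\rvert$, all harmless since $L$ is fixed and finite, so your worry about $L$-dependence is moot). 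The one thing your shortcut leans on is that $b$ is bounded below away from zero: if $b\equiv 0$ (the EIT case) the term $\int_\Omega b\,u^2$ no longer yields $\lVert u\rVert_{L^2}^2$ and one genuinely needs a compactness or Poincar\'e-type argument as in the paper's norm-equivalence proof; here the standing assumption $0<m\le b$ makes your direct argument legitimate. The boundedness halves of the two proofs are essentially identical, and your final appeal to the equivalence of $(\lVert u\rVert_{H^1}+\lVert U\rVert_{l^2})^2$ with $\lVert u\rVert_{H^1}^2+\lVert U\rVert_{l^2}^2$ correctly handles the fact that the paper defines $\lVert\cdot\rVert_{\mathbb{H}}$ as a sum of norms.
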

\begin{remark} To avoid introducing new constants we will use the notation L.H.S. $\lesssim$ R.H.S. to mean that the inequality is true up to some finite constant. 
\end{remark}
\begin{proof}
To establish the boundedness and coercivity property, it is easier to work in a norm introduced below which is equivalent to $\lVert (\cdot,\cdot)\rVert_{\Hb}$. For any $(u,U)\in \Hb$ consider the following norm:
\begin{align}
\lVert (u,U)\rVert_{*}^2=\lVert u\rVert_{H^1}^2+\sum_{l=1}^L\int_{O_l}\lvert u(x)-U_l\rvert^2 dS.
\end{align}
We will show that the two norms are equivalent by establishing the following inequality:
\begin{align} \label{eq:3.15}
    c\lVert (u,U)\rVert_{*}\leq \lVert (u,U)\rVert_{\Hb}\leq C \lVert (u,U)\rVert_{*}
\end{align}
for some constant $c$ and $C$. First of all, observe that 
\begin{align}\label{eq:3.16}
 \lVert (u,U)\rVert_{*}^2\lesssim \lVert u\rVert_{H^1}^2+ 2\sum_{l=1}^L \int_{O_l} \lvert u(x)\rvert ^2 dS+  2\sum_{l=1}^L \lvert O_l\rvert U_l^2 dS 
\end{align}
where we have used the fact that $(p-q)^2\leq 2 p^2+ 2q^2$.
Now, \begin{align}\label{eq:3.17}
    \sum_{l=1}^L \int_{O_l} \lvert u(x)\rvert ^2 dS\leq  \int_{\partial \Omega} \lvert u(x)\rvert ^2 \lesssim \lVert u\rVert^2_{H^{1/2}(\partial \Omega)}\lesssim \lVert u \rVert_{H^1}^2.
\end{align} 
Substituting \eqref{eq:3.17} in \eqref{eq:3.16} we get, $\lVert (u,U)\rVert_{*}^2\lesssim \lVert (u,U)\rVert_{\Hb}^2$. It remains to show that the right hand inequality in \eqref{eq:3.15} is also true. We will prove this by contradiction. Suppose the inequality, $\lVert (u,U)\rVert_{\Hb}\leq C \lVert (u,U)\rVert_{*}$ does not hold for any $C$. Then we have a sequence, $(u_n,U_n)\in \Hb$ such that $\lVert (u_n,U_n)\rVert_{\Hb}^2\to 1$ with $1-\frac{1}{n^2}\leq \lVert (u_n,U_n)\rVert_{\Hb}^2\leq 1+\frac{1}{n^2}$ while $\lVert (u_n,U_n)\rVert_{*}^2\leq \frac{1}{n^2}.$ This in turn implies that $(u_n)$ is a bounded sequence in $H^1(\Omega)$ over a bounded domain $\Omega$. Thus by the compact embedding of Sobolev spaces over bounded domains, \cite{evans_book}, we know that there exists a subsequence of $(u_n)$ which we still denote by $(u_n)$ which converges to some $u$ in $L^2(\Omega)$. However, recall that we also have $\lVert (u_n,U_n)\rVert_{*}^2\leq \frac{1}{n^2}.$ In particular, $\lVert u_n\rVert^2_{L^2(\Omega)}+\lVert \nabla u_n\rVert^2_{L^2(\Omega)}\leq 1/n^2$. Together, this means that $u_n\to 0$ in $L^2(\Omega)$ (and also in $H^1(\Omega)$).  Furthermore, we also have that for each $l\in \{1,\dots, L\}$:
\begin{align}
\int_{O_l}\lvert u(x)-U_{n_l}\rvert ^2 dS &\leq \frac{1}{n^2} \quad \text{{(from 3.14)}}\nonumber\\
\implies \int_{O_l}\lvert u(x)\rvert ^2 dS- 2U_{n_l}\int_{O_l}u(x) dS + U_{n_l}^2\lvert O_l\rvert  &\leq \frac{1}{n^2}\nonumber\\
\implies U_{n_l}^2\lvert O_l\rvert \leq \frac{1}{n^2}+ 2U_{n_l}\int_{O_l}\lvert u(x)\rvert dS\nonumber \\
\implies  U_{n_l}^2\lvert O_l\rvert \leq \frac{1}{n^2} +\tilde{C}(1+\frac{1}{n^2})\lvert u_n \rvert _{H^1(\Omega)}.
\end{align}
Since $u_n\to 0$ in $H^1(\Omega)$, thus we have that $U_{nl}\to 0$ for each $l$, i.e. $U_n\to \boldsymbol{0}\in \Rb^L$. This would then imply that $\lVert (u_n,U_n)\rVert_{\Hb}\to 0$ which contradicts that our assumption that it converges to 1 instead. This establishes the equivalence of the two norms as claimed. 
\par \noindent Coming back to the proof of boundedness and coercivity of the bilinear form, we will use the fact that $0<m\leq a,b \leq M, M>1$. Thus:
\begin{align}
 \lvert B((u,U),(v,V))\rvert&=\lvert\int_{\Omega}a \nabla u \cdot \nabla v dx+ \int_{\Omega}b u v dx +\sum_{l=1}^L\frac{1}{2}\int_{O_l}(u-U_l)(v-V_l) dS \rvert \nonumber\\ 
  &\leq M \lvert(\int_{\Omega} \nabla u \cdot \nabla v dx+ \int_{\Omega} u v dx +\sum_{l=1}^L\int_{O_l}(u-U_l)(v-V_l)) dS\rvert\nonumber\\
  &\leq {M} \lVert (u,U)\rVert_{*} \lVert (v,V)\rVert_{*} \quad\text{(Cauchy-Schwartz inequality)}\nonumber\\
  &\lesssim \lVert (u,U)\rVert_{\Hb} \lVert (v,V)\rVert_{\Hb} \quad \text{(norm equivalence)}.
\end{align}
This shows boundedness. For coercivity, 
\begin{align}
\lvert B((u,U),(u,U))\rvert &\geq m_0 (\int_{\Omega} \lvert\nabla u \rvert^2 dx+ \int_{\Omega} u^2 dx +\sum_{l=1}^L\int_{O_l}(u-U_l)^2)\nonumber\\
& \geq m_0 \lVert (u,U)\rVert_{*}^2
\end{align}
and thus, $\lVert (u,U)\rVert_{\Hb}^2\lesssim \lvert  B((u,U),(u,U))\rvert$.
\end{proof} 
Thus, we have shown that the bilinear form is coercive and bounded, and Theorem \ref{th:3.4} follows.
\begin{theorem}\label{th:3.4}
    Let $0<m\leq a,b\leq M <\infty$. Then for any given flux pattern $(F_l)_{l=1}^L$, there exists a unique $(u,U)\in \Hb$ which satisfies \eqref{eq:3.2} for any $(v, V)\in \Hb$.
\end{theorem}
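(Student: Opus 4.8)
The plan is to apply the Lax--Milgram lemma to the bilinear form $B$ over the space $\Hb$. The two conditions on $B$ itself that Lax--Milgram requires, namely boundedness and coercivity, have already been established in the preceding Lemma, so the proof reduces to checking that $\Hb$ is a Hilbert space and that the right-hand side of \eqref{eq:3.2} is a bounded linear functional on $\Hb$; the existence and uniqueness statement then follows immediately.

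First I would observe that $\Hb = H^1(\Omega)\bigoplus \Rb^L$ is a Hilbert space: it is the direct sum of the Hilbert space $H^1(\Omega)$ and the finite-dimensional space $\Rb^L$, and the norm $\lVert (\cdot,\cdot)\rVert_{\Hb}$ is induced by the natural inner product on this sum. Next I would define the functional $\ell(v,V):=-\sum_{l=1}^L F_l V_l$, which is plainly linear in $(v,V)$. Its boundedness follows from the Cauchy--Schwarz inequality in $\Rb^L$, since $|\ell(v,V)| \leq \lVert F\rVert_{l^2}\,\lVert V\rVert_{l^2} \leq \lVert F\rVert_{l^2}\,\lVert (v,V)\rVert_{\Hb}$ with $F=(F_l)_{l=1}^L$; because the flux pattern is prescribed, $\lVert F\rVert_{l^2}$ is a fixed finite constant, so $\ell$ is a bounded linear functional on $\Hb$.

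With $\Hb$ a Hilbert space, $B$ bounded and coercive by the Lemma, and $\ell$ bounded and linear, the Lax--Milgram lemma yields a unique $(u,U)\in\Hb$ satisfying $B((u,U),(v,V)) = \ell(v,V)$ for every $(v,V)\in\Hb$, which is exactly \eqref{eq:3.2}. For the reader's benefit I would also spell out uniqueness directly from coercivity: if $(u_1,U_1)$ and $(u_2,U_2)$ both solve \eqref{eq:3.2}, their difference $(w,W)$ satisfies $B((w,W),(v,V))=0$ for all $(v,V)$, and taking $(v,V)=(w,W)$ gives $\delta\lVert (w,W)\rVert_{\Hb}^2 \leq B((w,W),(w,W)) = 0$, forcing $(w,W)=0$.

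I do not expect any real obstacle in this step. All of the analytic content---most notably the norm-equivalence argument used to prove coercivity---was already carried out in the preceding Lemma, and Theorem \ref{th:3.4} is essentially a bookkeeping application of Lax--Milgram once the hypotheses are in place. The only point meriting a moment's care is confirming the boundedness of $\ell$, which is immediate here precisely because the data enter through the finite-dimensional factor $\Rb^L$.
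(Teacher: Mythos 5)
Your proposal is correct and follows exactly the route the paper intends: the paper proves boundedness and coercivity of $B$ in the preceding lemma and then simply states that Theorem \ref{th:3.4} "follows," i.e.\ by Lax--Milgram. You merely spell out the routine details the paper omits (that $\Hb$ is a Hilbert space and that $(v,V)\mapsto -\sum_l F_l V_l$ is a bounded linear functional), which is fine.
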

Now we recall the admissible space of parameters which we consider in this article.
\begin{definition}
A space of functions $\mathcal{A}(\Omega)$ will be said to be an admissible space of parameters, if for any $f\in \mathcal{A}(\Omega)$ where $f:\Omega\to \mathbb{R}$  the following holds true:
\begin{itemize}
    \item There exists a $N\in \mathbb{N}$ such that $\bar{\Omega}= \cup_{i=1}^N \bar{\Omega}_i$ where $\{\Omega_i\}_{i=1}^N$ are disjoint open subsets of $\Omega$.
\item $f|_{\Omega_i}=c_i$ where $c_i$ are constants.
\item There exist $c_m$ and $c_M$ such that, $0<c_m<f(x)<c_M$ for all $x\in \Omega$.
\end{itemize}
\end{definition}
From here on, we will assume that $a(x)$ and $b(x)$ are both in $\mathcal{A}(\Omega)$. Consider the solution operator $\mathcal{M}$ for the weak formulation posed by \eqref{eq:3.2}, see also Remark \ref{rm:3.1} .
\begin{align}
\mathcal{M}:&\mathcal{A}(\Omega)\times \mathcal{A}(\Omega)\to \Hb\\
&(a,b)\mapsto (u,U).\nonumber
\end{align}
We will show that the map $\mathcal{M}$ is continuous in the following sense:
\begin{theorem}\label{thm:3.6}
For a fixed flux pattern, $(F_l)_{l=1}^L$, let the map $\mathcal{M}:\mathcal{A}(\Omega)\times \mathcal{A}(\Omega)\to \Hb$ be as above. Consider $a(x)$ and $b(x)$ to belong to $\mathcal{A}(\Omega)$ and let $(a_{\epsilon}(x))$ and $ (b_{\epsilon}(x))$ be sequences in $\mathcal{A}(\Omega)$ which converge in measure to $a(x)$ and $b(x)$ respectively. Then $\lVert \mathcal{M}(a_{\epsilon},b_{\epsilon}) -\mathcal{M}(a,b)\rVert_{*}\to 0.$ Furthermore, if $\Pi:\Hb\to\mathbb{R}^l$ is the projection map, such that, $\Pi(v,V)=V$, then we also have the continuity of the composition map $\Pi\circ \mathcal{M}$. More precisely, $\lVert \Pi\circ \mathcal{M}_{\epsilon}-\Pi\circ \mathcal{M}\rVert_{l^2}\to 0.$
\end{theorem}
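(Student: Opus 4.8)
The plan is to prove the two convergences together, using a uniform energy bound, a compactness argument that identifies weak limits through the uniqueness in Theorem \ref{th:3.4}, and finally a coercivity estimate to upgrade to strong convergence. Throughout write $(u_\epsilon,U_\epsilon)=\mathcal{M}(a_\epsilon,b_\epsilon)$, $(u,U)=\mathcal{M}(a,b)$, and $(w,W)=(u_\epsilon-u,\,U_\epsilon-U)$. I use the uniform bounds $0<m\le a_\epsilon,b_\epsilon\le M$ available in $\mathcal{A}(\Omega)$ (in the level-set construction the coefficients take values in a fixed compact set, so the coercivity constant in the boundedness–coercivity lemma is uniform in $\epsilon$). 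Two preliminary facts are recorded first. Testing the weak formulation \eqref{eq:3.2} for $(a_\epsilon,b_\epsilon)$ against $(u_\epsilon,U_\epsilon)$ itself and invoking uniform coercivity gives $\lVert(u_\epsilon,U_\epsilon)\rVert_{\mathbb{H}}^2\lesssim\lvert\sum_l F_l U_{\epsilon,l}\rvert\lesssim\lVert(u_\epsilon,U_\epsilon)\rVert_{\mathbb{H}}$, so the sequence is bounded in $\mathbb{H}$ uniformly in $\epsilon$. Second, since $\Omega$ has finite measure and the coefficients are uniformly bounded, convergence in measure upgrades to $a_\epsilon\to a$, $b_\epsilon\to b$ in every $L^p(\Omega)$, $p<\infty$; more usefully, for any fixed $g\in L^2(\Omega)$ one has $(a_\epsilon-a)g\to 0$ and $(b_\epsilon-b)g\to 0$ in $L^2(\Omega)$ by dominated convergence, the integrands being dominated by $4M^2\lvert g\rvert^2$ and tending to $0$ in measure.

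Next I identify the limit. By the a priori bound, along any subsequence I extract a further subsequence with $(u_\epsilon,U_\epsilon)\rightharpoonup(\tilde u,\tilde U)$ weakly in $\mathbb{H}$; the compact embedding $H^1(\Omega)\hookrightarrow L^2(\Omega)$ and the compactness of the trace $H^1(\Omega)\to L^2(\partial\Omega)$ give $u_\epsilon\to\tilde u$ in $L^2(\Omega)$ and on $\partial\Omega$, while $U_\epsilon\to\tilde U$ in $\mathbb{R}^L$. I then pass to the limit in $B((u_\epsilon,U_\epsilon),(v,V);a_\epsilon,b_\epsilon)=-\sum_l F_l V_l$ for fixed $(v,V)$. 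The delicate term is $\int_\Omega a_\epsilon\nabla u_\epsilon\cdot\nabla v\,dx$: splitting $a_\epsilon\nabla u_\epsilon=(a_\epsilon-a)\nabla u_\epsilon+a\nabla u_\epsilon$, the second part converges to $\int_\Omega a\nabla\tilde u\cdot\nabla v$ by weak $L^2$ convergence of $\nabla u_\epsilon$ against $a\nabla v\in L^2$, and the first part equals $\int_\Omega\nabla u_\epsilon\cdot(a_\epsilon-a)\nabla v\,dx\to 0$ since $(a_\epsilon-a)\nabla v\to 0$ strongly in $L^2$ while $\nabla u_\epsilon$ stays bounded. The zeroth-order and boundary terms pass to the limit by the strong $L^2$ and trace convergence (handling $\int_\Omega b_\epsilon u_\epsilon v$ by the same split). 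Hence $(\tilde u,\tilde U)$ satisfies \eqref{eq:3.2} for $(a,b)$, so by the uniqueness in Theorem \ref{th:3.4} we get $(\tilde u,\tilde U)=(u,U)$; as the limit is independent of the subsequence, the whole sequence satisfies $u_\epsilon\to u$ in $L^2(\Omega)$ and $U_\epsilon\to U$ in $\mathbb{R}^L$.

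Finally I upgrade to convergence in $\lVert\cdot\rVert_*$, which reduces to $\nabla u_\epsilon\to\nabla u$ in $L^2(\Omega)$. Subtracting the two weak formulations and using that $(u_\epsilon,U_\epsilon)$ solves the $(a_\epsilon,b_\epsilon)$ problem yields the identity $B((w,W),(w,W);a,b)=\int_\Omega(a-a_\epsilon)\nabla u_\epsilon\cdot\nabla w\,dx+\int_\Omega(b-b_\epsilon)u_\epsilon w\,dx$. Writing $\nabla u_\epsilon=\nabla w+\nabla u$ in the first integral isolates the troublesome term $\int_\Omega(a-a_\epsilon)\lvert\nabla w\rvert^2\,dx$; moving it to the left and combining it with $\int_\Omega a\lvert\nabla w\rvert^2$ produces $\int_\Omega a_\epsilon\lvert\nabla w\rvert^2\ge m\lVert\nabla w\rVert_{L^2}^2$ on the left, leaving only benign terms on the right:
\begin{align*}
m\lVert\nabla w\rVert_{L^2}^2\le\int_\Omega(a-a_\epsilon)\nabla u\cdot\nabla w\,dx+\int_\Omega(b-b_\epsilon)u_\epsilon w\,dx.
\end{align*}
The first term is bounded by $\lVert(a-a_\epsilon)\nabla u\rVert_{L^2}\lVert\nabla w\rVert_{L^2}$ with $(a-a_\epsilon)\nabla u\to 0$ in $L^2$ and $\lVert\nabla w\rVert_{L^2}$ bounded, and the second by $2M\lVert u_\epsilon\rVert_{L^2}\lVert w\rVert_{L^2}$ with $\lVert w\rVert_{L^2}\to 0$; both vanish, forcing $\lVert\nabla w\rVert_{L^2}\to 0$. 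Combined with $\lVert w\rVert_{L^2}\to 0$ and the consequent trace convergence (together with $U_\epsilon\to U$), this gives $\lVert(w,W)\rVert_*\to 0$. The second assertion is then immediate: by the norm equivalence \eqref{eq:3.15}, $\lVert\Pi\circ\mathcal{M}_\epsilon-\Pi\circ\mathcal{M}\rVert_{l^2}=\lVert U_\epsilon-U\rVert_{l^2}\le\lVert(w,W)\rVert_{\mathbb{H}}\lesssim\lVert(w,W)\rVert_*\to 0$.

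The main obstacle throughout is the products $(a-a_\epsilon)\nabla u_\epsilon$ and $(a-a_\epsilon)\lvert\nabla w\rvert^2$: only $L^2$ control on $\nabla u_\epsilon$ and $\nabla w$ is available while $a_\epsilon$ converges merely in measure, so these products need not converge on their own. The two devices that defuse this are the weak-times-strong pairing used in the limit identification and, for the strong estimate, the algebraic rearrangement that absorbs $\int_\Omega(a-a_\epsilon)\lvert\nabla w\rvert^2$ into the coercive quantity $\int_\Omega a_\epsilon\lvert\nabla w\rvert^2$, after which every remaining term pairs the measure-convergent coefficient differences against the \emph{fixed} data $\nabla u$ or against the strongly convergent $w$.
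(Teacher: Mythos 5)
Your proof is correct, but it takes a genuinely different and longer route than the paper. The paper's argument is a single direct estimate: it telescopes $B(v'_{\epsilon},w';a_{\epsilon},b_{\epsilon})-B(v',w';a_{\epsilon},b_{\epsilon})+B(v',w';a_{\epsilon},b_{\epsilon})-B(v',w';a,b)=0$, which arranges matters so that the coefficient differences $a_{\epsilon}-a$ and $b_{\epsilon}-b$ only ever multiply the \emph{fixed} limit solution $v$ (never $\nabla v_{\epsilon}$ or $\nabla w$); testing with $w'=v'_{\epsilon}-v'$ and applying Cauchy--Schwarz then yields the explicit bound \eqref{eq:3.24}, $\lVert v'_{\epsilon}-v'\rVert_{*}\lesssim(\int|a_{\epsilon}-a|^{2}|\nabla v|^{2})^{1/2}+(\int|b_{\epsilon}-b|^{2}|v|^{2})^{1/2}$, and dominated convergence finishes. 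No a priori bound, weak compactness, subsequence extraction, or limit identification via Theorem \ref{th:3.4} is needed. You instead keep $(a-a_{\epsilon})$ paired with $\nabla u_{\epsilon}$, which forces you to absorb $\int(a-a_{\epsilon})|\nabla w|^{2}$ into the coercive term and, separately, to establish $\lVert w\rVert_{L^{2}}\to 0$ by compactness in order to control $\int(b-b_{\epsilon})u_{\epsilon}w$. Note that if you had applied your own absorption trick symmetrically to the zeroth-order term (writing $u_{\epsilon}=w+u$ there as well, so that $\int(b-b_{\epsilon})w^{2}$ is absorbed into $\int b w^{2}$), the entire weak-convergence stage would become unnecessary and you would recover the paper's one-step estimate. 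What the paper's route buys beyond brevity is the quantitative inequality \eqref{eq:3.24} itself, which is reused verbatim later (in the proof of Lemma \ref{prop5.1}) to obtain the boundedness of the forward map; your argument, being purely qualitative, would not supply that. Two minor points you handled correctly but should keep explicit: the coercivity constant must be uniform in $\epsilon$ (uniform bounds $0<m\le a_{\epsilon},b_{\epsilon}\le M$), and the boundary/electrode terms of $B$ cancel in the coefficient-difference because they do not involve $a$ or $b$.
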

\begin{proof}

This proof adapts the proof of \cite[Proposition 2.7]{stuart_16} that was done for the problem of electrical impedance tomography in which we have a single unknown parameter (conductivity) to reconstruct. Introducing the notation, $v^{\prime}_{\epsilon}:=(v_{\epsilon},V^{\epsilon})=\mathcal{M}(a_{\epsilon},b_{\epsilon})$ and $v^{\prime}:=(v,V)=\mathcal{M}(a,b)$, we have for all $w^{\prime}:=(w,W)\in \Hb$:
$$B(v^{\prime}_{\epsilon},w^{\prime};a_{\epsilon},b_{\epsilon})=\sum_{l=1}^L F_l W_L=B(v^{\prime},w^{\prime}; a,b).$$
Thus, \begin{align}\label{eq:3.22}
B(v^{\prime}_{\epsilon},w^{\prime}; a_{\epsilon},b_{\epsilon})&-B(v^{\prime},w^{\prime};a_{\epsilon},b_{\epsilon})+B(v^{\prime},w^{\prime};a_{\epsilon},b_{\epsilon})-B(v^{\prime},w^{\prime};a,b)=0\nonumber\\
\int_{\Omega}a_{\epsilon}(\nabla v_{\epsilon}-\nabla v)\cdot \nabla w dx&+\int_{\Omega}b_{\epsilon}(v_{\epsilon}- v)\cdot w dx+\int_{\Omega}(a_{\epsilon}-a)\nabla v\cdot \nabla w dx+\int_{\Omega}(b_{\epsilon}-b) v\cdot  w dx\nonumber\\
&+\sum_{l=1}^L \frac{1}{2}\int_{O_l}(w-W_l)[(v_\epsilon-v)-(V^{\epsilon}_l-V_l)]dS=0.
\end{align}
If we substitute, $w^{\prime}=(v_{\epsilon}-v, V^{\epsilon}-V)$ in \eqref{eq:3.22}, we get:
\begin{align} \label{eq:3.23}
\int_{\Omega}a_{\epsilon}(\nabla v_{\epsilon}-\nabla v)^2 dx&+\int_{\Omega}b_{\epsilon}(v_{\epsilon}- v)^2 dx
+\sum_{l=1}^L \frac{1}{2}\int_{O_l}[(v_\epsilon-v)-(V^{\epsilon}_l-V_l)]^2 dS \nonumber \\ &\leq \int_{\Omega}\lvert (a_{\epsilon}-a)\rvert \lvert \nabla v\cdot \nabla (v_{\epsilon}-v)\rvert  dx+\int_{\Omega}\lvert(b_{\epsilon}-b)\rvert \lvert v\cdot  (v_{\epsilon}-v)\rvert dx
\end{align}
Using the fact that $a_{\epsilon}, b_{\epsilon},a$ and $b$ are all bounded above and below, we can rewrite \eqref{eq:3.23} as:
\begin{align}\label{eq:3.24}
\lVert v_{\epsilon}^{\prime}-v^{\prime}\rVert_{*}^2&\lesssim \bigg(\int_{\Omega} \lvert (a_{\epsilon}-a) \nabla v\rvert^2 dx\bigg)^{\frac{1}{2}}\lVert \nabla(v_{\epsilon}^{\prime}-v^{\prime})\rVert_{2}+\bigg(\int_{\Omega}\lvert(b_{\epsilon}-b) v\rvert^2 dx\bigg)^{\frac{1}{2}} \lVert (v_{\epsilon}^{\prime}-v^{\prime})\rVert_{2}\nonumber\\
&\lesssim \bigg[\bigg(\int_{\Omega} \lvert (a_{\epsilon}-a)\rvert^2 \lvert \nabla v\rvert^2 dx\bigg)^{\frac{1}{2}}+\bigg(\int_{\Omega}\lvert(b_{\epsilon}-b)\rvert^2 \lvert v\rvert^2dx\bigg)^{\frac{1}{2}} \bigg]\lVert (v_{\epsilon}^{\prime}-v^{\prime})\rVert_{*}\nonumber\\
\implies \lVert v_{\epsilon}^{\prime}-v^{\prime}\rVert_{*}&\lesssim \bigg[\bigg(\int_{\Omega} \lvert (a_{\epsilon}-a)\rvert^2 \lvert \nabla v\rvert^2 dx\bigg)^{\frac{1}{2}}+\bigg(\int_{\Omega}\lvert(b_{\epsilon}-b)\rvert^2 \lvert v\rvert^2 dx\bigg)^{\frac{1}{2}} \bigg]
\end{align}
Recall that $a_{\epsilon}\to a$ and $b_{\epsilon}\to b$ in measure. Moreover $(a_{\epsilon}), (b_{\epsilon})$ are bounded sequences and the measure of the bounded domain $\Omega$ is finite. Thus each of the integrands on the right hand side of \eqref{eq:3.24} converges to $0$ in measure, see e.g. \cite [Proposition 2.7]{stuart_16}. From this we conclude,  $\lVert v_{\epsilon}^{\prime}-v^{\prime}\rVert_{*}= \lVert \mathcal{M}(a_{\epsilon},b_{\epsilon}) -\mathcal{M}(a,b)\rVert_{*}\to 0.$ The continuity of the composition map follows in exactly the same manner as in the proof of \cite [corollary 2.8]{stuart_16}.
\end{proof}
\begin{remark}\label{rm:3.7}: The operator $\Mca:\Ac (\Omega)\times \Ac(\Omega)\to \Hb$ will be called a measurement operator. More particularly, we will denote by $\Mca_j:\Ac(\Omega) \times \Ac(\Omega)\to\Hb$, the $j-$th measurement operator corresponding to a given flux pattern $F^{(j)}\in \Rb^L.$
\end{remark}
\section{Modeling a level-set prior}\label{prior}
Diffusive and absorptive targets will be expressed with the help of level-set functions, one for each. To make ideas precise, consider the case where diffusive and absorptive coefficients are expressed as:
\begin{align}
    a(x)=\sum_{i=1}^M a_{i}\mathbb{I}(\Omega^d_{i}), \quad \quad b(x)=\sum_{\ip=1}^N b_{\ip} \mathbb{I}(\Omega^a_{\ip})
\end{align} for some $M,N\in \Nb$. Here, $\mathbb{I}(S)$ denotes the characteristic function of some set $S\subset \Omega$. Also, for $i\neq k$, and $\ip \neq k^{\prime}$, $\Omega^d_i \cap \Omega^d_k=\emptyset,\ \Omega^a_{\ip}\cap \Omega^a_{k^{\prime}}=\emptyset$. Besides, $\cup_{i=1}^M \Omega^d_i=\Omega=\cup_{\ip=1}^N\Omega^a_{\ip}$. The constants $a_i $ and $ b_{\ip}$ are bounded, strictly positive numbers such that $a(x)\in \mathcal{A}(\Omega)$ and $b(x)\in \mathcal{A}(\Omega)$. From here on, we assume that the constants $\{a_{i}\}_{i=1}^M$ and $\{b_{\ip}\}_{\ip=1}^N$ are known to us and what we need to determine are the regions $\{\Omega_{i}^d\}_{i=1}^M$ and $\{\Omega_{\ip}^a\}_{\ip=1}^N$ from certain boundary measurements. Choose numbers $\{c_i\}_{i=1}^M$ and $\{d_{\ip}\}_{\ip=1}^N$ with $c_0<c_1\dots<c_M$ and $d_0<d_1<\dots<d_N$, and  two continuous functions, $u_{i}:\Omega \to \Rb$ for $i=1,2$, such that:
\begin{align}
\Omega^d_i&=\{x\in \Omega: c_{i-1}\leq u_1(x)<c_i\} \quad \text{and}\\
\Omega^a_{\ip}&=\{x\in \Omega:d_{\ip-1}\leq u_2(x)<d_{\ip}\}.
\end{align}
In effect, the boundaries of the diffusive and absorptive regions are demarcated by the level-sets of the continuous functions $u_1$ and $u_2$ respectively, where the corresponding diffusive and absorptive level-sets are given by:
\begin{align}L^d_i&=\{x\in \Omega: u_1(x)=c_i \} \quad\text{and}\\
L^a_{\ip}&=\{x\in \Omega: u_2(x)=d_{\ip}\}.
\end{align} 
Let $H_1$ and $H_2$ be two operators such that $H_{l}:C(\bar{\Omega})\to \mathcal{A}(\Omega)$, $l=1,2$:
\begin{align}
H_1(u_1)=\sum_{i=1}^M a_{i}\mathbb{I}(\Omega^d_{i})=a(x) \quad\text {  and  } \quad H_2(u_2)= \sum_{\ip=1}^N b_{\ip}\mathbb{I} (\Omega^a_{\ip})=b(x).
\end{align}
If the level-sets $L_i^d$ and $L_j^a$ of the functions $u_1$ and $u_2$ respectively each have zero measure then the functions $H_1$ and $H_2$ satisfy a certain continuity in measure property. More precisely we have the following result from \cite {stuart_16}:
\begin{prop}{\cite[Proposition 3.5]{stuart_16}.} \label{prop4.1}
Let $u_1,u_2,H_1$ and $H_2$ be as mentioned above in section \ref{prior}. Assume that $\mathrm{leb}(L_i^d)=0$ for each $i\in\{1,\dots,M\}$ and $\mathrm{leb}(L_j^a)=0$ for each $j\in \{1,\dots,N\}$ where $\mathrm{leb}(\cdot)$ denotes the Lebesgue measure of the sets. Suppose that $(u_{1_{\epsilon}})$ and $(u_{2_{\epsilon}})$ are approximating sequence of functions such that $\lVert u_{k_{\epsilon}}-u_i\rVert_{\infty}\to 0$ for $k=1,2$, then $H_1(u_{1_{\epsilon}})$ and $H_2(u_{2_{\epsilon}})$ converge respectively to $H_1(u_1)$ and $H_2(u_2)$ in measure.
\end{prop}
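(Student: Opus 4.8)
The plan is to prove convergence in measure directly, by bounding the Lebesgue measure of the set on which the quantized coefficient disagrees with its limit. I will argue for $H_1$; the argument for $H_2$ is verbatim the same after replacing $(u_1,c_i,L_i^d,M)$ by $(u_2,d_j,L_j^a,N)$. The first reduction is to note that both $H_1(u_{1_\epsilon})$ and $H_1(u_1)$ take values in the finite set $\{a_1,\dots,a_M\}$, so for every $\delta>0$ the set $\{x\in\Omega:\lvert H_1(u_{1_\epsilon})(x)-H_1(u_1)(x)\rvert>\delta\}$ is contained in the disagreement set $S_\epsilon:=\{x\in\Omega: H_1(u_{1_\epsilon})(x)\neq H_1(u_1)(x)\}$. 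Hence it suffices to prove $\mathrm{leb}(S_\epsilon)\to 0$, and convergence in measure follows at once.

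Next I would localize $S_\epsilon$ near the level sets. Writing $\eta_\epsilon:=\lVert u_{1_\epsilon}-u_1\rVert_\infty$, if $x\in S_\epsilon$ then $u_1(x)$ and $u_{1_\epsilon}(x)$ fall into different bands $[c_{i-1},c_i)$, so some threshold $c_i$ separates them. Since $\lvert u_{1_\epsilon}(x)-u_1(x)\rvert\le\eta_\epsilon$, a short case check shows $u_1(x)$ must lie within $\eta_\epsilon$ of that threshold, giving the inclusion
\[
S_\epsilon\subseteq\bigcup_{i}\,E_i^{\eta_\epsilon},\qquad E_i^{\eta}:=\{x\in\Omega:\lvert u_1(x)-c_i\rvert\le\eta\},
\]
where the union runs over the finitely many thresholds whose level sets are assumed to be null.

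I would then let the tubes shrink. Each $E_i^{\eta}=u_1^{-1}([c_i-\eta,c_i+\eta])$ is closed, decreases as $\eta\downarrow 0$, and satisfies $\bigcap_{\eta>0}E_i^{\eta}=\{x:u_1(x)=c_i\}=L_i^d$. Because $\Omega$ has finite Lebesgue measure, continuity of measure from above yields $\mathrm{leb}(E_i^{\eta})\to\mathrm{leb}(L_i^d)=0$ as $\eta\downarrow 0$, which is exactly where the hypothesis $\mathrm{leb}(L_i^d)=0$ enters. Since $\eta_\epsilon\to 0$ by the uniform convergence $u_{1_\epsilon}\to u_1$, summing over the finitely many thresholds gives $\mathrm{leb}(S_\epsilon)\le\sum_i \mathrm{leb}(E_i^{\eta_\epsilon})\to 0$, completing the argument.

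The only genuinely delicate point is the localization step: one must verify that a band change forces proximity to an \emph{interior} threshold whose level set is assumed null, and that the extreme values $c_0,c_M$ introduce no spurious crossings (under the convention $c_0=-\infty$, $c_M=+\infty$ the outermost bands are never exited, so the relevant thresholds are precisely $c_1,\dots,c_{M-1}$, all covered by the hypothesis). Everything else is bookkeeping; the finiteness of the number of thresholds is what allows $\mathrm{leb}(S_\epsilon)$ to be controlled by a finite sum of tube measures, each driven to zero by the zero-measure assumption on the level sets.
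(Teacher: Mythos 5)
Your argument is correct: reducing convergence in measure to the Lebesgue measure of the disagreement set, trapping that set inside the tubes $u_1^{-1}([c_i-\eta_\epsilon,c_i+\eta_\epsilon])$ around the interior thresholds, and shrinking the tubes via continuity of measure from above together with $\mathrm{leb}(L_i^d)=0$ is exactly the standard proof. The paper itself offers no proof of this statement --- it is quoted directly from \cite[Proposition 3.5]{stuart_16} --- and your reconstruction matches the argument given there (including the careful handling of the outermost bands), so there is nothing to add.
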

Now we can reformulate the regression framework for the observations given by \eqref{eq:2.6} in terms of operators $\Pi$, $\Mca_j$, $H_1$ and $H_2$ as:
\begin{align}\label{eq:4.6}
y_j=\Pi\circ \Mca_j (H_1(u_1),H_2(u_2))+\eta_j, \quad j\in \{1,\dots,J\}\quad \text{and }\eta_j\sim N(0,\Gamma_0).
\end{align}
As mentioned in remark \ref{rm:3.7}, $\Mca_j$ is the $j-$th measurement operator corresponding to the flux pattern $F^{(j)}$ for $j\in \{1,\dots,J\}$. Thus we have replaced $\Gc_j(a,b)$ from \eqref{eq:2.6} with $\Pi\circ \Mca_j (H_1(u_1),H_2(u_2))$ and in doing so reformulated the inverse problem as one where the goal is to determine the level-set functions $u_1$, $u_2$ from noisy data. In the Bayesian setting, we think of the coefficients $a(x)$ and $b(x)$ as random variables, which in terms of the formulation given by \eqref{eq:4.6} implies that the functions $u_i(x), i=1,2$ ought to be modeled as random variables. We will also make the assumption that the parameters $a(x),b(x)$ and noise $\eta_j$ are independent, see e.g. \cite[section 2]{mozu21}. Thus we will choose independent prior measures for $u_1$ and $u_2$ such that the level-sets for any sample has measure zero almost surely. This can be ensured by choosing non-degenerate Gaussian measures; as was shown in \cite[Proposition 2.8]{stuart16a}. From proposition \ref{prop4.1}, it will mean that the maps $H_1$ and $H_2$ are almost surely continuous under the chosen priors and will ensure the well-posedness for the Bayesian inverse problem (in particular, the infinite-dimensional Bayes' theorem will hold), see \cite[section 2]{stuart16a}. The prior measures for $u_1$ and $u_2$ are chosen to be $\mu_1^0$ and $\mu_2^0$ where {{$\mu_1^0= N(m_1^0,C_1^0)$ and $\mu_2^0= N(m_2^0,C_2^0)$}} on the space of continuous functions $C(\bar{\Omega})$. Here $m_i^0$ and $C_i^0$ denote the respective mean functions and covariance operators of the Gaussian measures. {The construction of such Gaussian priors has been explained in \cite[section 2.4]{dashti17}. We mention those ideas very briefly here; if $\{\phi^{i}_j\}_{j=1}^{\infty}$ is an eigenbasis and $\{(\gamma^i)^2_{j}\}$ is the corresponding sequence of eigenvalues for each $C_i$, $i=\{1,2\}$ , then any random draw from the measure $\mu^0_i$ is a function of the form $u_i=\sum_j u_i^j \phi^i_j$ where $u_i^j=\gamma^i_j\zeta_j$ and $\{\zeta_j\}$ is a random i.i.d sequence with $\zeta_j\sim N(0,1)$. In the literature, such an eigenbasis $\{\phi^i_j\}$ is called the Karhunen-L\`{o}eve (K-L) basis and the expression for the random function $u_i=\sum_j u_i^j \phi^i_j$ given in form of a convergent series is called a K-L expansion. It should also be noted that the smoothness of the resultant function depends on the nature of the covariance operator. In a lot of applications we choose $C=\mathcal{A}^{-s}$ for some $s>1$ where $\mathcal{A}=\Delta$ is the Laplacian operator. Details on this construction can be found in \cite{dashti17,stuart_16}.} Due to the assumption of independence, the joint prior can be written as:
\begin{align}\label{eq:4.8}
    \mu^{0}(du_1,du_2)=\mu_1^0(du_1)\times \mu_2^0(du_2).
\end{align}Note that the pushforward measures, $H_1^*(\mu_1^0)$ and $H_2^*(\mu_2^0)$ are prior measures on $\Ac(\Omega)$ for parameters $a(x)$ and $b(x)$ respectively. We will specify the particular covariance kernels and the means in the section on numerical simulations.
\section{Likelihood and posterior update}\label{sec5}
Concatenating the observations given by equation \eqref{eq:4.6} for each $y_j, j\in \{1,\dots,J\}$ as a single vector $y$, we get the analogue of equation \eqref{eq:2.7}:
\begin{align}\label{eq:5.1}
y&= \Pi \circ M(H_1(u_1),H_2(u_2))+\eta,\quad  \eta \sim N(0,\Gamma):=\Qb_0\nonumber\\
&=G(u_1,u_2)+ \eta
\end{align}
where $\Gamma=diag(\Gamma_0,\dots,\Gamma_0)$ as before. We rewrite $\Pi \circ M(H_1(u_1),H_2(u_2)):=G(u_1,u_2)$ for brevity. Now we will follow the program introduced for Bayesian level-set approach to inverse problems in \cite{stuart_16,stuart16b,stuart16a}. First of all, we will show that given the prior measure $\mu^0$ introduced in section \ref{prior}, there exists a posterior measure $\mu^y$ whose density with respect to the prior measure takes the following form:
\begin{align}
\frac{du^{y}}{d\mu^0}(u_1,u_2)=\frac{1 }{Z(y)} \exp(-\Phi(u_1,u_2;y))
\end{align}
where $\Phi(u_1,u_2;y)$ is a real valued function (negative log-likelihood) given by,
\begin{align}\label{eq:5.3}
\Phi(u_1,u_2;y)=\frac{1}{2}\lvert y-G(u_1,u_2)\rvert_{\Gamma}^2:=\frac{1}{2}\lvert \Gamma^{-1/2} (y-G(u_1,u_2))\rvert^2.
\end{align}
and $Z(y)$ is a normalizing factor given as:
\begin{align}\label{eq:5.4}
Z(y)=\int_{C(\bar{\Omega})\times C(\bar{\Omega})} \exp (\Phi(u_1,u_2;y))\mu^0(du_1,du_2)
\end{align}
In fact, to show that such an infinite-dimensional version of the Bayes' theorem holds in our setting is equivalent to establishing certain regularity properties for the negative log-likelihood function as encapsulated by \cite[Proposition 3.10]{stuart_16}. We prove an analogue of that proposition here, adapting it for the two-parameters ($a$ and $b$) case wherever needed.

\begin{lemma}\label{prop5.1}
  The negative log likelihood function $\Phi(u_1,u_2;y)$ given by \eqref{eq:5.3} satisfies the following properties:
  \begin{enumerate}
      \item There exists a positive, continuous function $K(\cdot)$ such that whenever $\lvert y \rvert_{\Gamma}<\rho$ for any given $\rho$, we have $0\leq \Phi(u_1,u_2;y)\leq K(\rho).$
      \item For any fixed $y$, the map $\Phi(\cdot,\cdot;y)$ is continuous $\mu^0$-almost surely.
      \item There exists a positive continuous function $\tilde{K}(\cdot)$ such that if $\max\{\lvert y_1\rvert_{\Gamma},\lvert y_2\rvert_{\Gamma}\}\leq \rho$ for any given $\rho$, then 
      \begin{align}
          \lvert \Phi(u_1,u_2;y_1)-\Phi(u_1,u_2;y_2)\rvert \leq \tilde{K}(\rho) \lvert y_1-y_2\rvert_{\Gamma}.
      \end{align}
  \end{enumerate}  
\end{lemma}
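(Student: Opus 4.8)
The plan is to observe that all three assertions reduce to a single \emph{uniform} a priori bound on the forward map $G(u_1,u_2)=\Pi\circ\mathcal{M}(H_1(u_1),H_2(u_2))$, combined with the two continuity results already established. The essential structural simplification is that the data lives in the finite-dimensional space $\Rb^{LJ}$, so $\Phi$ is just a fixed quadratic function of the finite vector $G(u_1,u_2)$; the analysis therefore decouples into (i) controlling $G$ and (ii) elementary manipulations of the squared $\Gamma$-norm.

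First I would prove the key step: a bound $\lvert G(u_1,u_2)\rvert_{\Gamma}\le C$ uniform over all $(u_1,u_2)\in C(\bar\Omega)\times C(\bar\Omega)$. The crucial point is that, by construction, $H_1(u_1)$ and $H_2(u_2)$ take values only in the fixed, known finite sets $\{a_i\}_{i=1}^M$ and $\{b_{\ip}\}_{\ip=1}^N$; hence the coefficients always satisfy the same ellipticity bounds $0<m\le a,b\le M$ with $m,M$ independent of $u_1,u_2$. Consequently the coercivity constant $\delta$ in the coercivity estimate preceding Theorem \ref{th:3.4} is itself independent of $(u_1,u_2)$, and applying coercivity to the solution $(u,U)=\mathcal{M}(H_1(u_1),H_2(u_2))$ of \eqref{eq:3.2} gives, for the fixed flux pattern,
\begin{align*}
\delta\lVert (u,U)\rVert_{\Hb}^2\le \lvert B((u,U),(u,U))\rvert=\Big\lvert\sum_{l=1}^L F_l U_l\Big\rvert\le \lVert F\rVert_{l^2}\lVert (u,U)\rVert_{\Hb},
\end{align*}
so $\lVert (u,U)\rVert_{\Hb}\le \delta^{-1}\lVert F\rVert_{l^2}$. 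Since $\lvert G(u_1,u_2)\rvert=\lVert U\rVert_{l^2}\le \lVert (u,U)\rVert_{\Hb}$ (the solution being well defined by Theorem \ref{th:3.4}), and since on $\Rb^{LJ}$ the norms $\lvert\cdot\rvert$ and $\lvert\cdot\rvert_{\Gamma}$ are equivalent, this yields the claimed uniform bound; concatenating over the $J$ flux patterns only changes the constant.

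With this bound in hand, items (1) and (3) are immediate. For (1), the lower bound $\Phi\ge0$ is trivial, while the triangle inequality gives $\lvert y-G\rvert_{\Gamma}\le\rho+C$ whenever $\lvert y\rvert_{\Gamma}<\rho$, so $\Phi\le\tfrac12(\rho+C)^2=:K(\rho)$, which is positive and continuous in $\rho$. For (3), I would use the identity $\lvert p\rvert_{\Gamma}^2-\lvert q\rvert_{\Gamma}^2=\langle p-q,\,p+q\rangle_{\Gamma}$ with $p=y_1-G$ and $q=y_2-G$, followed by Cauchy--Schwarz, to obtain
\begin{align*}
\lvert\Phi(u_1,u_2;y_1)-\Phi(u_1,u_2;y_2)\rvert\le\tfrac12\lvert y_1-y_2\rvert_{\Gamma}\big(\lvert y_1\rvert_{\Gamma}+\lvert y_2\rvert_{\Gamma}+2\lvert G\rvert_{\Gamma}\big)\le(\rho+C)\lvert y_1-y_2\rvert_{\Gamma},
\end{align*}
so that $\tilde K(\rho)=\rho+C$ works.

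For the almost-sure continuity in (2), I would chain the two continuity results already proved: if $\lVert u_{1,\epsilon}-u_1\rVert_\infty\to0$ and $\lVert u_{2,\epsilon}-u_2\rVert_\infty\to0$, then Proposition \ref{prop4.1} gives $H_1(u_{1,\epsilon})\to H_1(u_1)$ and $H_2(u_{2,\epsilon})\to H_2(u_2)$ in measure, and Theorem \ref{thm:3.6} then gives $G(u_{1,\epsilon},u_{2,\epsilon})\to G(u_1,u_2)$ in $l^2$; since $\Phi$ is a continuous function of the finite vector $G$, continuity of $\Phi(\cdot,\cdot;y)$ follows at every $(u_1,u_2)$ whose level sets have zero Lebesgue measure. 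The hypothesis of Proposition \ref{prop4.1}, namely $\mathrm{leb}(L_i^d)=\mathrm{leb}(L_{\ip}^a)=0$, holds $\mu^0$-almost surely because the priors are non-degenerate Gaussian measures \cite{stuart16a}, which supplies the qualifier ``$\mu^0$-almost surely.'' I expect the main obstacle to be precisely the uniform bound of the second paragraph, specifically the observation that the ellipticity constants, and hence the coercivity constant, may be taken independent of $(u_1,u_2)$ because $H_1,H_2$ only ever output the fixed values $a_i,b_{\ip}$; once this uniformity is secured, the remaining estimates are routine.
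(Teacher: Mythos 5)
Your proposal is correct, and its overall architecture matches the paper's: a uniform bound on $G(u_1,u_2)$ drives items (1) and (3) via elementary manipulations of the quadratic potential, and item (2) is obtained by chaining Proposition \ref{prop4.1} (sup-norm convergence of level-set functions implies convergence in measure of the coefficients, valid off the $\mu^0$-null set where level sets have positive measure) with Theorem \ref{thm:3.6}; your polarization identity for (3) is literally the same computation as the paper's. The one place where you genuinely diverge is the derivation of the uniform bound $\lvert G(u_1,u_2)\rvert_{\Gamma}\le C$. The paper gets it by recycling the perturbation estimate \eqref{eq:3.24}: it compares $\mathcal{M}_j(a,b)$ to a fixed reference solution $\mathcal{M}_j(a_c,b_c)$ and bounds the difference by $\bigl(\lVert a-a_c\rVert_\infty+\lVert b-b_c\rVert_\infty\bigr)\lVert\mathcal{M}_j(a_c,b_c)\rVert_*$, so that boundedness of the coefficient pair yields $\lVert\mathcal{M}_j(a,b)\rVert_*\le C(1+M)$. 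You instead test the weak formulation \eqref{eq:3.2} with the solution itself and invoke coercivity, obtaining $\lVert(u,U)\rVert_{\mathbb{H}}\le\delta^{-1}\lVert F\rVert_{l^2}$ directly; the key observation that $\delta$ is uniform because $H_1,H_2$ only output the fixed positive values $a_i,b_{i'}$ is exactly right and is the same structural fact the paper relies on (uniform ellipticity of the admissible class). Your route is the standard Lax--Milgram a priori estimate, is shorter, and gives an explicit constant depending only on $\delta$ and the flux patterns, whereas the paper's route has the mild advantage of reusing an inequality already on the page but requires an auxiliary reference coefficient and a slightly awkward $\lVert a_1-a_2\rVert_\infty\le\lVert a_1\rVert_\infty+\lVert a_2\rVert_\infty$ step. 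Either way the constant is uniform over the parameter space, which is all that (1) and (3) need.
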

 \begin{proof}
 Using similar notation as introduced in the proof of Theorem \ref{thm:3.6}, we will write 
$v:=\Mca_j(a_1,b_1)$ and $v_{\epsilon}:=\Mca_j(a_2,b_2)$. Thus $ \lVert \Mca_{j}(a_1,b_1)-\Mca_j(a_2,b_2)\rVert_{*}=\lVert v-v_{\epsilon}\rVert_{*}$. Also, recall that $a_1,a_2,b_1$ and $b_2$ are all bounded functions. Thus, from \eqref{eq:3.24}, we have: 
\begin{align}\label{eq:5.6}
 \lVert \Mca_{j}(a_1,b_1)-\Mca_j(a_2,b_2)\rVert_{*}&\lesssim \bigg[\bigg(\int_{\Omega} \lvert (a_{1}-a_2)\rvert^2 \lvert \nabla v\rvert^2 dx\bigg)^{\frac{1}{2}}+\bigg(\int_{\Omega}\lvert(b_{1}-b_2)\rvert^2 \lvert v\rvert^2 dx\bigg)^{\frac{1}{2}} \bigg]\nonumber\\
 &\lesssim \bigg[\bigg(\lVert a_1-a_2\rVert_{\infty}\int_{\Omega}  \lvert \nabla v\rvert^2 dx\bigg)^{\frac{1}{2}}+\bigg(\lVert b_1-b_2\rVert_{\infty}\int_{\Omega} \lvert v\rvert^2 dx\bigg)^{\frac{1}{2}} \bigg]\nonumber\\
  &\lesssim \bigg[\lVert a_1-a_2\rVert_{\infty}\lVert v\rVert_{*}+\lVert b_1-b_2\rVert_{\infty}\lVert v\rVert_{*}\bigg]\quad {\text{(from definition of }\lVert \cdot \rVert_{*})}\nonumber\\
   &= \bigg[\lVert a_1-a_2\rVert_{\infty}\lVert +\lVert b_1-b_2\rVert_{\infty}\lVert \bigg]\lVert \Mca_j(a_1,b_1)\rVert_{*}
\end{align}
If we fix values for $a_1=a_c$ and $b_1=b_c$, the inequality \eqref{eq:5.6} can be rewritten as:
\begin{align}
\lVert \Mca_j (a_2,b_2)\rVert_{*}\lesssim \lVert \Mca_j(a_c,b_c)\rVert_{*}+\bigg[\lVert a_2\rVert_{\infty}\lVert +\lVert b_2\rVert_{\infty}\lVert \bigg]\lVert \Mca_j(a_c,b_c)\rVert_{*}
\end{align}
Hence, for any $(a,b)$ such that $\lVert a\rVert_{\infty}+\lVert b\rVert_{\infty}< M$, we get:
\begin{align*}
 \lVert \Mca_j(a,b)\rVert_{*}<C(1+M)
\end{align*}
where $C=\max_{j}\lVert\Mca_j(a_c,b_c)\rVert_{*}.$ From this and Theorem \ref{thm:3.6}, it also follows that the map $\Pi\circ \Mca_j (a,b)$ is bounded for all $j$ whenever $(\lVert a\rVert_\infty +\lVert b\rVert_{\infty})$ is bounded. From the definition of negative log-likelihood function $\Phi(u_1,u_2;y)$ we have: 
\begin{align*}
    \Phi(u_1,u_2;y)\leq \lvert G(u_1,u_2)\rvert_{\Gamma}^2+\lvert y\rvert_{\Gamma}^2.
\end{align*}
Now in our setting we have that $H_1(u_1)=a(x)$ and $H_2(u_2)=b(x)$ are uniformly bounded. Thus $\Pi_j\circ\Mc_j(H_1(u_1),H_2(u_2))$ is bounded above for every $j$ and let $\lvert G(u_1,u_2)\rvert_{\Gamma}\leq \tilde{C}=\max_{j}\lvert \Pi_j\circ\Mc_j(H_1(u_1),H_2(u_2))\rvert_{\Gamma}$. Thus if $\lvert y\rvert_{\Gamma}^2<\rho$, then $\Phi(u_1,u_2;y)<\tilde{C}^2+\rho^2:K(\rho)$. This proves (1).
\par  Let $(u_1,u_2)\sim \mu^0$, i.e. $u_1\sim \mu^0_1$ and $u_2\sim \mu^0_2$. Consider two sequences, $(u_{1_\epsilon})$ and $(u_{2_\epsilon})$ such that $\lVert u_{1_\epsilon}-u_1\rVert_{\infty}\to 0$ and $\lVert u_{2_\epsilon}-u_2\rVert_{\infty}\to 0$. From proposition \ref{prop4.1}, it follows that $H_1(u_{1_{\epsilon}})$ and $H_2(u_{2_{\epsilon}})$ converge respectively to $H_1(u_1)$ and $H_2(u_2)$ in measure as long as the level-sets for the functions $u_1,u_2$ have zero measure. As these functions are drawn from a Gaussian measure, this property is $\mu^0$ almost surely true for $(u_1,u_2)$. Now from Theorem \ref{thm:3.6}, we get $G_j=\Pi\circ\Mc_j(H_1(\cdot),H_2(\cdot))$ is $\mu^0$-almost surely continuous at any $(u_1,u_2)$. This proves (2).
\par Fix $u_1$ and $u_2$ and consider $y_1$ and $y_2$ such that $\max\{\lvert y_1\rvert_{\Gamma},\lvert y_2\rvert_{\Gamma}\}<\rho$,
 \begin{align*}
       \lvert \Phi(u_1,u_2;y_1)-\Phi(u_1,u_2;y_2)\rvert &=   \bigg\lvert\frac{1}{2}\lvert y_1-G(u_1,u_2)\rvert_{\Gamma}^2-\frac{1}{2}\lvert y_2-G(u_1,u_2)\rvert_{\Gamma}^2\bigg\rvert \nonumber\\
       &=\frac{1}{2}\bigg\lvert\lvert y_1\rvert_{\Gamma^2}-2\langle G(u_1,u_2), y_1\rangle_{\Gamma}-\lvert y_2\rvert_{\Gamma}^2+2\langle G(u_1,u_2),y_2\rangle_{\Gamma}\bigg\rvert \\
       &=\frac{1}{2}\bigg\lvert \langle y_1+y_2-2G(u_1,u_2),y_1-y_2\rangle_{\Gamma}\bigg\rvert\\
       &\leq (\rho+\lvert G(u_1,u_2)\rvert_{\Gamma})\lvert (y_1-y_2)\rvert_{\Gamma}\\
       &\leq (\rho+\tilde{C}) \lvert (y_1-y_2)\rvert_{\Gamma}
\end{align*}
where we use the fact that $\lvert G(u_1,u_2)\rvert_{\Gamma}\leq \tilde{C}$ as in the proof of [Proposition \ref{prop5.1},(1)] above. Let $\tilde{K}(\rho):=\tilde{C}+\rho$. This proves (3).
\end{proof}
\begin{theorem}[Main Theorem] \label{thm:5.2}
Consider the space $\Ac(\Omega)\times \Ac(\Omega)$ on which a prior measure $\mu^0$ introduced in \eqref{eq:4.8} is given. Consider the potential function $\Phi(u_1,u_2;y)$ and the normalizing factor $Z(y)$ given by \eqref{eq:5.3} and \eqref{eq:5.4} respectively. Then the posterior measure $\mu^y$ on the space $\Ac(\Omega)\times \Ac(\Omega)$ exists and is absolutely continuous with respect to the prior measure. In fact, its Radon-Nikodym derivative is given by:
\begin{align}\label{eq:5.8}
\frac{d\mu^y}{d\mu^0}(u_1,u_2)=\frac{1 }{Z(y)} \exp(-\Phi(u_1,u_2;y)).
\end{align}
 Furthermore, the posterior measure is Lipschitz with respect to the observed data in the following sense: for any $y,y^{\prime}$ with $\max\{\lvert y\rvert_{\Gamma}, \lvert y^{\prime}\rvert_{\Gamma}\}<\rho$, there exists a function depending upon $\rho$, C:=$C(\rho)$ such that,
 \begin{align}\label{eq:5.9}
 \lvert \mu^{y^{\prime}}-\mu^y\rvert_{\rm{Hell}}\leq C \lvert y^{\prime}-y\rvert_{\Gamma}
 \end{align}
 Here $\lvert p_1-p_2\rvert_{\rm{Hell}}$ denotes the Hellinger distance between two probability measures $p_1$ and $p_2$.
\end{theorem}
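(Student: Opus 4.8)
The plan is to decouple the statement into two parts: the existence of the posterior together with the formula \eqref{eq:5.8}, which is a direct application of the infinite-dimensional Bayes' theorem, and the Hellinger bound \eqref{eq:5.9}, which follows from the three regularity properties of the potential already recorded in Lemma \ref{prop5.1}.

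For the first part I would apply Theorem \ref{th:1.1} with $X = C(\bar\Omega)\times C(\bar\Omega)$, parameter $(u_1,u_2)$, prior $\mu^0$ of \eqref{eq:4.8}, and potential $\Phi$ of \eqref{eq:5.3}. Its two hypotheses are precisely what Lemma \ref{prop5.1} supplies. The $\mu^0$-almost sure continuity of $\Phi(\cdot,\cdot;y)$ in part (2) gives $\nu_0$-measurability of the potential, and the uniform bound $0\le\Phi\le K(\rho)$ in part (1) gives $\exp(-\Phi(u_1,u_2;y))\ge e^{-K(\rho)}$, so that integration against the probability measure $\mu^0$ yields
$$Z(y)=\int \exp(-\Phi(u_1,u_2;y))\,\mu^0(du_1,du_2)\ge e^{-K(\rho)}>0.$$
Theorem \ref{th:1.1} then gives existence of $\mu^y$, its absolute continuity with respect to $\mu^0$, and the Radon--Nikodym derivative \eqref{eq:5.8}.

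For the Lipschitz estimate I would start from the defining identity
$$2\,\lvert \mu^y-\mu^{y'}\rvert_{\mathrm{Hell}}^2=\int\Big(Z(y)^{-1/2}e^{-\Phi(u_1,u_2;y)/2}-Z(y')^{-1/2}e^{-\Phi(u_1,u_2;y')/2}\Big)^2\mu^0(du_1,du_2),$$
add and subtract $Z(y)^{-1/2}e^{-\Phi(u_1,u_2;y')/2}$ inside the integrand, and use $(p+q)^2\le 2p^2+2q^2$ to split into two terms $I_1$ and $I_2$. For $I_1$ I would use that $x\mapsto e^{-x/2}$ is $\tfrac12$-Lipschitz on $[0,\infty)$ together with part (3) of Lemma \ref{prop5.1} to bound the integrand pointwise by $\tfrac14\tilde K(\rho)^2\lvert y-y'\rvert_\Gamma^2$; since $\mu^0$ is a probability measure and $Z(y)^{-1}\le e^{K(\rho)}$, this integrates to a term of order $\lvert y-y'\rvert_\Gamma^2$. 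For $I_2$ I would first estimate
$$\lvert Z(y)-Z(y')\rvert\le\int\big\lvert e^{-\Phi(u_1,u_2;y)}-e^{-\Phi(u_1,u_2;y')}\big\rvert\,\mu^0(du_1,du_2)\le\tilde K(\rho)\lvert y-y'\rvert_\Gamma,$$
now using that $e^{-x}$ is $1$-Lipschitz on $[0,\infty)$ and part (3) again, and then convert this into a bound on $\lvert Z(y)^{-1/2}-Z(y')^{-1/2}\rvert$ via the local Lipschitz continuity of $z\mapsto z^{-1/2}$ on an interval bounded away from zero. Combining the two estimates and taking square roots gives \eqref{eq:5.9} with $C=C(\rho)$.

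The genuine analytic content has been front-loaded into Lemma \ref{prop5.1}, so the remaining work is essentially mechanical calculus. The one point that requires care, and on which both $I_1$ and $I_2$ depend, is the uniform-in-$\rho$ lower bound $Z(y)\ge e^{-K(\rho)}$: it is what keeps every factor $Z(y)^{-1/2}$ finite and what legitimizes applying the Lipschitz continuity of $z\mapsto z^{-1/2}$ on the interval $[e^{-K(\rho)},1]$, where for the upper endpoint we also used $\Phi\ge0$, hence $Z\le 1$.
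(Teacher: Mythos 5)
Your proposal is correct and follows essentially the same route as the paper: both invoke Theorem \ref{th:1.1} using the measurability/continuity from Lemma \ref{prop5.1}(2) and the lower bound $Z(y)\ge e^{-K(\rho)}$ from Lemma \ref{prop5.1}(1), and both prove the Hellinger estimate by the same add-and-subtract splitting into $I_1$ and $I_2$, controlling $I_1$ via the Lipschitz continuity of the exponential together with Lemma \ref{prop5.1}(3) and $I_2$ via $\lvert Z(y)-Z(y')\rvert\lesssim\lvert y-y'\rvert_\Gamma$ and the boundedness of $Z$ away from zero. The only cosmetic difference is that the paper additionally records the joint $\nu^0$-a.s.\ continuity of $\Phi$ in $(u_1,u_2,y)$ before applying the Bayes theorem, which is itself just a triangle-inequality combination of parts (2) and (3) of the lemma.
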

\begin{proof}
We will follow the program laid out in the proof of \cite[Theorem 2.3]{stuart16a}. The main idea is to establish the continuity of the function $\Phi(u_1,u_2;y)$ jointly under the three variables and prove a lower bound for the normalizing factor $Z(y)$. This would allow us to claim the existence and well-posedness of the posterior measure $\mu^y$ as postulated in the theorem above, see \cite[Theorem 14]{dashti17}. 
\par Note that the random variable $y|(u_1,u_2)$ is distributed according to a measure $\mathbb{Q}_{u}:=N(G(u_1,u_2),\Gamma)$ which is obtained from a translation of the measure $\mathbb{Q}_{0}:=N(0,\Gamma)$ by $G(u_1,u_2)$, see \eqref{eq:5.1}. More over $\mathbb{Q}_{u}$ admits a Radon Nikodym derivative with respect to the measure $\mathbb{Q}_{0}$:
\begin{align*}
    \frac{d\Qb_{u}}{d\Qb_0}(y) \propto \exp(-\Phi(u_1,u_2;y)).
\end{align*}
Consider the following product measure:
\begin{align}
\nu^0(du_1,du_2,y):=\mu^0(du_1,du_2)\Qb_0(dy)=\mu^0_1(du_1)\mu^0_{2}(du_2)\Qb_0(dy).
\end{align}
We will first show that the function $\Phi(u_1,u_2;y)$ is almost surely $\nu^0-$continuous. To see this, consider a sequence of continuous functions, $(u_{1_n},u_{2_n},y_n)$ such that $\lVert u_{1_n}-u_1\rVert_{\infty}\to 0$, $\lVert u_{2_n}-u_2\rVert_{\infty}\to 0$ and $\lvert y_n-y\rvert_{\Gamma}\to 0$. Then we have:
\begin{align}\label{eq:5.11}
    \lvert \Phi(u_{1_n},u_{2_n},y_n) - \Phi(u_1,u_2,y)\rvert\leq  \lvert \Phi(u_{1_n},u_{2_n},y_n)& - \Phi(u_{1_n},u_{2_n},y)\rvert\\&+ \lvert \Phi(u_{1_n},u_{2_n},y) - \Phi(u_1,u_2,y)\rvert\nonumber.
\end{align}
Now the second term in \eqref{eq:5.11} goes to zero $\mu^0-$ almost surely by [Proposition \ref{prop5.1}- (2)]. On the other hand, the first term goes to 0 by  [Proposition \ref{prop5.1}- (3)]. Thus $\Phi(u_1,u_2;y)$ is $\nu^0-$ almost surely jointly continuous in $u_1,u_2$ and $y$.
Next, we want to establish the bounds for $Z(y)$. Consider the product space, $C(\Omega)\times C(\Omega)=:X$. Fix any $\rho>\lvert y\rvert_{\Gamma}$ and consider the $K(\rho)$ as defined in [Proposition \ref{prop5.1}- (1)].
Thus we have:
\begin{align}
0\leq &\Phi(u_1,u_2;y)\leq K(\rho)\nonumber\\
\implies\int_{X}d\mu^{0}(u_1,u_2)\geq \int_{X} \exp(-\Phi(u_1,&u_2;y))d\mu^{0}(u_1,u_2)\geq \int_{X} \exp(-K(\rho))d\mu^{0}(u_1,u_2)\nonumber\\
\implies 1&\geq Z(y)\geq \exp(-K(\rho))>0\label{line536}
\end{align}
Thus it follows from \cite[Theorem 14]{dashti17}, that $\mu^y$ exists and satisfies \eqref{eq:5.8}. To prove the well-posedness of the bayesian inverse problem, we need to establish Lipschitz continuity of the posterior measure $\mu^y$ with respect to observed data $y$. To that end, first recall that the Hellinger distance between two measures $p_1$ and $p_2$ defined on some space $Y$ and both absolutely continuous with respect to some measure $p_0$ is given by:
\begin{align}
\lvert p_1-p_2\rvert_{\rm{Hell}}^2=\frac{1}{2}\int_{Y}\bigg(\sqrt{\frac{dp_1}{dp_0}}-\sqrt{\frac{dp_1}{dp_0}} \bigg)^2 dp_0.
\end{align}
Thus if we have two posterior measures $\mu^y$ and $\mu^{y^{\prime}}$, both absolutely continuous with respect to the prior measure $\mu^0$, then
\begin{align}\label{eq:5.13}
2\lvert \mu^y-\mu^{y^{\prime}}\rvert^2&=\int_{X} \bigg(\sqrt{\frac{\exp(-\Phi(u_1,u_2;y))}{Z(y)}}-\sqrt{\frac{\exp(-\Phi(u_1,u_2;y^{\prime}))}{Z(y^{\prime})}} \bigg)^2 d\mu_0 \nonumber\\
&\leq I_1+ I_2
\end{align}
where \begin{align*}
I_1&=\frac{2}{Z(y)}\int_{X} \bigg(\sqrt{\exp(-\Phi(u_1,u_2;y))}-\sqrt{\exp(-\Phi(u_1,u_2;y^{\prime}))} \bigg)^2 d\mu_0 \quad \quad \text{and}\\
I_2&=\frac{2}{\lvert \sqrt{Z}-\sqrt{Z^{\prime}}\rvert^2}\int_{X}\exp(-\Phi(u_1,u)2;y^{\prime}))d\mu^0 \quad (Z(y^{\prime}):=Z^{\prime})
\end{align*}
Note that in writing \eqref{eq:5.13} we have used the fact that, $(ab-cd)^2\leq 2a^2(b-d)^2+2(a-c)^2d^2$.
Similar to the proof of \cite[Theorem 2.3] {stuart16a}, one can conclude from Proposition \ref{prop5.1} and the fact that $Z(y)$ is bounded, that $I_1\leq C_1 \lvert y-y^{\prime}\rvert_{\Gamma}^2$ and $I_2\leq C_2\lvert y-y^{\prime}\rvert_{\Gamma}^2$ for some $C_1$ and $C_2$. In order to see this, notice 
\begin{align}
\frac{Z(y)}{2}I_1 &= \int_{X}\bigg({\exp(-\frac{1}{2}\Phi(u_1,u_2;y))}-{\frac{1}{2}\exp(-\Phi(u_1,u_2;y^{\prime}))} \bigg)^2 d\mu_0 \nonumber\\
&\lesssim \int_{X} \bigg({(-\Phi(u_1,u_2;y))}-{(-\Phi(u_1,u_2;y^{\prime}))} \bigg)^2 d\mu_0 \quad \text{(from \ref{prop5.1} (1 and 2))}\nonumber\\
&\lesssim \lvert y -y^{\prime}\rvert_{\Gamma} ^2 \quad \text{(from \ref{prop5.1} (3)) }.\label{eq:5.14a}
\end{align}
In addition, 
\begin{align}
I_2&\lesssim {\lvert {Z}^{-\frac{1}{2}}-{{Z^{\prime}}^{-\frac{1}{2}}}\rvert^2} \quad \text{(from \ref{prop5.1} (1))}\nonumber \\
&={\bigg\lvert \frac{ Z-Z^{\prime}}{(\sqrt{Z}+\sqrt{Z^{\prime}})\cdot \sqrt{Z\cdot Z^{\prime}}} \bigg\rvert^2} \lesssim \lvert Z-Z^{\prime} \rvert^2 \label{eq:5.14}
\end{align}
Note that, in \eqref{eq:5.14}, for the second inequality, we use that fact that $Z$ and $Z^{\prime}$ are bounded above by $1$ and bounded below by a strictly positive number, $\exp(-K(\rho))>0$, see \eqref{line536}. 
Finally, note that 
\begin{align}
   \lvert Z-Z^{\prime}\rvert =&\bigg\lvert\int_{X} \exp(-\Phi(u_1,u_2;y))d\mu^{0}- \exp(-\Phi(u_1,u_2;y^{\prime}))d\mu^{0}\bigg \rvert \\
   &\lesssim \lvert y-y^{\prime}\rvert_{\Gamma} \quad \text{(from proposition \ref{prop5.1}, similar to eqn. \eqref{eq:5.14a}}).
\end{align} 

This establishes \eqref{eq:5.9} and hence concludes the proof of the theorem. 
\end{proof}
\section{Numerical Simulations}
In this section we provide numerical validation of the theoretical results using simulated experiments. The data has been generated by solving the forward problem on a finer finite-element mesh with 2129 elements while the inversion has been carried out on a coarser mesh with 549 elements to avoid the inverse crime. For illustration, we give both these meshes in Figure \ref{mesh}. In the next subsection we describe the pCN sampling approach that has been used to generate samples from the posterior distribution of diffusion and absorption parameters. Subsequently, we provide reconstructions using two different methods and compare the results. {We note that all the phantoms considered below are bi-level phantoms, with background values for diffusion, $a_{\text{back}}=1$ and for absorption, $b_{\text{back}}=0.1$ (in first and third phantoms) or $b_{\text{back}}=0.2$ (as in the second phantom below). The anomaly levels for the two parameters will be $a_{\text{fore}}=5*a_{\text{back}}$ and $b_{\text{fore}}=5*b_{\text{back}}$, respectively.} { We have also experimented with other values for the background and inhomogeneity and the simulation results are similar. {All simulations were done on an x-64 based PC with 1.60 GHz, 4-core i5 processor (8th gen)}.}
\begin{figure}[ht]
	\centering
\includegraphics[width=0.4\textwidth]{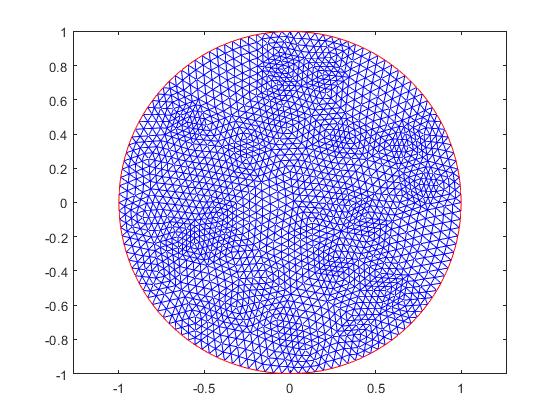}\hspace{0.01cm}
 \includegraphics[width=0.4\textwidth]{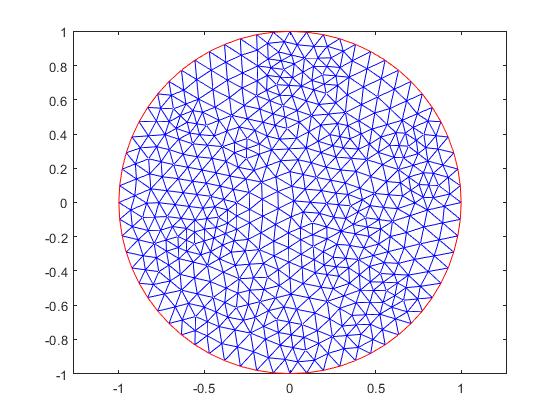}\hspace{0.01cm}
 \caption{On the left is the fine mesh used to generate the data by solving the forward problem. On the right is the coarser finite element mesh used for solving the inverse problem.}
 \label{mesh}
 \end{figure}
\subsection{Markov Chain Monte Carlo} 
In this subsection we illustrate how to obtain samples from the posterior distribution for each of the parameters using an MCMC (Markov Chain Monte Carlo) method. We recall some notation from section \ref{prior}. We denote by $\mu^0$, the joint prior measure on the space of parameters such that $\mu^0=\mu^0_1\times \mu^2_0$ where $\mu^0_i,\ i=\{1,2\}$ are Gaussian priors with zero mean and covariance $C_i,\ i=\{1,2\}$ corresponding to the parameters $a$ and $b$ respectively. The choice for each $C_i$ will be made precise later. For brevity's sake we denote the tuple $(u_1,u_2):=\rm{f}$, where $H_1(u_1)=a,$ and $H_2(u_2)=b$. We fix the following notation: for any random function (variable) $X$, we denote by $X^*$ the corresponding discretization of $X$ relative to the finite element mesh used for the inverse problem. We will now describe the pCN sampling method (see, {\cite{cotter_13}}) used to generate posterior samples in this work. First, define the log-likelihood function for the DOT problem for the $k-$th sample, $f^*_k:=(u^*_{1_k},u^*_{2_k})$ where $k\in \{1, 2, ..., N\}$,
\begin{equation}
L(f^*_k):= -\frac{1}{2\sigma_{noise}^2} ||y - {G}^{*}(f_k)||_{F}^2,
\end{equation}
where $y$ is the measured data, ${G}^{*}(f_k)$ refers to the solution of the forward problem with parameter $f^*_k$, and $||\cdot||_F^2$ is the squared Frobenius norm. 

\noindent Set $f_0^*=(a^*_0,b^*_0)$ where $a_0^*$ and $b_0^*$ are equal to the typical background value of $a^*$ and $b^*$ respectively. Then we iterate the following steps until the desired number of samples is obtained:
\begin{enumerate}
\item Draw $\Upsilon\sim\mu^0$ and determine the proposals:
$f^*_{{\text{Prop.}}} :=\sqrt{1-2\Delta} f^*_k + \sqrt{2\Delta} \Upsilon$, for {$\Delta>0$}\\
\item Set $f^*_{k+1} = \begin{cases}
f^*_{\text{Prop.}} & \text{with probability } \min(1, L(f^*_{\text{Prop.}}) - L(f^*_k)), \\
f_k^* & \text{else}.
\end{cases}  $ 
\end{enumerate}

Notice that one requires a significant amount of burn-in iterations before reaching the high probability areas of the posterior distribution, such that the samples drawn will be representative of the posterior distribution. For fast convergence, it is recommended to select the acceptance rate $q>0$ so that after burn in approximately $25\%$ of the proposed samples are accepted. In practice, this problem is not trivial. Hence, we adapt $\Delta$ during the burn-in time to stabilize the acceptance rate close to $0.25$. The pCN algorithm  used in this work, has robust convergence properties with respect to the (very high) number of dimensions used in the finite dimensional approximation of an infinite dimensional parameter space of functions to which $a(x)$ and $b(x)$ belong. We refer interested readers to \cite{cotter_13} for details on how the pCN approach is well-suited for implementation of non-parametric Bayesian estimation methods in function-space settings.

Now we describe the choice for the covariance (kernel) functions $C_i, \ \{i=1,2\}.$ To obtain proper regularized reconstructions, we choose the covariance matrices to be defined by a Matérn kernel, with parameters $\nu_i$ and $\ell_i$ chosen heurestically. The Matérn kernels $C_i$ are given by 
\begin{equation}
C_i=C_{\nu_i,\ell_i}(d):=\frac{2^{1-\nu_i}}{\Gamma(\nu_i)}\left( \frac{d\sqrt{2\nu_i}}{\ell_i}\right)^{\nu_i} K_{\nu_i}\left( \frac{d\sqrt{2\nu_i}}{\ell_i}\right),
\end{equation}
where $d:=||x_k-x_j||_2$ is the euclidean distance between two PDE mesh centroids and $K_{\nu_i}$ is the modified Bessel function of the second kind. For reconstructions shown below, we choose $l_i=0.3, \{i=1,2\}$ while $\nu_1=4$ and $\nu_2=5$. The question of properly choosing these hyperparameters $\nu_i$ and $l_i$ is a delicate one; and seems to depend on the size of inclusions. {{Taking inspiration from the finite-dimensional settings, one may attempt a heirarchial Bayesian model to determine the ideal choice of these hyperparameters. Extending this idea to a non-parametric infinite-dimensional setting has garnered an increased interest, see \cite{sui_24,dunlop_17,dunlop2019} . However, we leave the choice of hyperparameter determination in our context for a future work.}}  The value of $\Delta$ was chosen as $0.0025$ to begin with and adapted during burn-in time so as to stabilize the acceptance ratio to be around $25\%$. The MCMC chain was allowed to run for $300,000$ iterations with $50,000$ samples discarded as burn-in. This forms one experiment with each phantom. Reconstruction results based on one such experiment for each phantom are given in the next subsection. {In our diagnostics for the MCMC chain, we have looked at the traceplots and it shows good mixing properties that can be visualized in Figure \ref{mix}.} Ten experiments each at relative noise levels of $1\%-4\%$ were done to test the robustness of reconstructions to noise. Results from this study are given in the subsequent subsection.
\begin{figure}[ht]
\centering
{\includegraphics[width=0.3\textwidth]{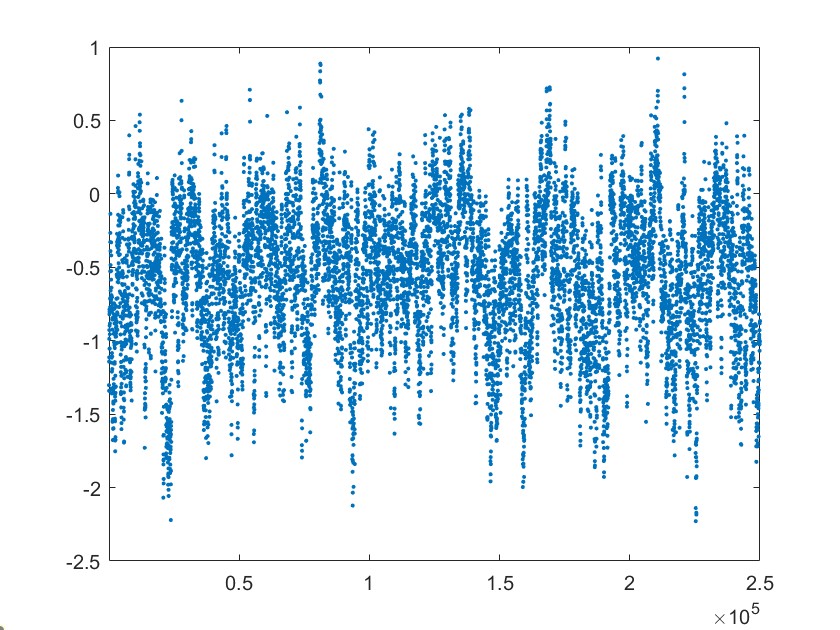}}
{\includegraphics[width=0.3\textwidth]{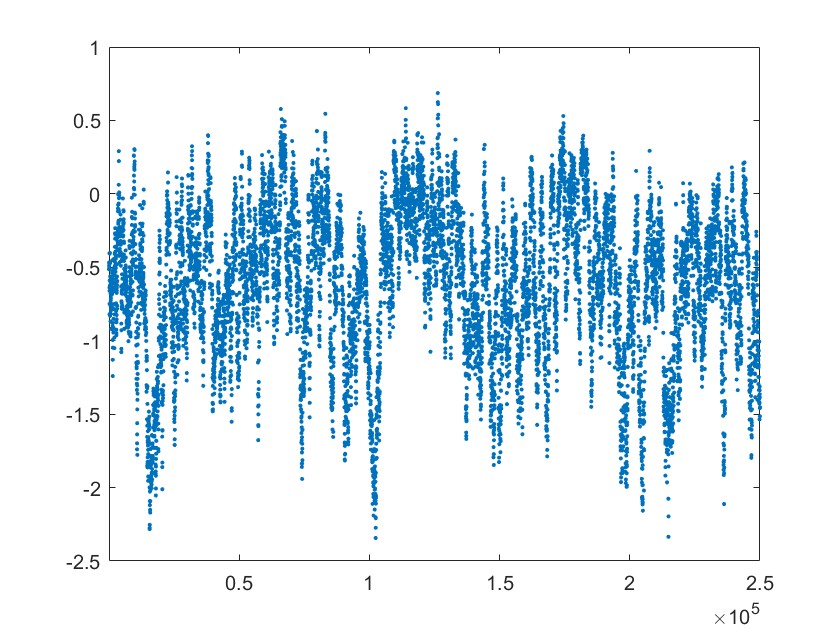}}
    {\includegraphics[width=0.3\textwidth]{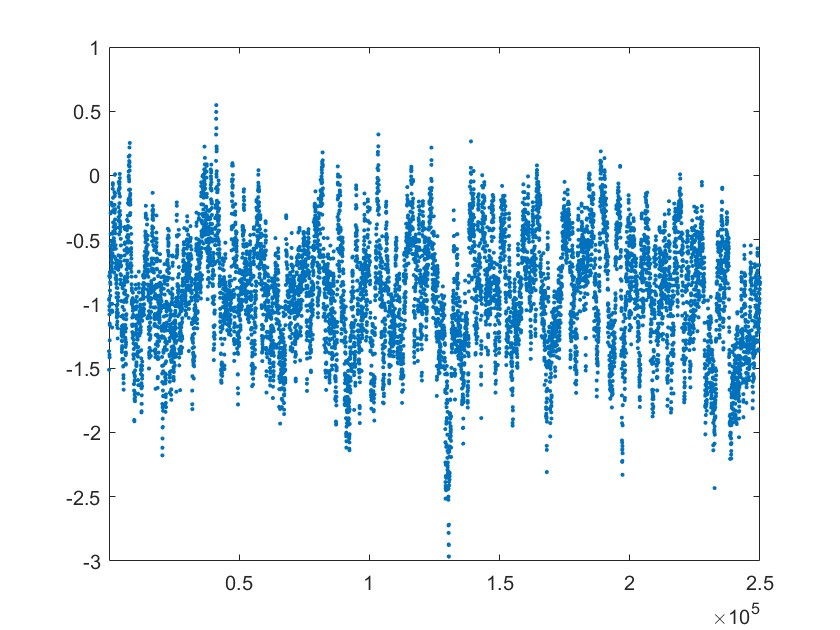}}
 {\includegraphics[width=0.3\textwidth]{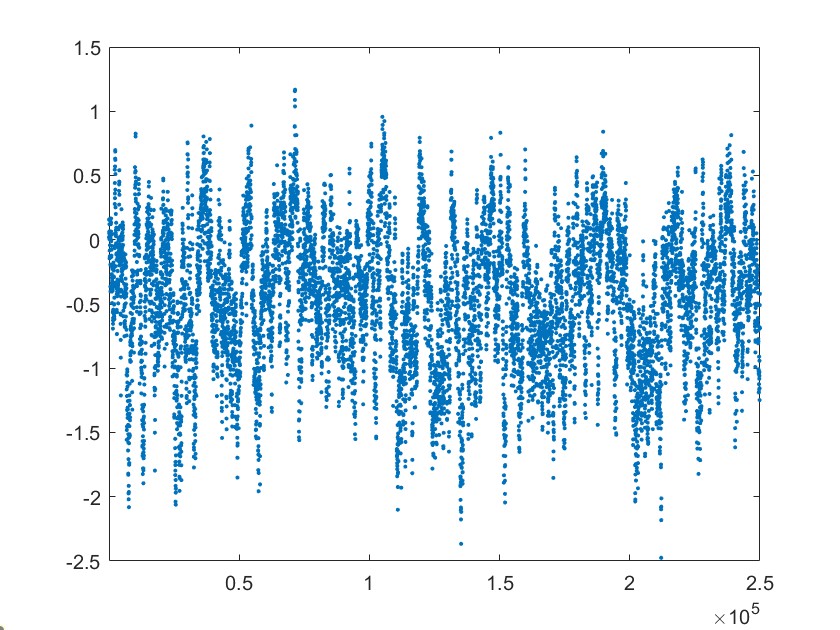}}{\includegraphics[width=0.3\textwidth]{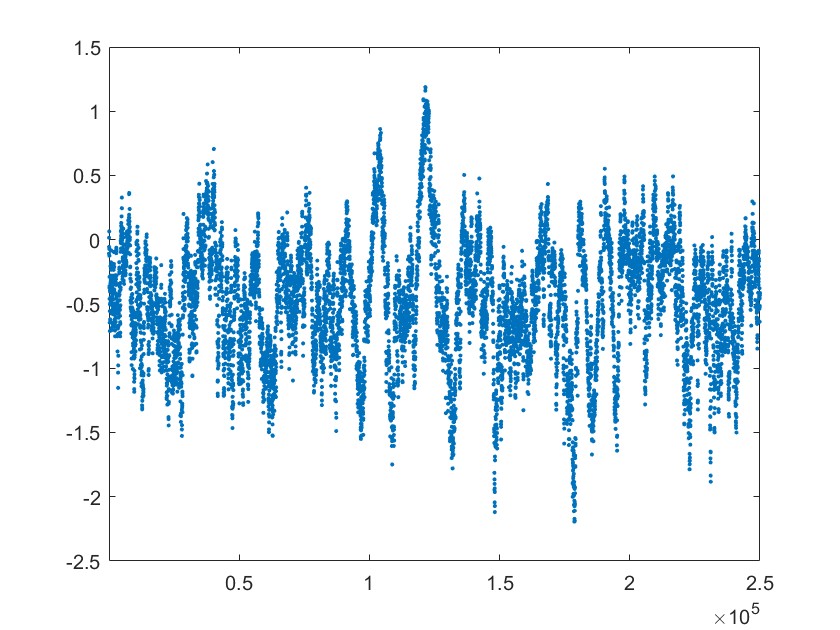}}
    {\includegraphics[width=0.3\textwidth]{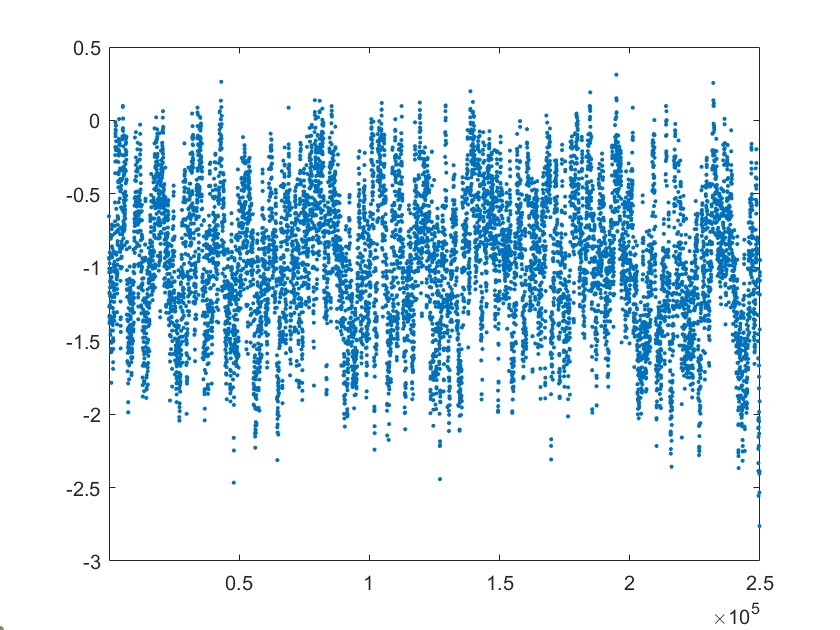}}
\caption{Traceplots for parameter values as a function of MCMC draws after burn-in: Row 1 corresponds to the traceplots for the value of $u_1$ (diffusive level-set function) in some randomly chosen triangle in the reconstruction mesh, Row 2 corresponds to the traceplots for the value of $u_2$ (absorptive level-set function) in some randomly chosen triangle in the reconstruction mesh. Columns 1, 2 and 3 correspond respectively to the the three different phantom geometries as in Figure \ref{mod1}.}
\label{mix}
\end{figure}

 \begin{remark}
 {In our simulations, we noticed some instabilities when only using the likelihood term. This is because the function $H(\cdot)$ does not see a numerical difference between small or large positive and negative values of $u_1$ and $u_2$. This instability can be easily circumvented by adding a penalty term to the likelihood $L_{reg}(f^*_k)=L(f^*_k) + \alpha_1 ||u_1||^2 + \alpha_2 ||u_2||^2$, where $\alpha_1$ and $\alpha_2$ are regularization parameters. {{We believe that the addition of this regularization term in the likelihood has the overall effect of using a Gaussian hybrid prior  to handle unknowns that cannot be well modeled by standard Gaussian
measures; this has been proposed in the literature before, see \cite{yao_16}.}}}
     %In our simulations, we noticed that better reconstructions were obtained, if we added an additional penalty term in the likelihood which penalizes a draw wherein the samples  for $u_1$ and $u_2$ take highly negative or positive values. This indicates that for level-set methods, using truncated Gaussian random variables may be better for implementing the algorithm. 
\end{remark}
\subsection{Simulation results}
We present here some reconstruction results obtained for bi-level phantoms with several different geometries. In each of the phantoms, the background regions are shown in blue and the anomalous regions are shown in red. Let us define the functions $H_i$ for $i=\{1,2\}$ in the following way:
\begin{align*}
{(H_1(u_1))(x)=\begin{cases} a_{\text{fore}}; \quad &\text{if }u_1(x)>0\\
a_{\text{back}}; \quad &\text{else}
\end{cases}}\text {  and} \quad {(H_2(u_2))(x)=\begin{cases} b_{\text{fore}}; \quad &\text{if }u_2(x)>0\\
b_{\text{back}}; \quad &\text{else}
\end{cases}}
\end{align*}
\begin{remark}{In actual practice, we use mollified versions of $H_1$ and $H_2$, see \cite [eqn (5.11)]{agh_11}.}
\end{remark}
\par After posterior samples of $u_1$ and $u_2$ are obtained, one can choose to reconstruct the actual parameters $a$ and $b$  using any of the following two methods.

\subsection{Method 1: Bi-Level Reconstruction}

Reconstruct using $H_i(\overline{u_i})$ for $i=1,2$, where $i=1$ reconstructs diffusion ($a(x)$) and $i=2$ reconstructs absorption parameter ($b(x)$). We also note that $\overline{u_i}$ is the mean of the posterior samples. This gives us sharp bi-level reconstructions as shown in Figure \ref{mod1}. In addition to the reconstructions of the ground-truth anomalous diffusive and absorptive regions, we also provide the corresponding credible regions in the coarse (reconstruction) mesh. {Recall, that in a 1-D setting, a Bayesian credible interval of size $1-\alpha$ is an interval $(c, d)$ such that $P(c \leq \tau \leq d | \{X_i\}_{i=1}^n) = 1-\alpha$, where $\tau$ is random variable and $\{X_i\}_{i=1}^n$ given samples of $\tau$. In this article, by a region of $1-\alpha$ credibility we refer to the 2-D extension of a Bayesian credible interval. A region of $1-\alpha$ credibility is defined on a set of non-overlapping triangles in the reconstruction mesh $\Omega_{discrete}$ covering $\Omega$, if for each triangle $t \in \Omega_{discrete}$ mesh we have an interval $(c(t), d(t))$ with $P(c(t) \leq \tau(t) \leq d(t) | \{X_i(t)\}_{i=1}^n) = 1-\alpha$. Here $\tau(t)$ is a random variable in the corresponding triangle. } The corresponding credible regions are shown below in Figure \ref{UQ}.

\begin{figure}
     \begin{center}
     \begin{tabular}{ c c  c c  }
     %\toprule
      Truth `$a$' & Recon. `$a$' & Truth `$b$' & Recon `$b$' \\ 
{\includegraphics[width=0.22\textwidth]{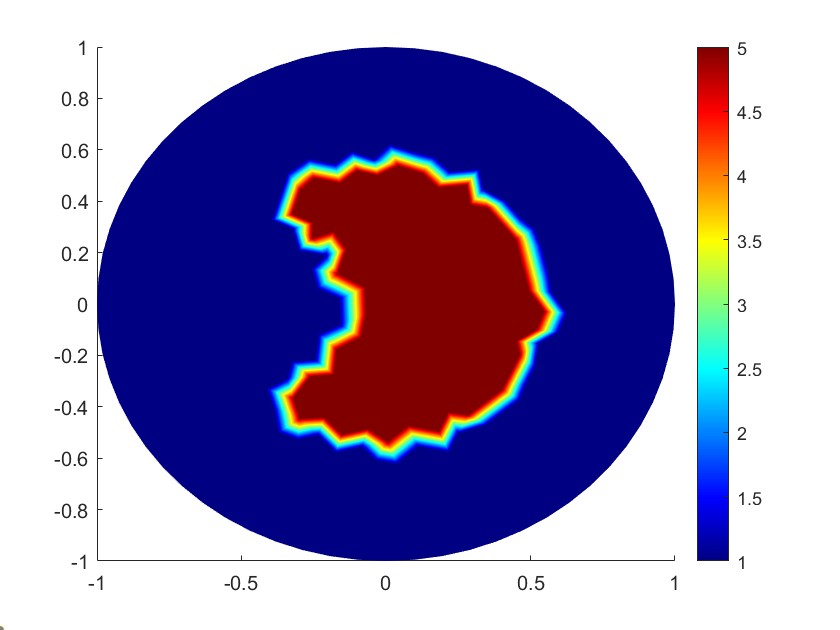}}&
    {\includegraphics[width=0.22\textwidth]{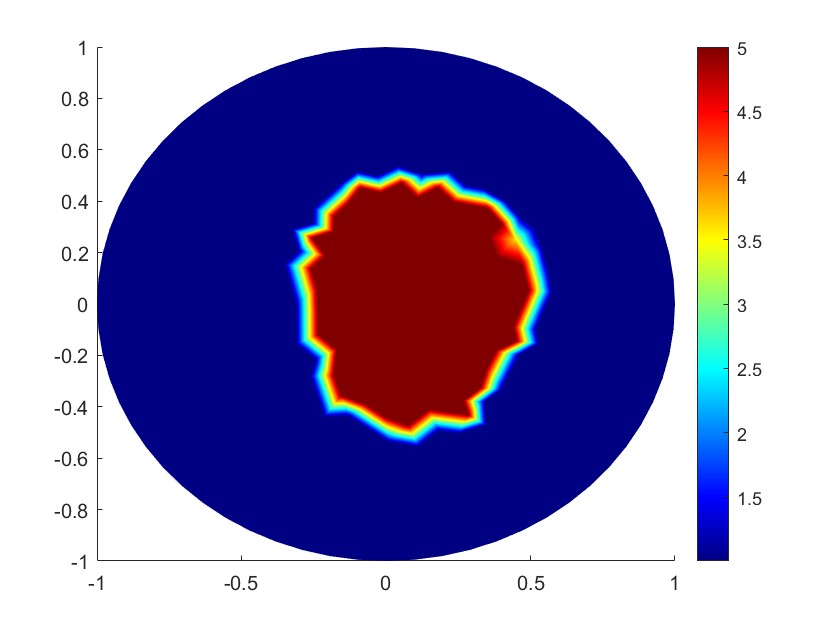}}&
  {\includegraphics[width=0.22\textwidth]{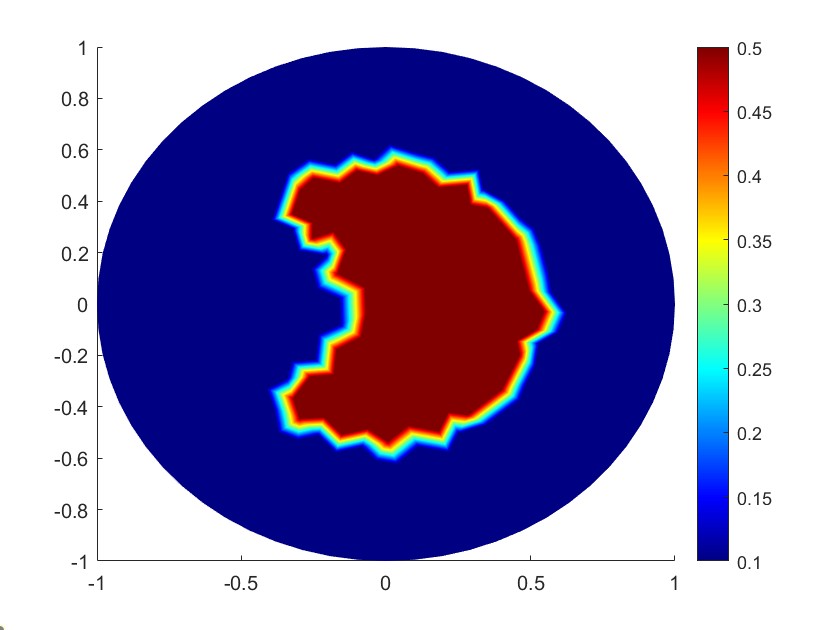}}&
    {\includegraphics[width=0.22\textwidth]{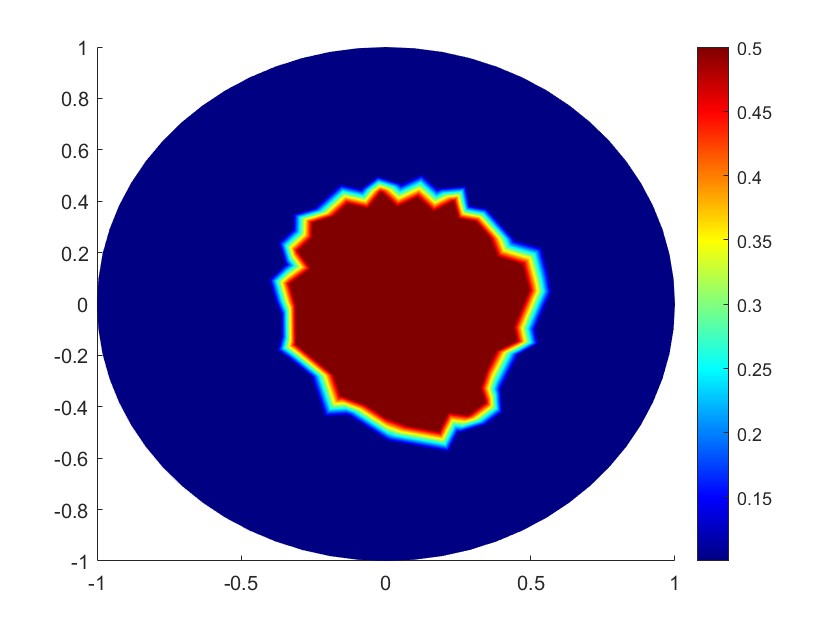}}\\
    {\includegraphics[width=0.22\textwidth]{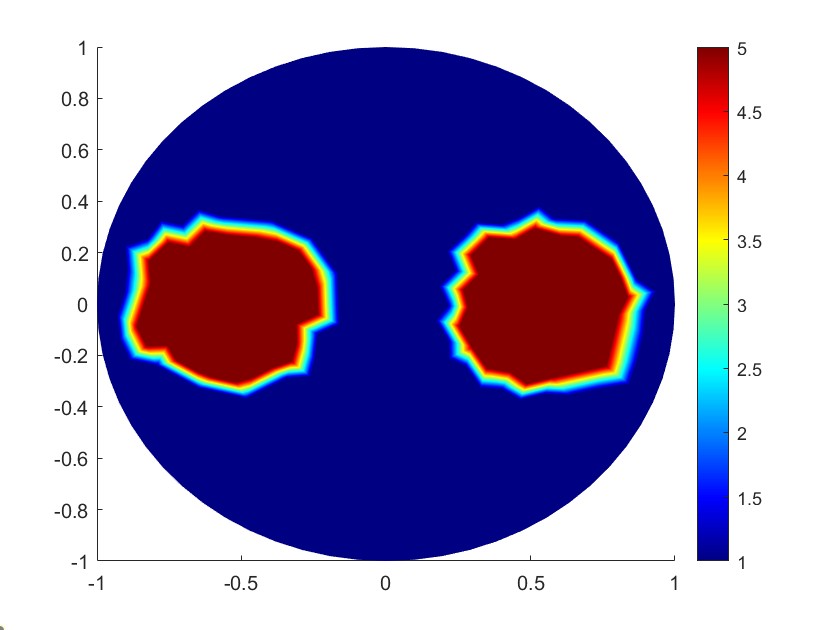}}&
    {\includegraphics[width=0.22\textwidth]{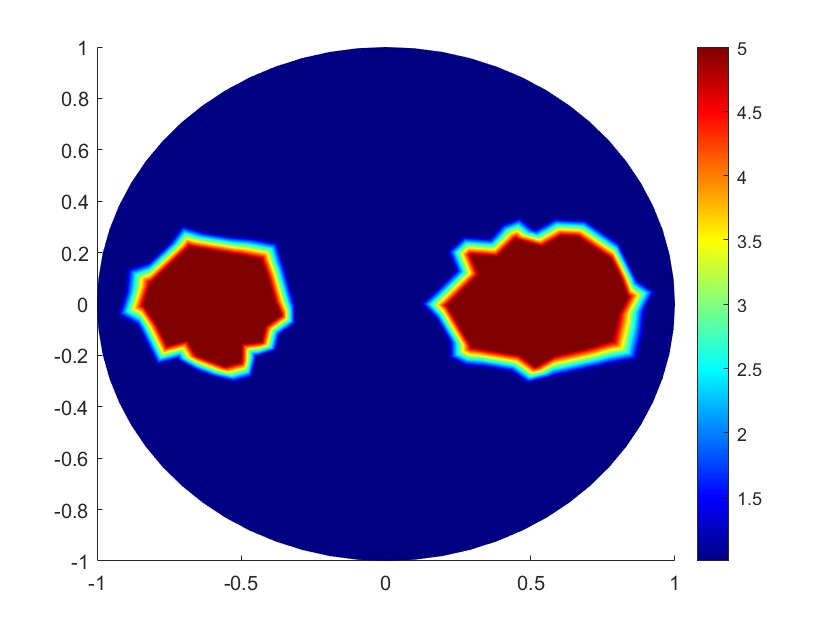}}&
  {\includegraphics[width=0.22\textwidth]{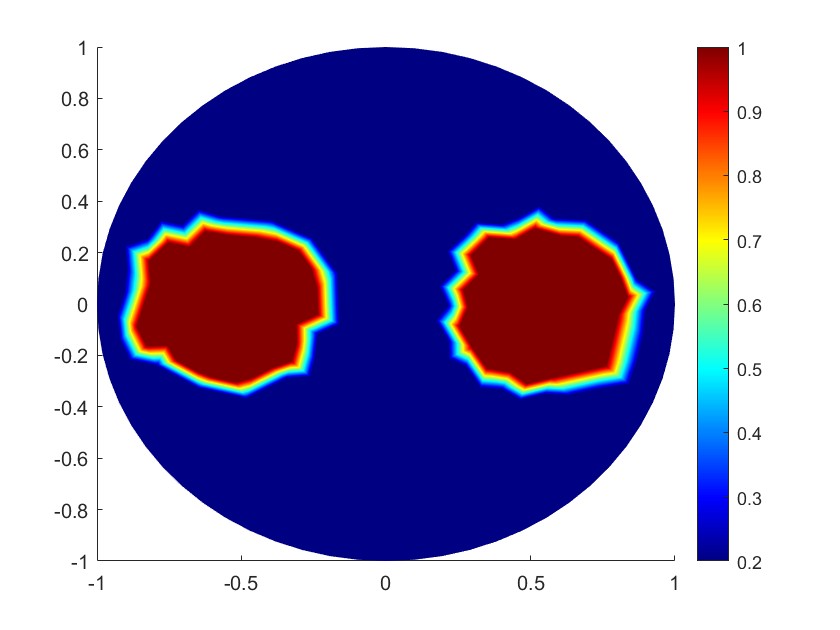}}&
    {\includegraphics[width=0.22\textwidth]{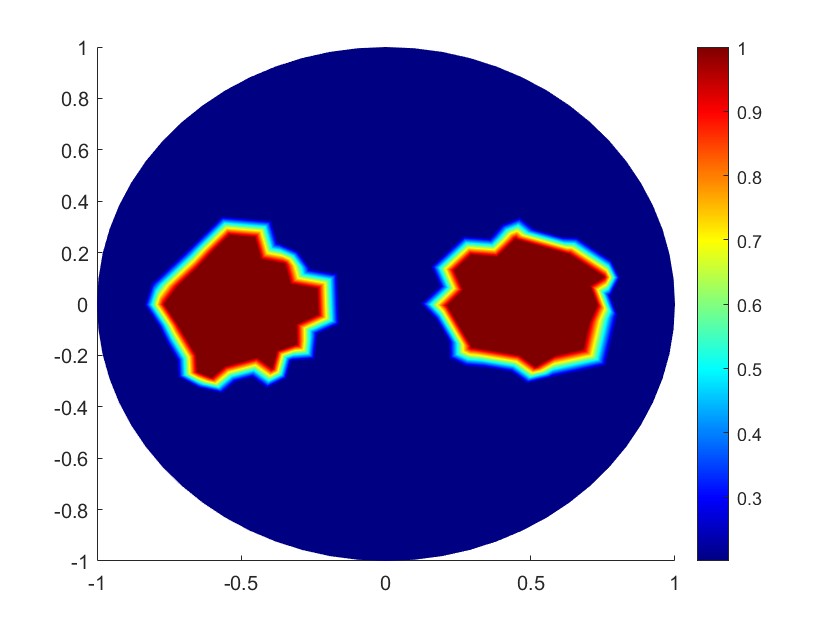}}\\
    {\includegraphics[width=0.22\textwidth]{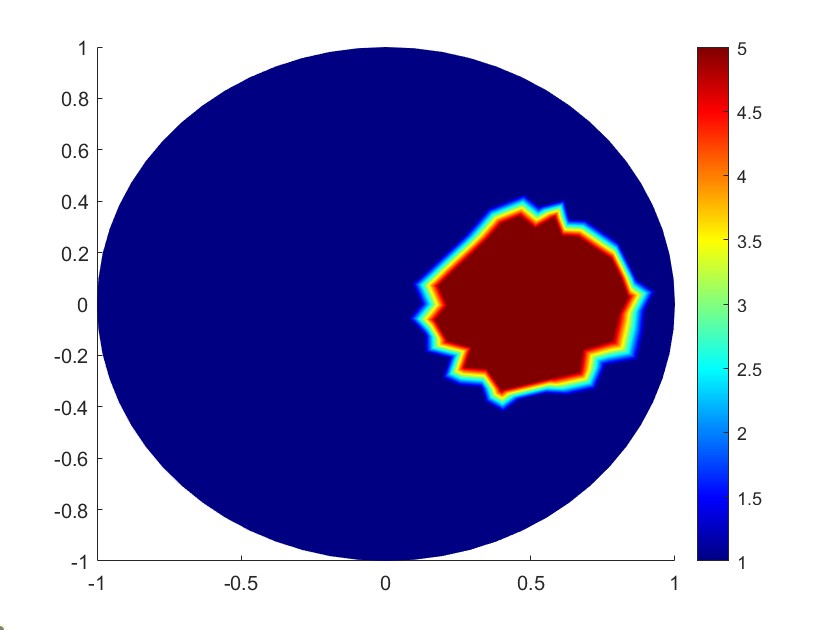}}&
    {\includegraphics[width=0.22\textwidth]{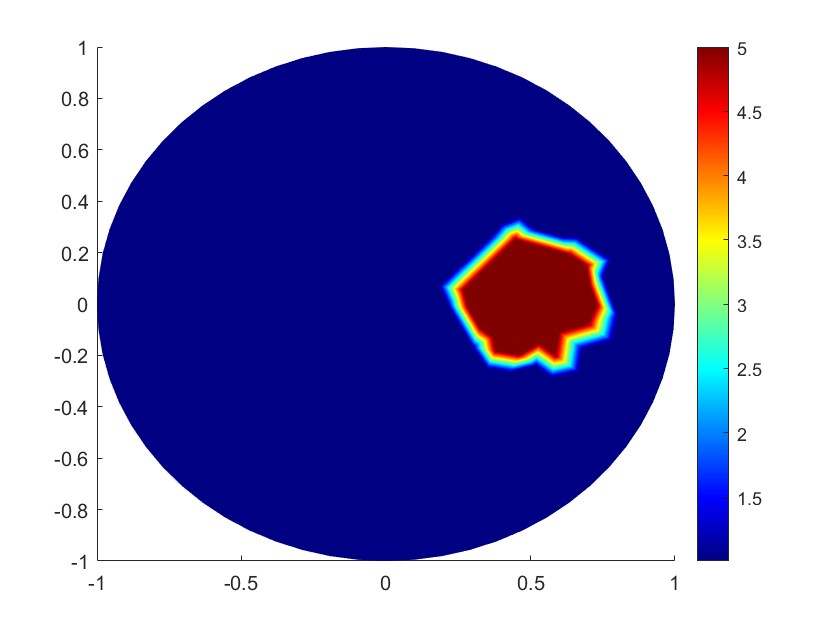}}&
  {\includegraphics[width=0.22\textwidth]{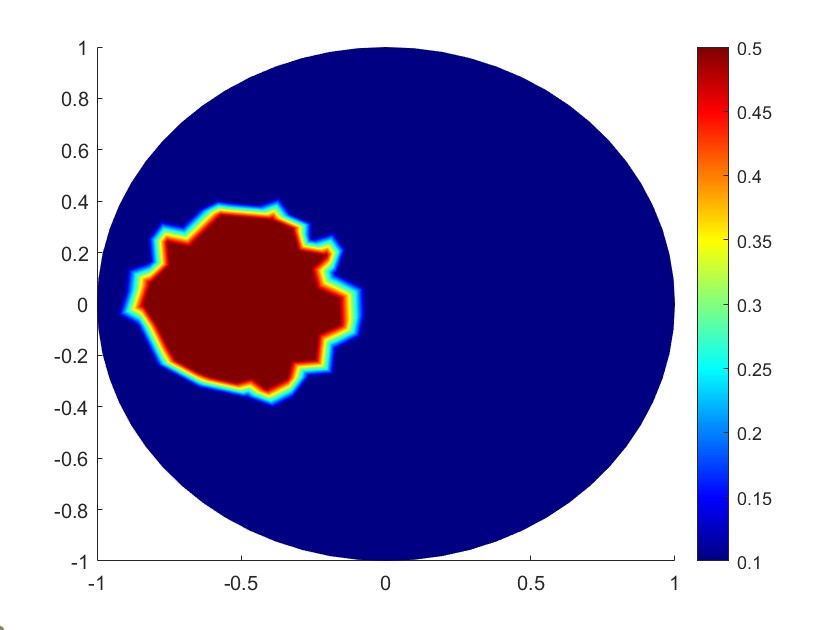}}&
    {\includegraphics[width=0.22\textwidth]{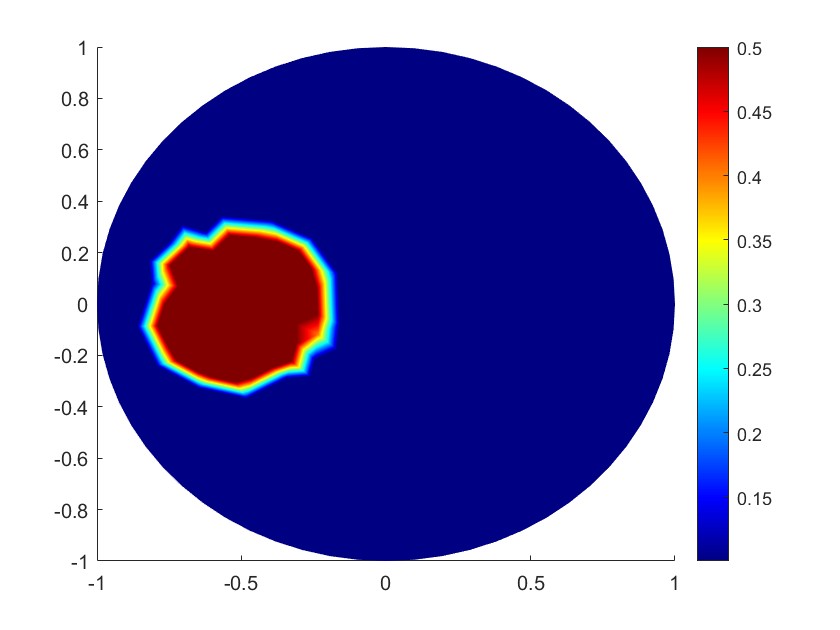}}

      \\ %\bottomrule
      \end{tabular}

\end{center}
     % \end{table}
            
\caption{Reconstructions using Method 1 (Bi-Level Reconstruction): First and third columns have the plots of { projections of true diffusion and absorption parameters on the FEM basis}. In rows 1 and 2, the diffusive and absorptive regions coincide. In row 3, we have a phantom in which the diffusive and absorptive regions are at separate places. In column 2, we have the corresponding reconstructions of the diffusive parameter and in column 4, we have the reconstructions of the absorptive parameter. The data has been collected at $2\%$ relative noise. }
 \label{mod1}
      \end{figure}

%  \begin{figure}[ht]
% \centering
% {\includegraphics[width=0.24\textwidth]{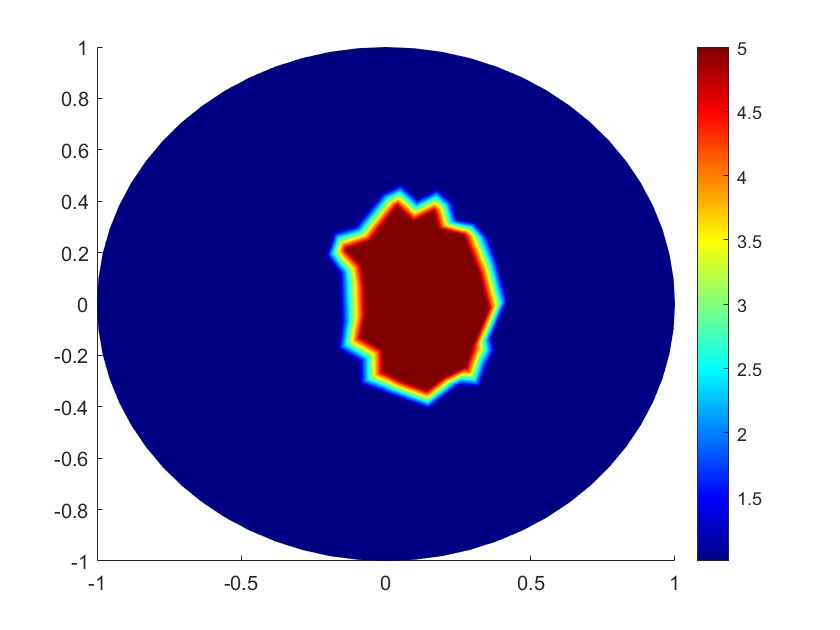}}
%     {\includegraphics[width=0.24\textwidth]{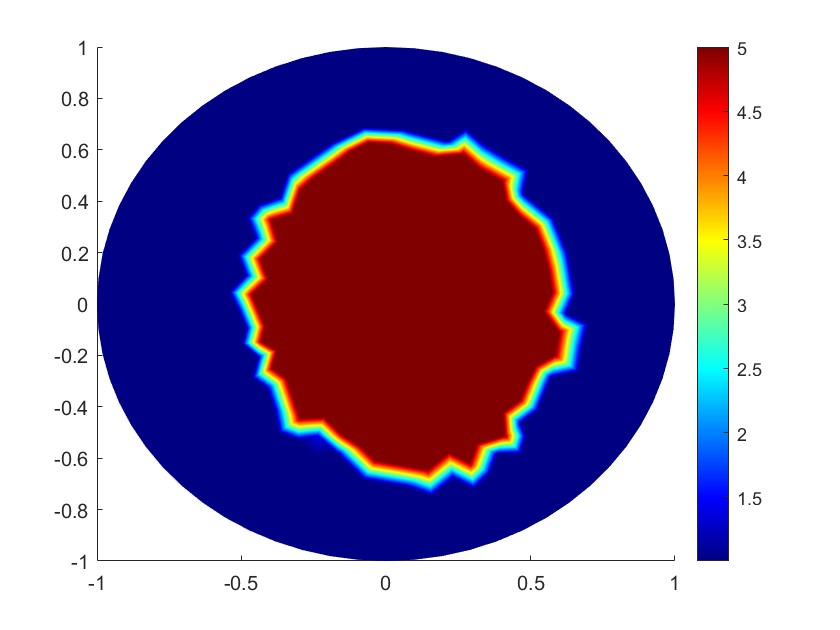}}
%   {\includegraphics[width=0.24\textwidth]{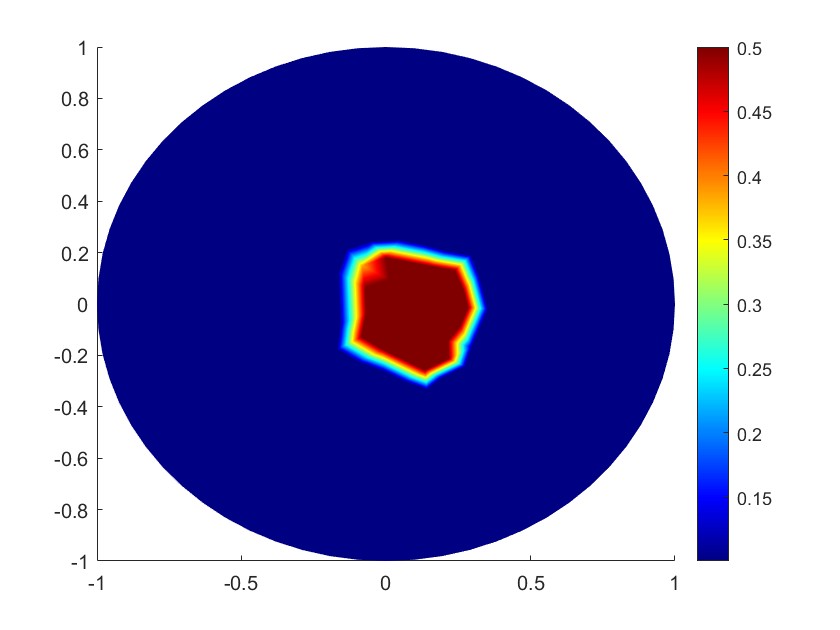}}
%     {\includegraphics[width=0.24\textwidth]{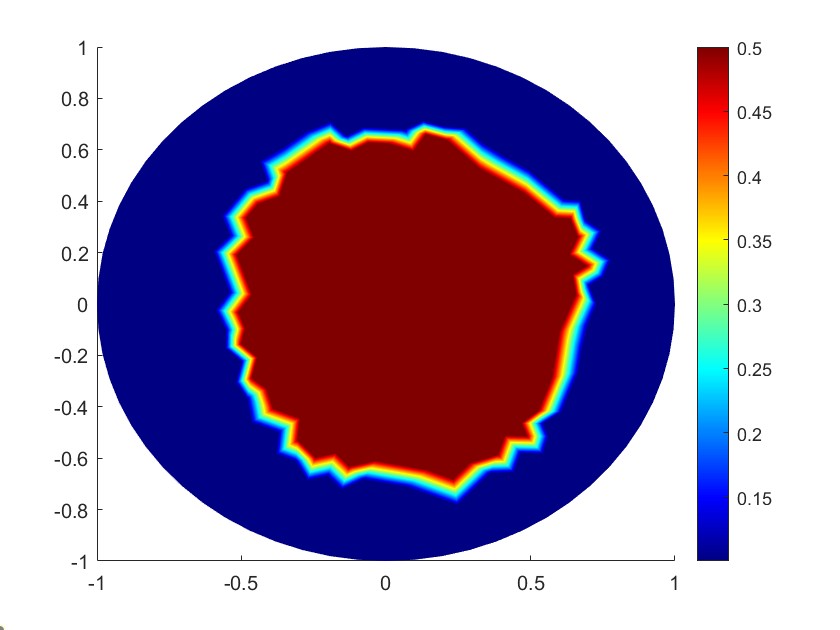}}\\
%     {\includegraphics[width=0.24\textwidth]{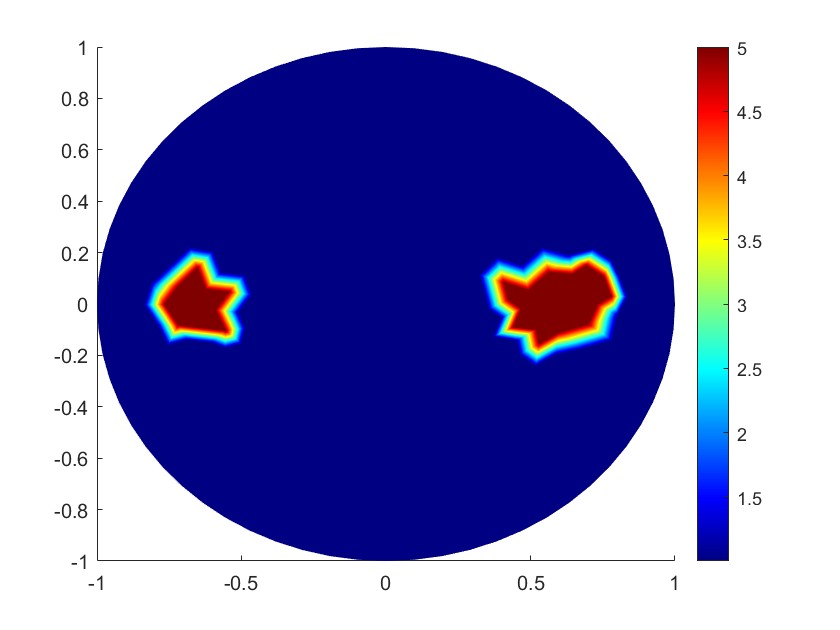}}
%     {\includegraphics[width=0.24\textwidth]{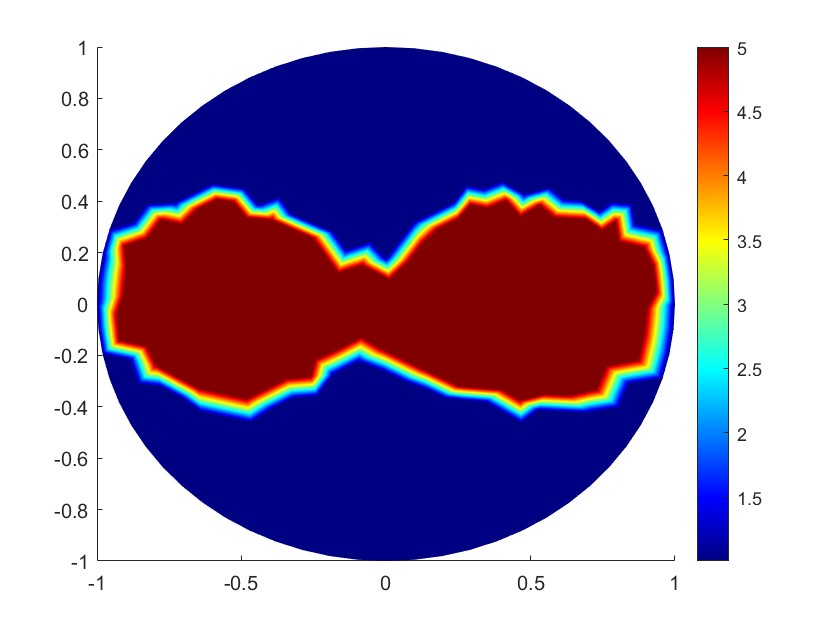}}
%   {\includegraphics[width=0.24\textwidth]{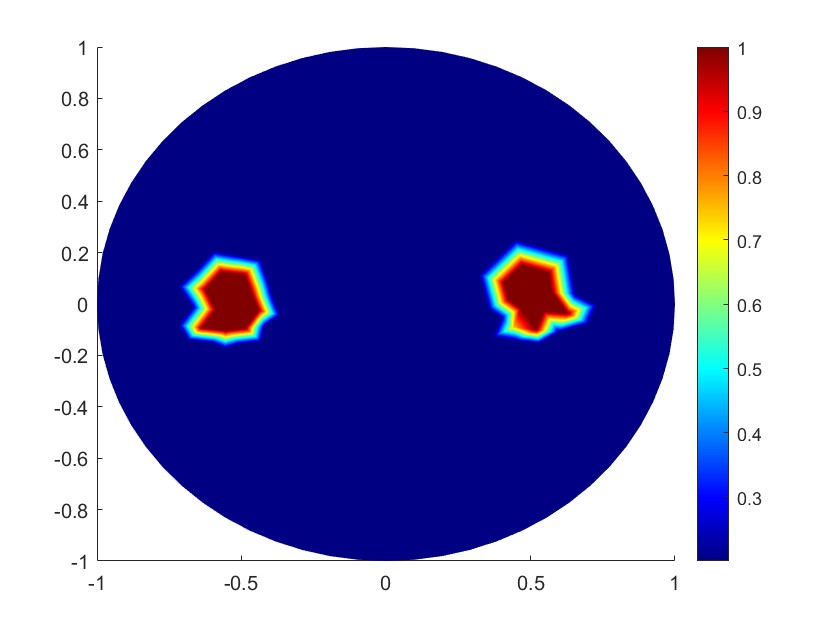}}
%     {\includegraphics[width=0.24\textwidth]{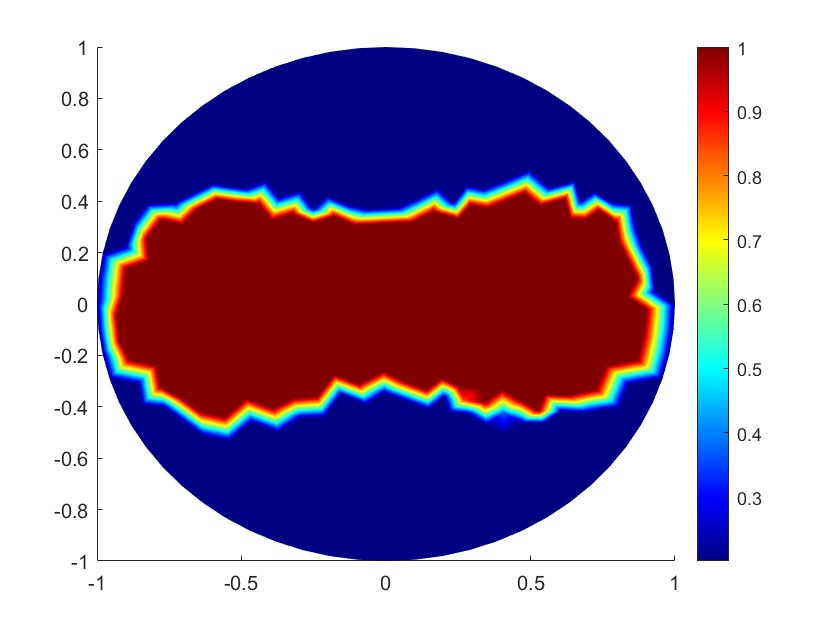}}
%     {\includegraphics[width=0.24\textwidth]{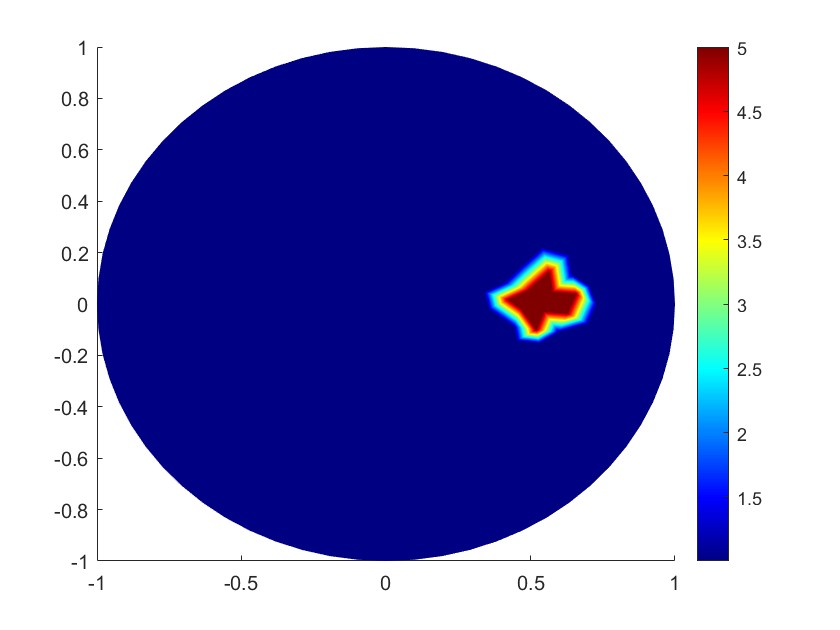}}
%     {\includegraphics[width=0.24\textwidth]{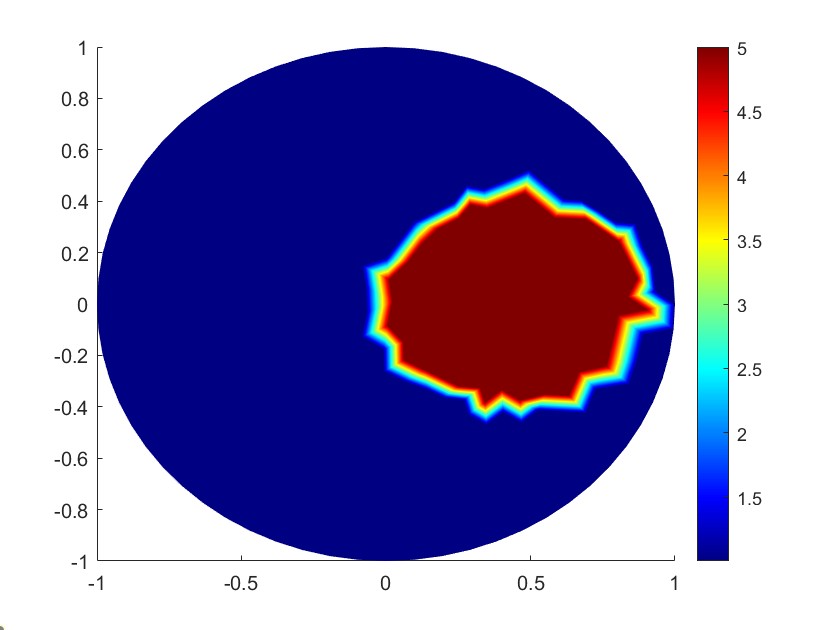}}
%   {\includegraphics[width=0.24\textwidth]{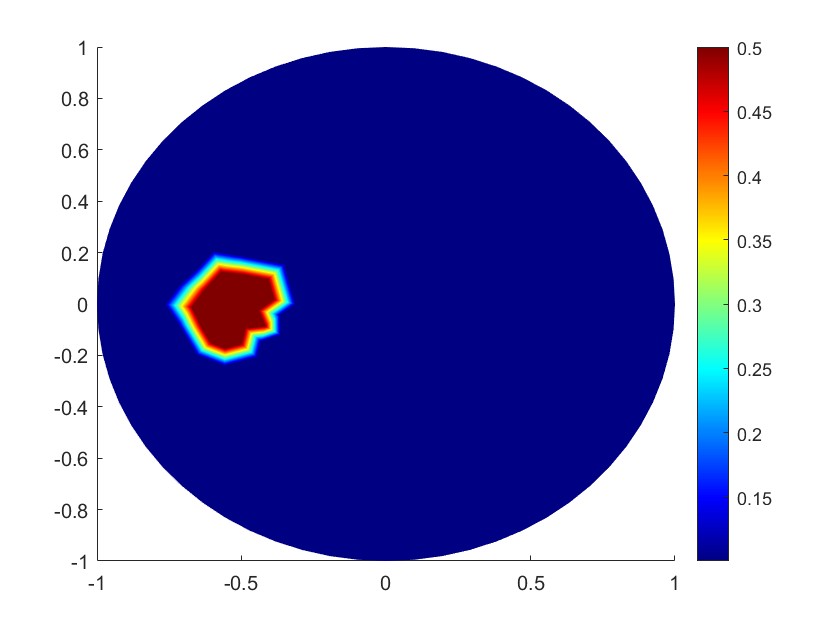}}
%     {\includegraphics[width=0.24\textwidth]{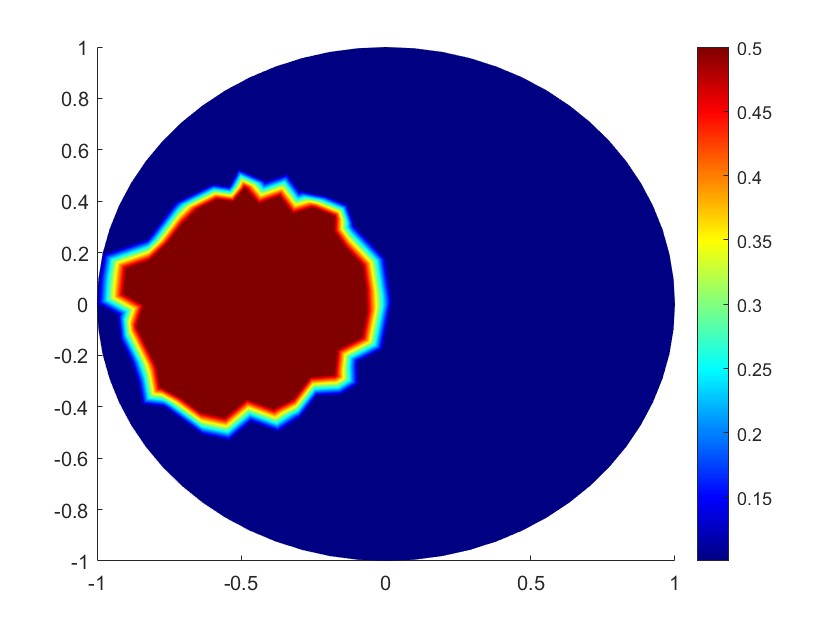}}
% \caption{Uncertainty quantification in Method 1: The rows correspond to the phantom geometries as in Figure \ref{mod1}. The first and third column in each row shows the lower Bayesian credible region ($15\%$) for the diffusive and absorptive regions respectively. The second and the fourth column show the upper credible region ($85\%$) for the diffusive and absorptive regions respectively.}
% \label{UQ}
% \end{figure}

\begin{figure}


     \begin{center}
     \begin{tabular}{ c c  c c  }
     %\toprule
      CR 15\% $a$ & CR 85\%  $a$ & CR 15\%  $b$ & CR 85\%  $b$ \\ 
    {\includegraphics[width=0.22\textwidth]{Images/moon/moon_d_lo.jpg}}&
    {\includegraphics[width=0.22\textwidth]{Images/moon/moon_d_hi.jpg}}&
  {\includegraphics[width=0.22\textwidth]{Images/moon/moon_mu_lo.jpg}}&
    {\includegraphics[width=0.22\textwidth]{Images/moon/moon_mu_hi.jpg}}\\
    {\includegraphics[width=0.22\textwidth]{Images/two/2_circ_D_lo.jpg}}&
    {\includegraphics[width=0.22\textwidth]{Images/two/2_circ_D_hi.jpg}}&
  {\includegraphics[width=0.22\textwidth]{Images/two/2circ_mu_lo.jpg}}&
    {\includegraphics[width=0.22\textwidth]{Images/two/2_circ_mu_hi.jpg}}\\
    {\includegraphics[width=0.22\textwidth]{Images/separate/separate_D_lo.jpg}}&
    {\includegraphics[width=0.22\textwidth]{Images/separate/separate_D_hi.jpg}}&
  {\includegraphics[width=0.22\textwidth]{Images/separate/separate_mu_low.jpg}}&
    {\includegraphics[width=0.22\textwidth]{Images/separate/separate_mu_hi.jpg}}

      \\ %\bottomrule
      \end{tabular}

      \end{center}
 \caption{Uncertainty quantification in Method 1 (Bi-Level Reconstruction): The rows correspond to the phantom geometries as in Figure \ref{mod1}. The first and third column in each row shows the lower Bayesian credible region ($15\%$) for the diffusive and absorptive regions respectively. The second and the fourth column show the upper credible region ($85\%$) for the diffusive and absorptive regions respectively.}
 \label{UQ}
      \end{figure}

\subsection{Method 2: Continuous Reconstruction}
Reconstruct using $\overline{H_i(u_i)}$ where $\overline{H_i(u_i)}$ refers to the mean of $H_i(u_i)$ over posterior samples of $u_i$. These reconstructions will, by design, not produce bi-level reconstructions. However, they can be used in conjunction with Method 1 to form a better picture of the location of the anomalous regions. For method 2, we also calculate standard error of the samples, {to quantify uncertainty in our estimates}. These have been shown in Figure \ref{mod2}. {We notice that regions of greater uncertainty in the reconstruction lie close to the edge of the phantom reconstructions.}

 \begin{figure}
     \begin{center}
     \begin{tabular}{ c c  c c c c }

       Truth $a$& Recon. $a$ & Error $a$ & Truth $b$&  Recon. $b$  & Error $b$\\ 
{\includegraphics[width=0.14\textwidth]{Images/moon/d_moon_recon_mesh.jpg}}&
{\includegraphics[width=0.14\textwidth]{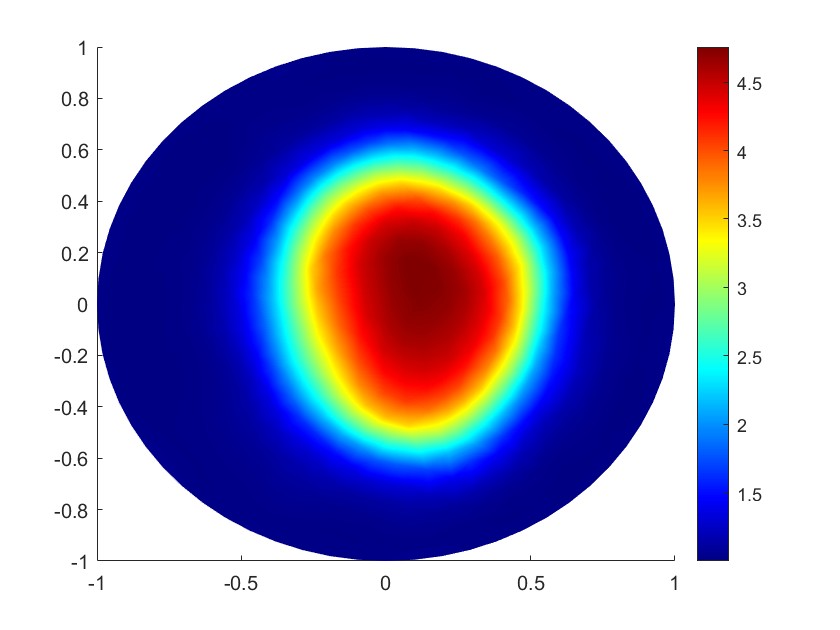}}&
    {\includegraphics[width=0.14\textwidth]{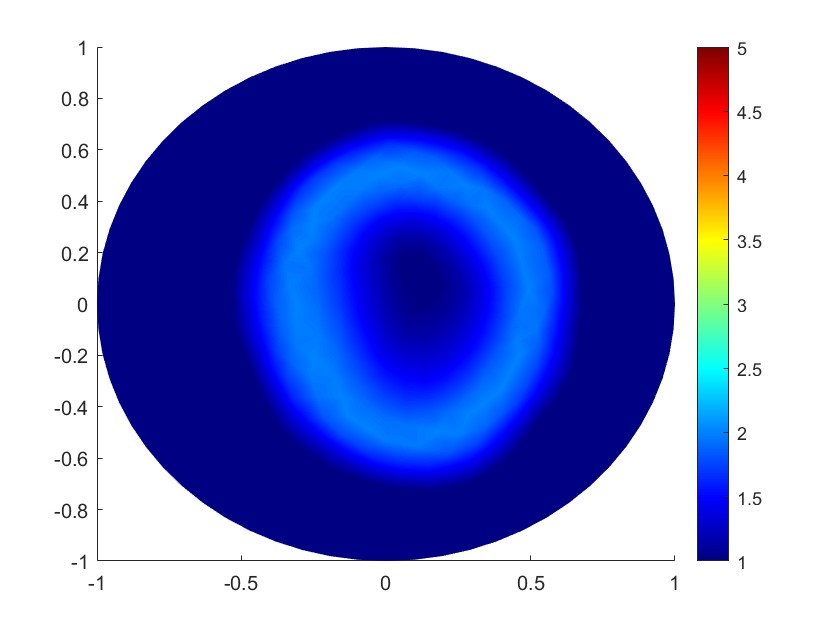}}&
 {\includegraphics[width=0.14\textwidth]{Images/moon/mu_moon_recon_mesh.jpg}}&{\includegraphics[width=0.14\textwidth]{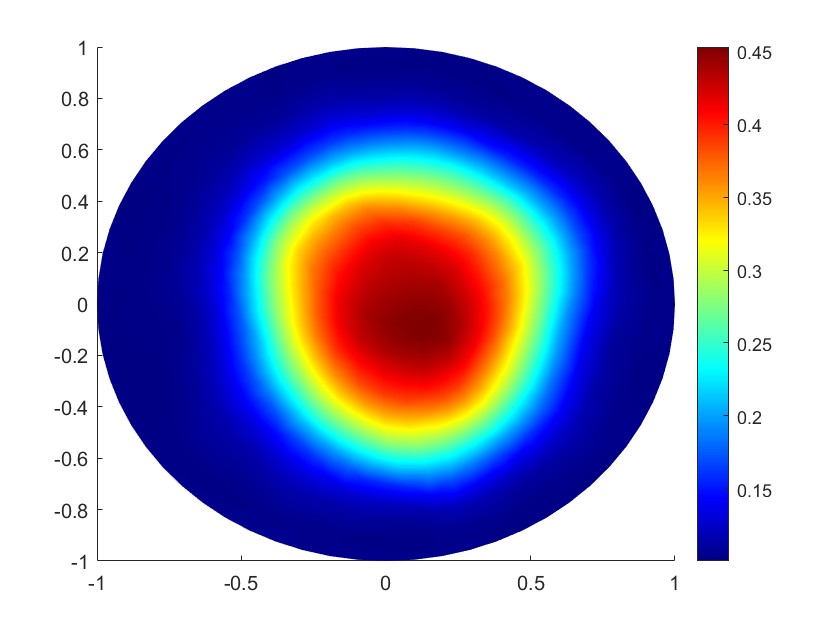}}&
    {\includegraphics[width=0.14\textwidth]{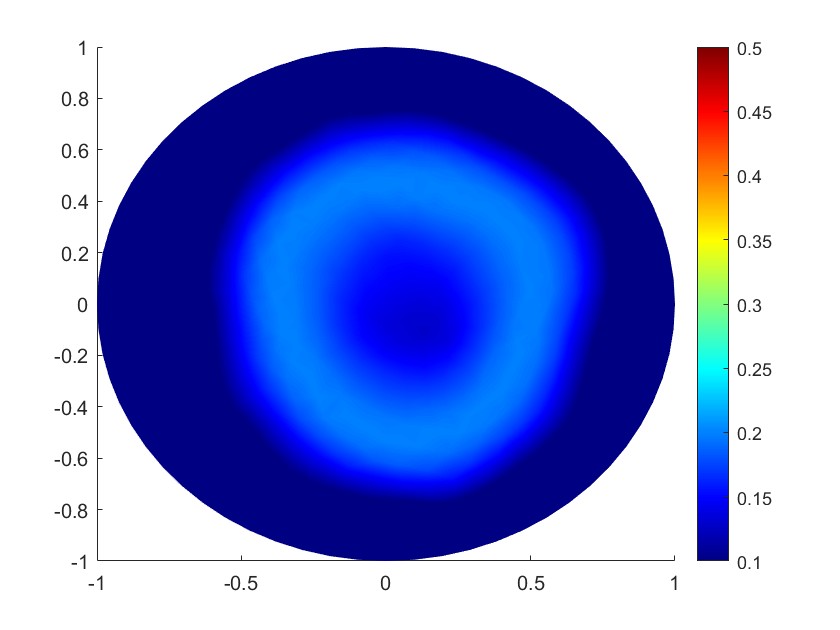}}\\
  {\includegraphics[width=0.14\textwidth]{Images/two/D_2_circ_mesh_t.jpg}} & {\includegraphics[width=0.14\textwidth]{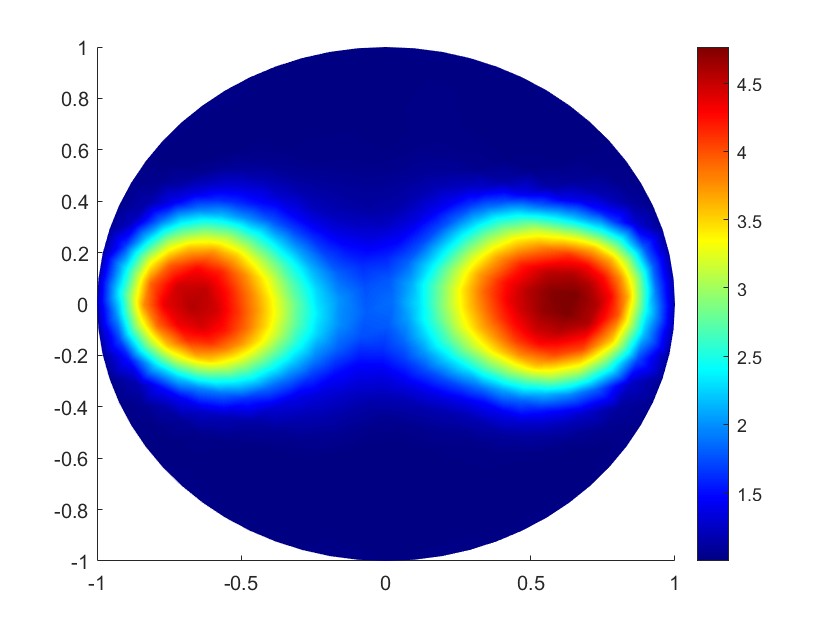}}&
    {\includegraphics[width=0.14\textwidth]{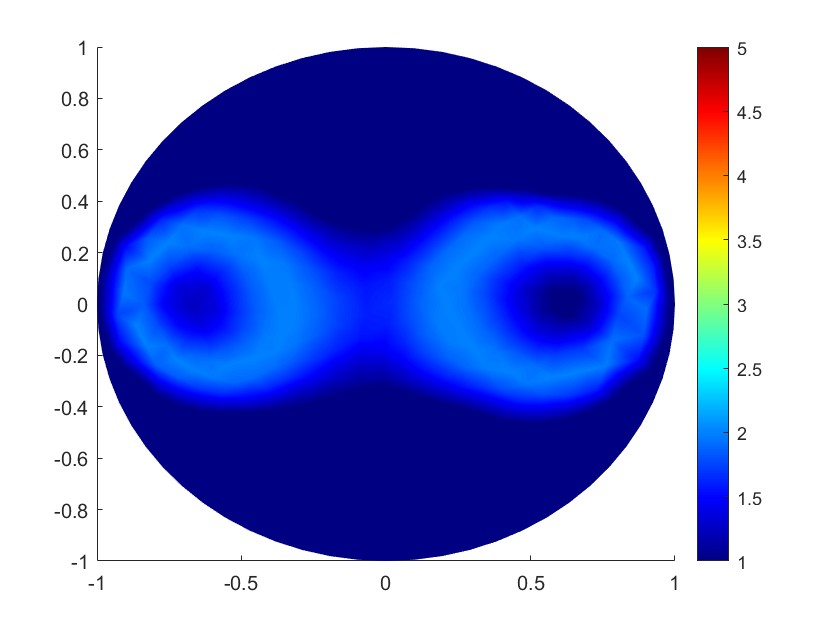}}&
{\includegraphics[width=0.14\textwidth]{Images/two/mu_mesh_two_circ.jpg}} &  {\includegraphics[width=0.14\textwidth]{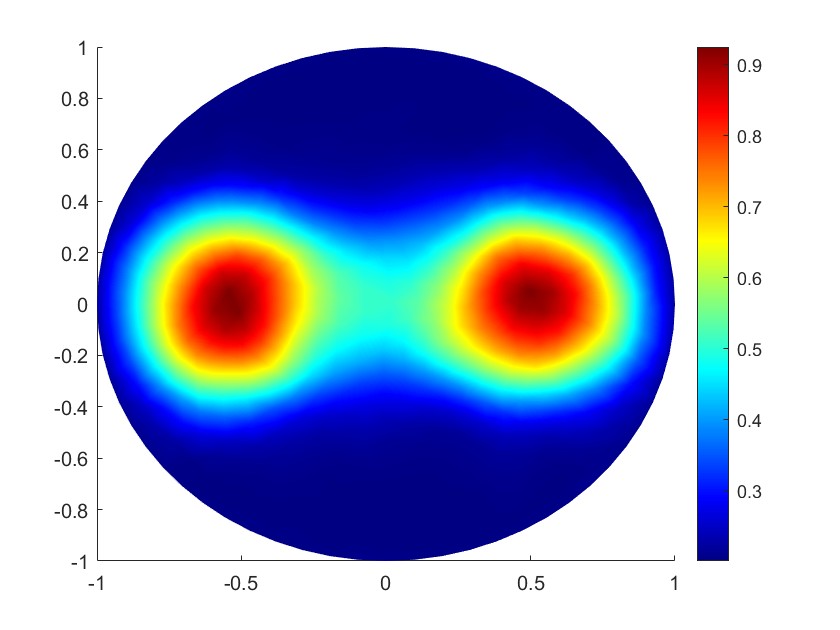}}&
    {\includegraphics[width=0.14\textwidth]{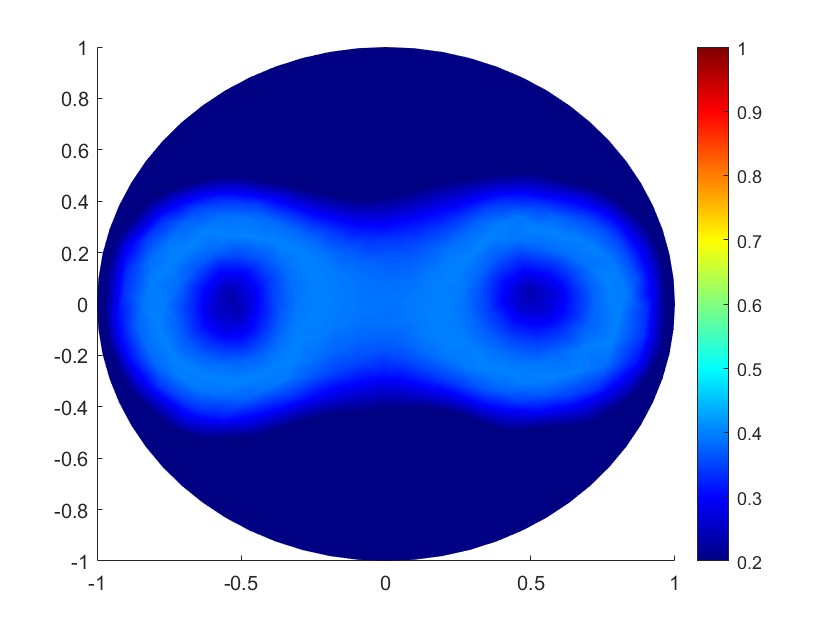}}\\
 {\includegraphics[width=0.14\textwidth]{Images/separate/separate_d_true_mesh.jpg}} &  
{\includegraphics[width=0.14\textwidth]{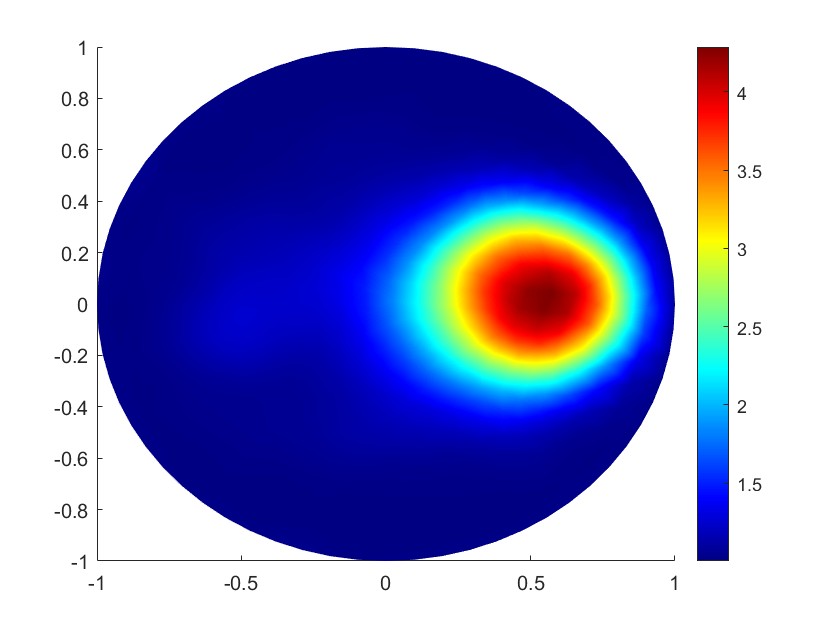}}&
    {\includegraphics[width=0.14\textwidth]{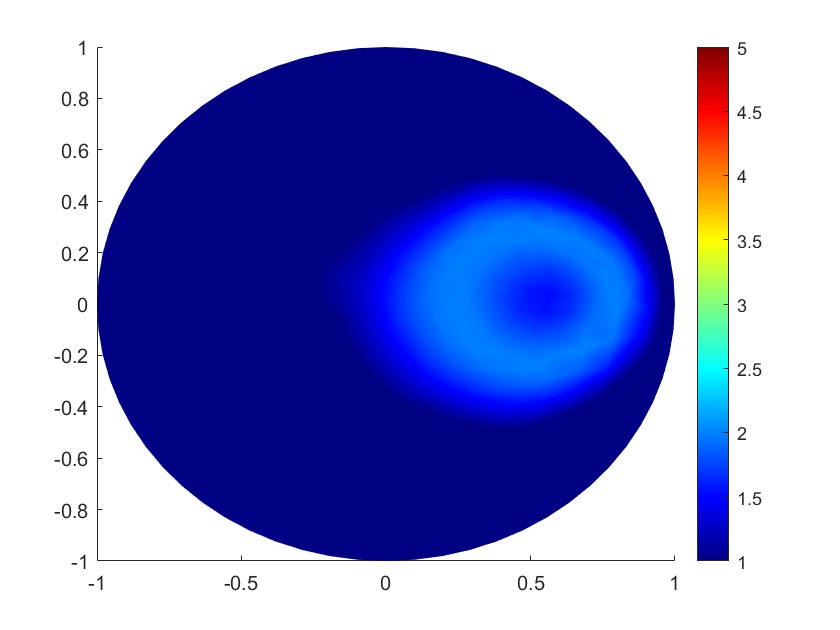}}&
  {\includegraphics[width=0.14\textwidth]{Images/separate/separate_mu_mesh.jpg}}&{\includegraphics[width=0.14\textwidth]{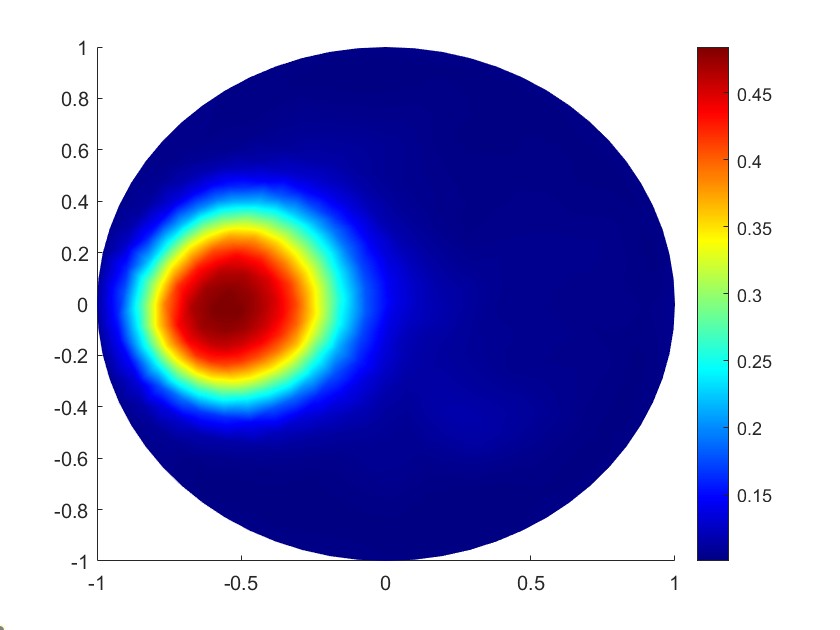}}&
    {\includegraphics[width=0.14\textwidth]{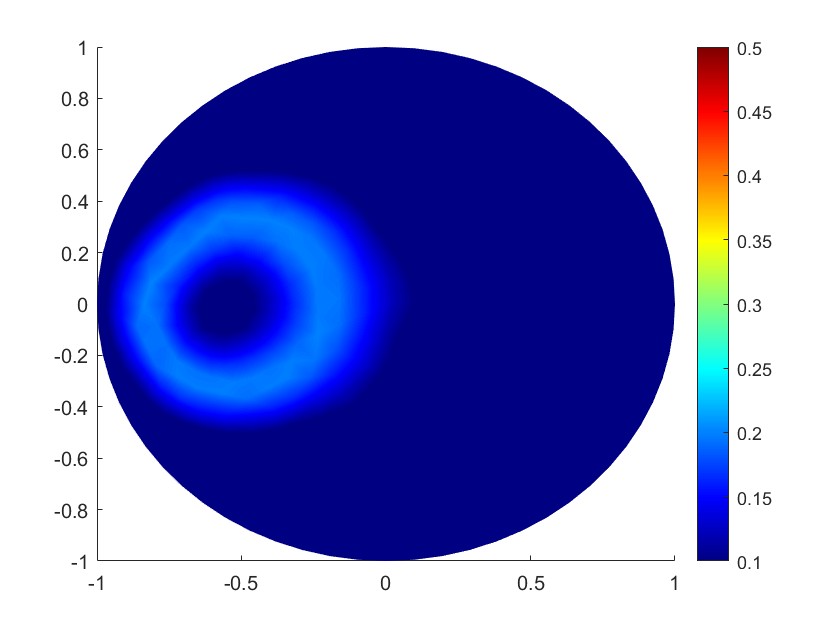}}

      \\ 
      \end{tabular}
\end{center}
      \caption{Reconstructions using Method 2 (Continuous Reconstruction):  The rows correspond to the phantom geometries as shown in Figure \ref{mod1}. In the first and fourth columns, we have the {projections of true diffusive and absorptive phantoms on the FEM basis}. In columns 2 and 5, we have the corresponding reconstructions of the diffusive and absorptive parameters, respectively. In the third and the sixth columns are plots of standard errors. From the standard error plots, it is clear that the greatest variance (and thus, the uncertainty) in posterior samples is around the boundaries of the diffusive/absorptive regions in the phantom. Data was corrupted with $2\%$ relative noise.}
 \label{mod2}
      \end{figure}

 \subsection{Noise-Study}
Now we present results from our noise-study. For method 1, a good measure of accuracy of the reconstruction is to tabulate the symmetric difference (in area) between the reconstructed diffusive (or, absorptive) regions with the true diffusive (respectively, absorptive) regions. In our set-up, we implement a related idea.
Instead of evaluating the symmetric difference, we see the problem as a classification task by computing the accuracy ratio for each finite element for diffusive and absorptive separately. This is done by computing the ratio of finite elements of the reconstruction that are identical to the ground truth. Because the ground truth and the reconstruction are in different mesh sizes, we mapped the ground truth into the reconstruction mesh. 
 %Instead of evaluating the symmetric difference, we tabulate the number of elements in the coarse (inverse) finite elements mesh that are correctly identified as belonging to the diffusive (correspondingly, absorptive) region. We then evaluate the ratio of the number of finite elements which are classified as belonging to the diffusive (or, absorptive) region in the reconstruction and are indeed so in the original true phantom. We will call this the `accuracy ratio'.
 
A high value of this ratio (close to $1$) will then indicate a high level of accuracy in the reconstruction. For every chosen noise level, this ratio is evaluated for ten independent experiments (i.e. ten different runs of the MCMC algorithm). We note that due to the significant amount of time that it will take to run an MCMC chain with $300,000$ iterations $10$ times for each noise level, we chose to do the parameter reconstruction by running the MCMC chain for just $100,000$ iterations while throwing away $50,000$ iterations as burn-in. We evaluate the mean accuracy ratio of the ten experiments and the standard deviation (s.d.) for each of the two parameters. This is done for noise levels $1\%-4\%$, and the results for phantoms corresponding to row 2 in Figure \ref{mod1} are tabulated below, see Table \ref{ErrorT}. We do a similar evaluation for method 2 at noise-levels $1\%-4\%$ and use the $L^\infty $-errors incurred in the estimation for such an evaluation. We tabulate the mean $L^\infty$-error for each noise-level, where the mean error for each parameter has been taken over ten independent MCMC experiments; as was done for method 1. In a similar fashion, we tabulate the standard deviation of the errors for each of the two parameters, see Table \ref{meth2T}. The trend reflects the robustness of the reconstruction to small changes in data.
  \begin{table}[!ht]
\centering
 \caption{Accuracy ratio for reconstruction using method 1}
\scriptsize
\begin{tabular}[t]{lcccc}
\hline
Noise level&  mean accuracy for $a(x)$& s.d. for $a(x)$ &  mean accuracy for $b(x)$& s.d. for $b(x)$\\
\hline
$1\%$  &$0.9267$&$0.0084$&$0.8710$&$0.0125$\\
$2\%$  &$0.9240$&$0.0089$&$0.8645$&$0.0312$\\
$3\%$  &$0.8976$&$0.0136$&$0.8484$&$0.0341$\\
$4\%$  &$0.9056$&$0.0139$&$0.8268$&$0.0324$\\

\hline
\label{ErrorT}
\end{tabular}
\end{table}

 \begin{table}[!ht]
\centering
 \caption{$L^\infty$-errors in reconstruction using method 2}
\scriptsize
\begin{tabular}[t]{lcccc}
\hline
Noise level&  mean $L^\infty$-error for $a(x)$& s.d. for $a(x)$ &  mean $L^\infty$-errors for $b(x)$& s.d. for $b(x)$\\
\hline
$1\%$  &$ 3.1513$&$0.1315$&$0.3124$&$0.0163$\\
$2\%$  &$3.2885$&$0.1428$&$0.3112$&$0.0225$\\
$3\%$  &$3.2295 $&$0.1410$&$0.3136$&$0.0273$\\
$4\%$  &$3.2591$&$0.1398$&$0.3220$&$0.0271$\\

\hline
\label{meth2T}
\end{tabular}
\end{table}

\section{Conclusion} In this article, we extend our previous work on reconstruction of absorption in one-parameter DOT (see \cite{Abhi_22b}) to simultaneous reconstruction of both absorption and diffusion parameters from noisy boundary data obtained in the form of Neumann-to-Robin map. The reconstruction is based on a Bayesian formulation of the problem and uses geometric level-set priors which are particularly useful for the case when we consider the parameters to have a piecewise constant behaviour. In this work, at first we establish that the Bayesian inverse problem for the considered CIP is well-posed. We follow the approach proposed in 
 \cite{stuart16b, stuart16a} and extend the analogous result for EIT that was obtained in \cite{stuart_16} to the present case. While the inverse problem considered here is closely related to that of EIT; one major difference between the two is that in this article we seek to reconstruct two  parameters at once (namely, diffusion and absorption) while in EIT, the reconstruction of only one parameter, conductivity, is sought. We implement an MCMC algorithm whereby posterior samples are obtained via a pCN sampling strategy. We illustrate good mixing properties of the MCMC chain by giving traceplots. The posterior samples are then used to reconstruct the parameters. We perform numerical studies and provide several examples of reconstruction. We use two different methods of reconstruction as outlined in the simulation section. We then numerically reconstruct for different phantom geometries with two levels for the parameters, a background diffusive (absorptive) region and a foreground diffusive (absorptive) region. Our numerical studies and theoretical results indicate that the Bayesian level-set inversion procedure is a computationally feasible and theoretically sound method for simultaneous reconstruction of optical parameters in Diffuse Optical Tomography.

\section*{Acknowledgement(s)}
TK would like to acknowledge partial funding from the Center for Trustworthy Artificial Intelligence through Model Risk Management (TAIMing AI) at UNC Charlotte which is funded through the Division of Research, the School of Data Science, and the Klein College of Science.

%An unnumbered section, e.g.\ \verb"\section*{Acknowledgements}", may be used for thanks, etc.\ if required and included \emph{in the non-anonymous version} before any Notes or References.

%\section*{Disclosure statement}

%An unnumbered section, e.g.\ \verb"\section*{Disclosure statement}", may be used to declare any potential conflict of interest and included \emph{in the non-anonymous version} before any Notes or References, after any Acknowledgements and before any Funding information.

%\section*{Funding}

%An unnumbered section, e.g.\ \verb"\section*{Funding}", may be used for grant details, etc.\ if required and included \emph{in the non-anonymous version} before any Notes or References.

%\section*{Notes on contributor(s)}

%An unnumbered section, e.g.\ \verb"\section*{Notes on contributors}", may be included \emph{in the non-anonymous version} if required. A photograph may be added if requested.

%\section*{Nomenclature/Notation}

%An unnumbered section, e.g.\ \verb"\section*{Nomenclature}" (or \verb"\section*{Notation}"), may be included if required, before any Notes or References.

%\section*{Notes}

%An unnumbered `Notes' section may be included before the References (if using the \verb"endnotes" package, use the command \verb"\theendnotes" where the notes are to appear, instead of creating a \verb"\section*").

\bibliographystyle{plain}
\bibliography{references}

\end{document}